\tikzstyle{background rectangle}=[fill=gray!10]
\newcommand*\circled[1]{\tikz[baseline=(char.base)]{
            \node[shape=circle,draw,inner sep=2pt,fill=gray!25] (char) {#1};}}
\newcommand*\squared[1]{\tikz[baseline=(char.base)]{
            \node[shape=rectangle,draw,inner sep=3.5pt,fill=gray!25] (char) {#1};}}
\newcommand*\squaredr[1]{\tikz[baseline=(char.base)]{
            \node[shape=rectangle,draw,inner sep=3.5pt,fill=red!25] (char) {#1};}}
\newcommand*\squaredg[1]{\tikz[baseline=(char.base)]{
            \node[shape=rectangle,draw,inner sep=3.5pt,fill=green!25] (char) {#1};}}
\theoremstyle{definition}
\newtheorem{defn}{\protect\definitionname}
\theoremstyle{plain}
\newtheorem{thm}{\protect\theoremname}
\theoremstyle{plain}
\newtheorem{cor}{\protect\corollaryname}
\theoremstyle{plain}
\newtheorem{prop}{\protect\propositionname}
\theoremstyle{plain}
\newtheorem{lem}{\protect\lemmaname}
\theoremstyle{definition}
\newtheorem{example}{Example}
\newtheorem{notat}{Notation}
\newtheorem{remark}{Remark}
  \renewcommand{\contentsname}%
    {
	 }%
\tikzset{    
>=stealth',
empty/.style={rectangle, rounded corners, text width=3.5em, minimum height=2em, text centered},
world/.style={rectangle, draw=black, rounded corners, text width=70pt, minimum height=19pt, text centered},
real/.style={rectangle, draw=black, ultra thick, rounded corners, text width=40pt, minimum height=19pt, text centered},
pil/.style={->, thick, shorten <=2pt, shorten >=2pt,}, 
every loop/.style={max distance=10mm,looseness=10},
}
\providecommand{\corollaryname}{Corollary}
\providecommand{\definitionname}{Definition}
\providecommand{\lemmaname}{Lemma}
\providecommand{\propositionname}{Proposition}
\providecommand{\theoremname}{Theorem}
\DeclareMathOperator{\dom}{dom}
\def\pre{\mathsf{pre}}
\def\post{\mathsf{post}}
\begin{document}
\title{Dynamic Term-Modal Logics for \\First-Order Epistemic Planning}
\author{
Andr{\'e}s Occhipinti Liberman$^1$, Andreas Achen$^2$ and Rasmus K. Rendsvig$^3$\\ 
$^1$\small{DTU Compute, Technical University of Denmark}\\ 
$^2$\small{London School of Economics}\\ 
$^3$\small{Center for Information and Bubble Studies, University of Copenhagen. \texttt{rasmus@hum.ku.dk}}\\
}
\date{}
\maketitle

\begin{center}
This version is penultimate. The published version appears here:\\
\href{https://doi.org/10.1016/j.artint.2020.103305}{https://doi.org/10.1016/j.artint.2020.103305}
\par\end{center}

\begin{abstract}
\noindent Many classical planning frameworks are built on first-order languages. The first-order expressive power is desirable for compactly representing actions via schemas, and for specifying quantified conditions such as $\neg\exists x\mathsf{blocks\_door}(x)$. In contrast, several recent epistemic planning frameworks are built on propositional epistemic logic. The epistemic language is useful to describe planning problems involving higher-order reasoning or epistemic goals such as $K_{a}\neg\mathsf{problem}$.

This paper develops a first-order version of Dynamic Epistemic Logic (DEL). In this framework, for example, $\exists xK_{x}\exists y\mathsf{blocks\_door}(y)$ is a formula. The formalism combines the strengths of DEL (higher-order reasoning) with those of first-order logic (lifted representation) to model multi-agent epistemic planning. The paper introduces an epistemic language with a possible-worlds semantics, followed by novel dynamics given by first-order action models and their execution via product updates. Taking advantage of the first-order machinery, epistemic action schemas are defined to provide compact, problem-independent domain descriptions, in the spirit of PDDL. 
 
Concerning metatheory, the paper defines axiomatic normal term-modal logics, shows a Canonical Model Theorem-like result which allows establishing completeness through frame characterization formulas, shows decidability for the finite agent case, and shows a general completeness result for the dynamic extension by reduction axioms.

\medskip
\noindent\textbf{Keywords: } epistemic planning, planning formalisms, multi-agent systems, term-modal logic, dynamic epistemic logic
    
\end{abstract}{}

\section{Introduction}

Most classical planning languages are first-order. Standard formalisms like PDDL \cite{mcdermott1998pddl} and ADL \cite{pednault1987formulating}, for example, are first-order. One major reason for using a first-order language over a propositional one is that variables can be used to describe actions compactly. For instance, in the PDDL description of BlocksWorld, the \textit{action schema} $\mathsf{stack}(X,Y)$ uses variables $X$ and $Y$ to represent generic blocks and state the preconditions and effects of all actions of the form: ``put block $X$ on top of block $Y$''. This is possible because the action of stacking block $A$ on block $B$ has the same type of effects as the action of stacking block $C$ on $D$; only the names of the blocks are different. Action schemas use variables to exploit this repeated structure in actions, resulting in action representations whose size is independent of the number of objects in a domain. While $\mathsf{stack}(X,Y)$ describes the preconditions and effects of performing a stack action on any two blocks, regardless of total number of blocks, with a propositional language each stack action has to be represented by a separate model, yielding $n^2-n$ propositional models for a domain with $n$ blocks. Generally, given an action schema with $k$ variables and $n$ constant symbols standing for domain objects, the schema has up to $n^k$ different instantiations, each requiring a separate model in a propositional representation.

\textit{Dynamic Epistemic Logic} (DEL) has proved to be a very expressive framework for \textit{epistemic planning}, i.e., planning explicitly involving e.g. knowledge or belief. DEL uses the language of propositional epistemic logic to describe the knowledge or belief held by a community of agents. This language is built from a set of propositional atoms, standard logical connectives, and modal operators $K_i$ for each agent $i$ in a fixed set of agent indices $I=\{1,\dots,n\}$. An example of a formula is $K_1 p \wedge K_2 K_1 p$, which expresses that agent 1 knows the propositional atom $p$ and that agent 2 knows that agent 1 knows $p$. Actions in DEL are described by so-called \textit{action models} \cite{baltagmoss2004,BaltagBMS_1998} or variants thereof. Action models describe preconditions and effects of events, and provide a rich framework for representing the agents' uncertainty about such events. However, as action models are based on the propositional epistemic language, propositional DEL cannot achieve the generality of action schemas. Variabilized, general descriptions are not possible, so one action model is required for each action. Hence, while propositional DEL adds a great deal of expressivity to planning, this comes at a cost in terms of representational succinctness.

This paper presents a DEL-based epistemic planning framework built on \textit{epistemic term-modal logic}. The underlying language is first-order and includes modalities indexed by first-order terms.  Examples of formulas include $K_c \:  \mathsf{on}(A,B)$ (agent $c$ knows that block $A$ is on block $B$), $K_c \exists x \: \mathsf{on}(x,B)$ ($c$ knows that there is a block on top of $B$) and $\forall y K_y \exists x \: \mathsf{on}(x,B)$ (all agents know that there is a block on top of $B$). Term-modal languages thus extend the expressive power of first-order modal languages by treating modal operators \textit{both} as operators \textit{and} as predicates. 

In addition to higher-order knowledge expressions, the first-order apparatus of epistemic term-modal logic allows for domain descriptions in terms of objects and relations, as well as abstract reasoning via variables and quantification. The term-modal aspect ensures that these first-order aspects also extend to agents and their knowledge. Importantly, the presence of variables enables the definition of \textit{epistemic action schemas}. Epistemic action schemas can be exponentially more succinct than standard DEL event models (see Section \ref{subsec:action_schemas}). Moreover, epistemic action schemas provide an action representation whose size is independent of the number of agents and objects in the domain. We consider the development of this epistemic planning framework our first main contribution.

Our second main contribution is the development of term-modal logic, its dynamic extension and the metatheory for both. Many papers have been dedicated to term-modal logic and its metatheory (see Section \ref{subsec:TML.litt.} for a detailed review), but due to the many complications that may arise in such generalized first-order modal systems, no general completeness results have been shown. In this paper, we define a rich but well-behaved semantics that allow us to define axiomatic normal term-modal logics and show a Canonical Model Theorem-like result that allow completeness results through frame characterization formulas. Adding reduction axioms to the term-modal logics then allow us to show completeness for the dynamic extension. 

The paper progresses as follows. Section \ref{sec:running_example} presents \textit{SelectiveCommunication} used as running example of epistemic planning. Section \ref{sec:static_language} presents term-modal logical languages and Section \ref{sec:state.rep} defines state representations: first-order Kripke models where the agent set is a part of the domain of quantification. Section \ref{sec:action.rep} introduces action representations (action models) and how these may be succinctly represented as action schemas. The action representations are used in Section \ref{sec:planning} to define epistemic planning problems and related notions, and an example describing a term-modal planning domain and problem using a `PDDL-like syntax' is there given. Section \ref{sec:act.lang} details how to extend the term-modal language to allows reasoning about actions, 
Section \ref{sec:metatheory} turns to axiomatic systems and metatheory, while Section \ref{sec:related.work} turns to related work on epistemic planning, dynamic epistemic logic and term-modal logic, respectively. Section \ref{sec:final.remarks} contains final remarks and open questions. All proofs may be found in Appendix \ref{A.proof.appendix}.

\section{A Running Example \label{sec:running_example}} 
Throughout the paper, we illustrate the planning formalism with a simple running example in a variant of the \textit{SelectiveCommunication} (SC) domain, adapted from \cite{kominis2015beliefs}. Here we describe it informally, but it will serve as an example for the various formal notions throughout the paper. In the $\text{SC}(n,m,k,\ell)$ domain, there are $n$ agents. Each agent is initially in one of $m$ rooms arranged in a corridor. There are $k$ boxes distributed in the rooms, each having one of $\ell$ available colors. See Figure \ref{fig:SCrooms} for an example. Agents can perform four types of actions:

\begin{itemize}
    \item $\text{Move}(agent,room1,room2)$: agents can move from a room to a contiguous room, by going right or left. In this adaptation of the domain, we model the move actions as partially observable: if agent $\alpha$ is in room $\rho_i$ and moves to room $\rho_{j}$, only the agents in either of the rooms can see that $\alpha$'s location has changed. 
    \item  $\text{SenseLoc}(agent,agent\_or\_box,room)$: while in a room, agents can sense the location of other agents or boxes in that room. Other agents in the room notice the sensing action.
    \item $\text{SenseCol}(agent,color,box,room)$: agents can sense the color of a box when they are in the same room as the box. Other agents in the room notice the sensing action.
    \item $\text{Announce}(agent,color,box,room)$: agents can make announcements concerning the colors of boxes. If $\alpha$ makes an announcement in a room, all agents in the same room or in a contiguous room will hear what was announced. $\alpha$ can use announcements to ensure that some agents get to know the truth value of some $\varphi$ while the remaining agents do not. 
\end{itemize}

A specific choice for the parameters $n,m,k,\ell$ yields an \textit{instance} of the SC domain. For example, $\text{SC}(3,4,1,2)$ is the instance of \textit{SelectiveCommunication} involving three agents ($\alpha_1$, $\alpha_2$ and $\alpha_3$), four rooms  ($\rho_1$, $\rho_2$, $\rho_3$ and $\rho_4$), one box ($\beta_1$) and two possible colors for the box (e.g., red and green). Figure \ref{fig:SCrooms} depicts a \textit{possible state} of the environment in this domain.

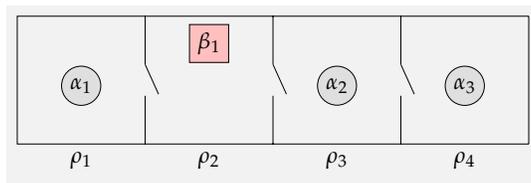
\begin{figure}[h!] 
  \begin{center}
      \scalebox{0.85}{
        \begin{circuitikz}[framed] \draw
        (1,0) node[below] {$\rho_1$}
        (3,0) node[below] {$\rho_2$}
        (5,0) node[below] {$\rho_3$}
        (7,0) node[below] {$\rho_4$}
        (2,0) -- (0,0) -- (0,2) -- (2,2)
        (2,2) to[nos] (2,0)
        (1,1.35) node[below] {\circled{$\alpha_1$}}
        (2,2) -- (4,2) 
        (2,0) -- (4,0)
        (4,2) to[nos] (4,0)
        (3,2) node[below] {\squaredr{$\beta_1$}}
        (4,2) -- (6,2) 
        (4,0) -- (6,0)
        (6,2) to[nos] (6,0)
        (5,1.35) node[below] {\circled{$\alpha_2$}}
        (6,2) -- (8,2) 
        (6,0) -- (8,0)
        (8,2) -- (8,0)
        (7,1.35) node[below] {\circled{$\alpha_3$}}
        ;
        \end{circuitikz}
        }
  \end{center}
  \caption{\label{fig:SCrooms}A depiction of a possible state in $\text{SC}(3,4,1,2)$, where a red box $\beta_1$ is in room $\rho_2$.}
\end{figure}

A possible \textit{goal} $g = g_1 \wedge g_2 \wedge g_3$ in this domain is given by the conjunction of the following subgoals: 
\begin{itemize}
    \item $g_1$: $\alpha_1$ and $\alpha_2$ know the color of $\beta_1$
    \item $g_2$: $\alpha_1$ knows that $\alpha_2$ knows the color of $\beta_1$
    \item $g_3$: $\alpha_1$ knows that $\alpha_3$ does not know the color of $\beta_1$
\end{itemize}
That is, $g$ requires $\alpha_1$ to learn the color of $\beta_1$ and privately communicate this information to $\alpha_2$. I.e., the goal is epistemic; it requires $\alpha_1$ to achieve \textit{first-order knowledge} about the environment ($g_1$) as well as \textit{higher-order knowledge} about what others know ($g_2$ and $g_3$).

The nature of a \textit{plan} for achieving $g$ depends on the \textit{initial state} as well as the assumptions made about the planning problem. For illustrative purposes, we consider a simple problem. Suppose that only $\alpha_1$ can act and that the initial state $s_0$ satisfies the following conditions:
\begin{itemize}
    \item $c_1$: each agent knows the location of all agents and the box $\beta_1$.
    \item $c_2$: no agent knows the color of $\beta_1$ (which is in fact red).
    \item $c_3$: conditions $c_1$ and $c_2$ are common knowledge among the agents.
\end{itemize}
In this case, $\alpha_1$ can easily reach a state satisfying $g$ from $s_0$. The following plan achieves $g$: \textit{$\alpha_1$ moves to $\rho_2$, $\alpha_1$ senses the color of $\beta_1$, $\alpha_1$ announces the color of $\beta_1$}. Of course, more initial uncertainty, or allowing other agents to act (sequentially or in parallel), results in more complex tasks. Such tasks can be defined with the formalism presented in this paper; however, for a first take on the formalism, this toy problem will be considered. 

For additional examples, in \cite{LibermanRendsvig2019} we use the framework to model social networks with epistemic dynamics.

\section{Language} \label{sec:static_language}

As term-modal logical languages include first-order aspects, they
are parameterized by a \textit{signature} specifying the \textit{non-logical
symbols} and their \textit{type}\textemdash i.e., the constants and
relation- and function symbols, and the sort and order of arguments
(agent or object) they apply to. Also variables are here assigned
a type.%

\begin{notat}
For a vector $v=(x_{1},...,x_{n})$, let $len(v)$ denote its length, let $v_{i}$ denote its $i$'th element, i.e., $v_{i}\coloneqq x_{i}$, and let $v_{|k}$ denote its restriction to its prefix of length $k$, i.e., $v_{|k}\coloneqq(x_{1},...,x_{k})$.
\end{notat}

\begin{defn}
A \textbf{\textit{signature}} is a tuple $\Sigma=(\mathtt{V},\mathtt{C},\mathtt{R},\mathtt{F},\mathtt{t})$ with $\mathtt{V}$ a countably infinite set of variables, and $\mathtt{C}$, $\mathtt{R}$ and $\mathtt{F}$ countable sets of respectively constants, relation symbols and function symbols with the one requirement that $\{=\}\subseteq\mathtt{R}$. Finally, $\mathtt{t}$ is a type
assignment map that satisfies 
\begin{enumerate}
    \item For $x\in\mathtt{V}$, $\mathtt{t}(x)\in\{\mathtt{agt},\mathtt{obj}\}$ such that both $\mathtt{V}\cap\mathtt{t}^{-1}(\mathtt{agt})$ and $\mathtt{V}\cap\mathtt{t}^{-1}(\mathtt{obj})$ are countably infinite. 
    \item For $c\in\mathtt{C}$, $\mathtt{t}(c)\in\{\mathtt{agt},\mathtt{obj}\}$.
    \item For $r\in\mathtt{R}$,
    \begin{enumerate}
        \item for some $n\in\mathbb{N}$, $\mathtt{t}(r)\in\{\mathtt{agt},\mathtt{obj},\mathtt{agt\_or\_obj}\}^{n}$, and 
        \item for $=\,\in\mathtt{R}$, $\mathtt{t}(=)=(\mathtt{agt\_or\_obj},\mathtt{agt\_or\_obj})$.
    \end{enumerate}
    \item For $f\in\mathtt{F}$, 
    \begin{enumerate}
        \item for some $n\in\mathbb{N}$, $\mathtt{t}(f)\in\{\mathtt{agt},\mathtt{obj},\mathtt{agt\_or\_obj}\}^{n}\times\{\mathtt{agt},\mathtt{obj}\}$, and 
        \item if $\mathtt{t}(f)_{|n}=(\mathtt{t}(t_{1}),...,\mathtt{t}(t_{n}))$, then $\mathtt{t}(f(t_{1},...,t_{n}))=\mathtt{t}(f)_{n+1}$.
    \end{enumerate}
 
\end{enumerate}
\end{defn}
Identity is treated as a relation symbol; for it, infix notation is
used with $=(a,b)$ written $a=b$. 

\begin{example}[Signature for SelectiveCommunication] \label{example:language}  
The following signature $\Sigma=(\mathtt{V},\mathtt{C},\mathtt{R},\mathtt{F},\mathtt{t})$ can be used to specify the $\text{SC}(n,m,k,\ell)$ domain introduced in Section \ref{sec:running_example}:
\begin{itemize}
    \item Variables $\mathtt{V}= \{x^\star,x,y,z,x_1,x_2,x_3,\dots\}$.
    \item Constants $\mathtt{C}= Agents_{con} \cup Rooms_{con} \cup Boxes_{con} \cup Colors_{con}$, where $Agents_{con}=\{a_1,\dots,a_n\}$, $Rooms_{con}=\{r_1,\dots, r_m\}$ $Boxes_{con}=\{b_1,\dots, b_k\}$ and $Colors_{con}=\{c_1, \dots, c_\ell\}$.
    \item Relation symbols $\mathtt{R} = \{\mathsf{In},\mathsf{Color},\mathsf{Adj},=\}$ where $\mathsf{In}(x,y)$ states that agent or box $x$ is in room $y$, $\mathsf{Color}(x,y)$ states that box $x$ has color $y$ and $\mathsf{Adj}(x,y)$ states that room $x$ is adjacent to room $y$.
    \item Function symbols $\mathtt{F} = \emptyset$.
    \item Type assignment $\mathtt{t}$ with constant types $\mathtt{t}(x) = \mathtt{agt}$ for $x \in Agents_{con}$, $\mathtt{t}(x) = \mathtt{obj}$ for $x \in Rooms_{con} \cup Boxes_{con} \cup Colors_{con}$, relation types $\mathtt{t}(\mathsf{In}) = (\mathtt{agt\_or\_obj},\mathtt{obj})$, and $\mathtt{t}(\mathsf{Color}) = \mathtt{t}(\mathsf{Adj}) = (\mathtt{obj},\mathtt{obj})$.
\end{itemize}

\end{example}

\begin{defn}
The set of \textbf{\textit{terms}}\textit{ }$\mathtt{T}$ of a signature $\Sigma=(\mathtt{V},\mathtt{C},\mathtt{R},\mathtt{F},\mathtt{t})$
is given by the grammar
\[
t\Coloneqq x \mid c \mid f(t_{1},...,t_{n})
\]
for $x \in \mathtt{V}$, $c \in \mathtt{C}$ and $f\in \mathtt{F}$, provided that $t_{1},...,t_{n}\in \mathtt{T}$ and $\mathtt{t}(f)_{|len(\mathtt{t}(f))-1} = (\mathtt{t}(t_{1}),...,\mathtt{t}(t_{n}))$.

A term is \textbf{\textit{ground }}if it does not contain any variables;
it is \textbf{\textit{free}} if all its terms are (i) variables or (ii) function symbols all whose arguments are free terms.
\end{defn}

By the definitions of type assignments and terms, it is the case that for all $t\in\mathtt{T}$,  $\mathtt{t}(t)\in\{\mathtt{agt},\mathtt{obj}\}$. This allows for a uniform definition of
formulas in term-modal languages:

\begin{defn}
\label{Def. wff} Let $\Sigma=(\mathtt{V},\mathtt{C},\mathtt{R},\mathtt{F},\mathtt{t})$ be a signature. Let $t_{1},...,t_{n}\in\mathtt{T}$ and $r\in\mathtt{R}$ with $\mathtt{t}(r)=(\mathtt{t}(t_{1}),...,\mathtt{t}(t_{n}))$, let $\dagger\in\mathtt{T}$ with $\mathtt{t}(\dagger)=\mathtt{agt}$, and let $x\in\mathtt{V}$. The \textbf{\textit{language}} $\mathcal{L}$ is then given by the grammar
\[
\varphi\Coloneqq r(t_{1},...,t_{n})\mid\neg\varphi\mid\varphi\wedge\varphi\mid K_{\dagger}\varphi\mid\forall x\varphi
\]
An \textbf{\textit{atom}} is a formula obtained by the first clause. An atom is \textbf{\textit{ground }}if all its terms are ground; it is free if all its terms are \textbf{\textit{free}}. Denote by $\mathtt{GroundAtoms}(\mathcal{L})$ and $\mathtt{FreeAtoms}(\mathcal{L})$ the set of all ground and free atoms in $\mathcal{L}$, respectively. 
\end{defn}
Throughout, the standard Boolean connectives as well as $\top$, $\bot$ and $\exists$ are used as meta-linguistic abbreviations as usual. We abbreviate inequality expressions of the form $\neg(t_1 = t_2)$ by $(t_1 \neq t_2)$. \textbf{\textit{Free}} and \textbf{\textit{bound}} variables may be defined recursively as usual with the free variables of $K_{t}\varphi$ the free variables of $\varphi$ plus the variables in $t$. A formula is a \textbf{\textit{sentence}} if it has no free variables. With $\varphi\in\mathcal{L},t\in\mathtt{T}$, $x\in\mathtt{V},\mathtt{t}(x)=\mathtt{t}(t)$ and no bound variables of $\varphi$ occurring in $t$, the result of replacing all occurrences of $x$ in $\varphi$ with $t$ is denoted $\varphi(x\mapsto t)$.

\begin{remark} $K_t \varphi$ is read as ``agent $t$ knows that $\varphi$''. Epistemic expressions are only well-defined when $t$ is an agent term. The language $\mathcal{L}$ neither enforces nor requires a fixed-size agent  set, in contrast with standard epistemic languages, where the set of operators is given by reference to some index set. Fixed-size agent sets are discussed throughout.
\end{remark}

\section{State Representation}\label{sec:state.rep}
In planning frameworks based on epistemic logic, states are often represented using  \textit{possible-worlds models}, tracing back to the work of Hintikka \cite{TML_Hintikka1962} and Kripke \cite{Kripke1962}. The standard epistemic interpretation of such models---employed here in all examples---is one of \textit{indistinguishability}, as follows. A model contains a set of worlds, each representing a physical state of affair. For each agent, a model contains a binary relation on the set of worlds. Under the indistinguishability interpretation, this relation is taken to be an equivalence relation. If two worlds are related for agent $\alpha$, then $\alpha$ cannot distinguish them given her current information. I.e., they are \textit{informationally indistinguishable} for $\alpha$. Hence, when $\alpha$ in fact is in some world $w$, she cannot tell which of the worlds related to $w$ she is in fact in. The set of worlds indistinguishable from $w$ for agent $\alpha$ is therefore sometimes referred to as agent $\alpha$'s \textit{range of uncertainty} (at $w$). The term \textit{information cell} is used to cover the same, and a world in $\alpha$'s range of uncertainty is said to be considered \textit{possible} by $\alpha$ (at $w$). An agent's range of uncertainty determines its knowledge: an agent knows $\varphi$ in world $w$ if $\varphi$ is true in all the worlds in the agent's range of uncertainty at $w$. For instance, if the agent has no information about two blocks $A$ and $B$, and therefore cannot tell whether one of them is stacked on the other or not, she will consider at least three worlds possible: one in which $A$ is indeed stacked on $B$, one in which it is not, and one in which $B$ is stacked on $A$. Possible-worlds models represent also all levels of \textit{higher-order knowledge}. E.g. agent $\alpha$ knows that agent $\beta$ knows $\varphi$ if $\alpha$ does not consider it possible that $\beta$ considers it possible that $\varphi$ is false.

A possible-worlds model is formally defined as a structure in general called a \textit{Kripke model}. Kripke models need not enforce any properties on the agents' relations. Our results hold for the general case, with equivalence relations a special case. Under the indistinguishability interpretation, Kripke models are often called \textit{epistemic models} or \textit{epistemic states}. For a thorough explanation of the components of a Kripke model, we refer the reader to \cite{sep-dynamic-epistemic,ditmarsch2007}. When the context makes it clear, such a structure may simply be called a \textit{model}. A model consists of a \textit{frame} and an \textit{interpretation}. Two things
differentiate the frame used here from the standard, propositional
version. First, a frame here contains a constant domain of elements existing
in each world. Working with distinct agents and objects, the domain
is a disjoint union of two sets, the agent domain and the object domain. Second, the accessibility relations over worlds are directly associated with elements in the agent domain.
The agent domain thereby makes reference to an index set\textemdash as
used in non-term-modal logical frames\textemdash redundant. The definition
of a frame is thereby self-contained. 
\begin{defn}
\label{Def. Frame} A \textbf{\textit{frame}} $F$ is a triple $F=(D,W,R)$
where
\begin{enumerate}
\item $D\coloneqq D_{\mathtt{agt\_or\_obj}}\coloneqq D_{\mathtt{agt}}\dot{\cup}D_{\mathtt{obj}}$, called the \textit{domain}, is the disjoint union of the non-empty sets $D_{\mathtt{agt}}$ and $D_{\mathtt{obj}}$, called the \textit{agent domain} and the \textit{object domain}, respectively.
\item $W$ is a non-empty set of \textit{worlds}.
\item $R$ is a map associating to each agent $i\in D_{\mathtt{agt}}$ a binary \textit{accessibility relation }on $W$. I.e.,\\ $R:D_{\mathtt{agt}}\longrightarrow\mathcal{P}(W\times W)$.
\end{enumerate}
Write $R_{i}$ for $R(i)$, write $wR_{i}w'$ for $(w,w')\in R_{i}$ and write $R_{i}(w)$ for $\{w'\in W\colon wR_{i}w'\}$. If $|D_{\mathtt{agt}}|=n$ and $|D_{\mathtt{obj}}|=m$, ($n,m\in\mathbb{N})$, say $F$ is of \textit{size} $(n,m)$. Denote by $\boldsymbol{F}$ the class of all frames and by $\boldsymbol{F}_{(n,m)}$ the class of all frames of size $(n,m)$.
\end{defn}

For propositional modal logic, a frame is augmented by a valuation
assigning an extension of worlds to each propositional symbol. In
the first-order and term-modal cases, each non-logical symbol is
assigned an extension in the domain. Here, this extension is assigned
\textit{world-relatively} for both relation symbols, function symbols and constants.
In particular the last is note-worthy: the constants are thereby \textit{non-rigid}\textemdash they
may refer to different objects (and agents!) in different worlds.
The non-rigidity of constants allows for uncertainty about identity
cf. the example of Section \ref{sec:example.non-rigid} and play an important role concerning the validity of frame-property characterizing axioms, cf. Section \ref{subsec:frame.char}.

Constants that do not vary with worlds\textemdash so-called \textit{rigid
}constants\textemdash often come in handy when referring to
agents. A rigid constant provides a syntactic name for a semantic agent.
Rigid constants are a special case: a constant $c$ may be
forced rigid by assuming its interpretation $I(c,w)$ to be constant
over all worlds, i.e., by $I(c,w)=I(c,w')$ for all $w,w'\in W$.

\begin{defn}
Let a signature $\Sigma=(\mathtt{V},\mathtt{C},\mathtt{R},\mathtt{F},\mathtt{t})$
and a frame $F=(D,W,R)$ be given. An \textbf{\textit{interpretation}}
of $\Sigma$ over $F$ is a map $I$ satisfying for each $w\in W$:
\begin{enumerate}
\item $I(=,w)$ is the set $\{(d,d)\colon d\in D\}$.
\item For $c\in\mathtt{C}$, $I(c,w)\in D_{\mathtt{t}(c)}$.
\item For $r\in\mathtt{R}$, $I(r,w)\subseteq\prod_{i=1}^{len(\mathtt{t}(r))}D_{\mathtt{t}_{i}(r)}$.
\item For $f\in\mathtt{F}$, $I(f,w)\subseteq\prod_{i=1}^{len(\mathtt{t}(f))}D_{\mathtt{t}_{i}(f)}$
such that $I(f,w)$ is a (possibly partial) map: i.e.,\\
if $(d_{1},...,d_{k},d_{len(\mathtt{t}(f))}),(d_{1},...,d_{k},d'_{len(\mathtt{t}(f))})\in I(f,w)$,
then $d_{len(\mathtt{t}(f))} = d'_{len(\mathtt{t}(f))}$.
\end{enumerate}
With $F=(D,W,R)$ a frame and $I$ an interpretation of $\Sigma$
over $F$, the tuple $M=(D,W,R,I)$ is a \textbf{\textit{model}}. Both $w\in F$
and $w\in M$ states that $w\in W$. When $w\in M$, the pair $(M,w)$
is a \textbf{\textit{pointed model}} with $w$ called the \textit{designated
world}.
\end{defn}
Satisfaction for all formulas without variable occurrences may be
defined over pointed models. To specify satisfaction for the full
language, variables must also be assigned extension. Letting variable
valuations be world independent\textemdash or \textit{rigid}\textemdash trans-world
identification of objects and agents may be made using suitable bound
variables, cf. e.g. the \textit{de re} knowledge in Example \ref{ex:ep.state}.

\begin{defn}
\label{Def. Valuation} Let a signature $\Sigma=(\mathtt{V},\mathtt{C},\mathtt{R},\mathtt{F},\mathtt{t})$
and a frame $F=(D,W,R)$ be given. A \textbf{\textit{valuation }}of
$\Sigma$ over $F$ is a map $v:\mathtt{V}\longrightarrow D$ such
that $v(x)\in D_{\mathtt{t}(x)}$. An \textbf{\textit{x-variant}}\textit{
$v'$ }of $v$ is a valuation such that $v'(y)=v(y)$ for all $y\in\mathtt{V},y\neq x$.
\end{defn}

Jointly, an interpretation and a valuation assigns an extension to
every term $t$ of $\Sigma$ relative to every world of a frame. The
following involved, but uniform, notation will be used throughout
to denote the extension of terms:

\begin{defn}
Let a signature $\Sigma=(\mathtt{V},\mathtt{C},\mathtt{R},\mathtt{F},\mathtt{t})$,
a model $M=(D,W,R,I)$ and a valuation $v$ be given. The \textbf{\textit{extension
}}of the term $t\in\mathtt{T}$ in $M$ under $v$ is
\begin{align*}
\left\llbracket t\right\rrbracket _{w}^{I,v}= & \begin{cases}
v(t) & \text{if }t\in\mathtt{V}\\
I(t,w) & \text{if }t\in\mathtt{C}\\
d\text{ with }(d_{1},...,d_{n},d)\in I(f,w) & \text{if }t=f(t_{1},...,t_{n})
\end{cases}
\end{align*}
\end{defn}

For exactly the terms $t$ with $\mathtt{t}(t) = \mathtt{agt}$, $R_{\left\llbracket t\right\rrbracket _{w}^{I,v}}$ is then an accessibility relation in $M$. The extension of terms thus play a key role in the satisfaction of modal formulas:

\begin{defn}
\label{def:satisfaction}Let $\Sigma=(\mathtt{V},\mathtt{C},\mathtt{R},\mathtt{F},\mathtt{t})$,
$M=(D,W,R,I)$ and $v$ be given. The \textbf{\textit{satisfaction}}
of formulas of $\mathcal{L}$ is given recursively by

$M,w\vDash_{v}r(t_{1},...,t_{n})$ iff $(\left\llbracket t_{1}\right\rrbracket _{w}^{I,v},...,\left\llbracket t_{n}\right\rrbracket _{w}^{I,v})\in I(r,w)$
for all $r\in\mathtt{R}$, including $=$.

$M,w\vDash_{v}\neg\varphi$ iff not $M,w\vDash_{v}\varphi$.

$M,w\vDash_{v}\varphi\wedge\psi$ iff $M,w\vDash_{v}\varphi$ and
$M,w\vDash_{v}\psi$.

$M,w\vDash_{v}\forall x\varphi$ iff $M,w\vDash_{u}\varphi$ for
every $x$-variant $u$ of $v$.

$M,w\vDash_{v}K_{t}\varphi$ iff $M,w'\vDash_{v}\varphi$ for all
$w'$ such that $(w,w')\in R_{\left\llbracket t\right\rrbracket _{w}^{I,v}}$.
\end{defn}
\begin{example}[Epistemic model for $\text{SC}(3,4,1,2)$]\label{ex:ep.state}
Figure \ref{example:initial_model} depicts an epistemic model $M_0=(D,W,R,I)$ for the initial state $s_0$ described in Section \ref{sec:running_example}. 

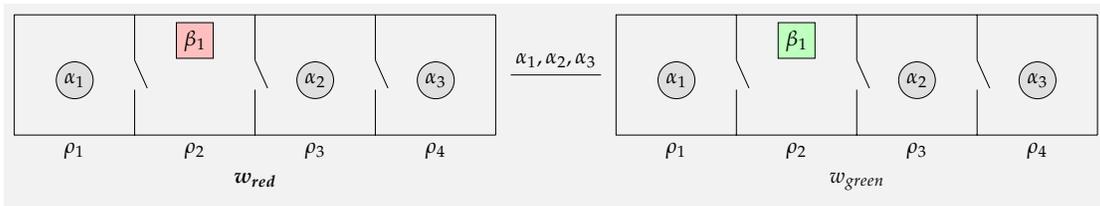
\begin{figure}[H]
  \begin{center}
    \scalebox{0.8}{
            \begin{circuitikz}[framed]
            \draw
            (1,0) node[below] {$\rho_1$}
            (3,0) node[below] {$\rho_2$}
            (4,-0.5) node[below] {$\boldsymbol{w_{red}}$}
            (5,0) node[below] {$\rho_3$}
            (7,0) node[below] {$\rho_4$}
            (2,0) -- (0,0) -- (0,2) -- (2,2)
            (2,2) to[nos] (2,0)
            (1,1.35) node[below] {\circled{$\alpha_1$}}
            (2,2) -- (4,2) 
            (2,0) -- (4,0)
            (4,2) to[nos] (4,0)
            (3,2) node[below] {\squaredr{$\beta_1$}}
            (4,2) -- (6,2) 
            (4,0) -- (6,0)
            (6,2) to[nos] (6,0)
            (5,1.35) node[below] {\circled{$\alpha_2$}}
            (6,2) -- (8,2) 
            (6,0) -- (8,0)
            (8,2) -- (8,0)
            (7,1.35) node[below] {\circled{$\alpha_3$}}
            ;
            \draw
            (8.25,1) -- (9.75,1) node[midway,above]{$\alpha_1,\alpha_2,\alpha_3$}
            ;
            \begin{scope}[xshift=10cm]
            \draw
            (1,0) node[below] {$\rho_1$}
            (3,0) node[below] {$\rho_2$}
            (4,-0.5) node[below] {$w_{green}$}
            (5,0) node[below] {$\rho_3$}
            (7,0) node[below] {$\rho_4$}
            (2,0) -- (0,0) -- (0,2) -- (2,2)
            (2,2) to[nos] (2,0)
            (1,1.35) node[below] {\circled{$\alpha_1$}}
            (2,2) -- (4,2) 
            (2,0) -- (4,0)
            (4,2) to[nos] (4,0)
            (3,2) node[below] {\squaredg{$\beta_1$}}
            (4,2) -- (6,2) 
            (4,0) -- (6,0)
            (6,2) to[nos] (6,0)
            (5,1.35) node[below] {\circled{$\alpha_2$}}
            (6,2) -- (8,2) 
            (6,0) -- (8,0)
            (8,2) -- (8,0)
            (7,1.35) node[below] {\circled{$\alpha_3$}}
            ;
            \end{scope}
            \end{circuitikz}
    }
    \end{center}
\caption{\label{example:initial_model} $(M_0,w_{red})$, a pointed epistemic model for the initial state $s_0$ described in Section \ref{sec:running_example}. The agents are uncertain about the color of $\beta_1$, which may be red $(w_{red})$ or green $(w_{green})$. This is captured by the edge linking $w_{red}$ and $w_{green}$. Reflexive edges are not drawn. The name of the actual world, $w_{red}$, is marked with boldface letters.}
\end{figure}

Formally, the model $M_0 =(D,W,R,I)$ has 
\begin{itemize}
    \item $D = D_{\mathtt{agt}}\dot{\cup}D_{\mathtt{obj}}$, with $D_{\mathtt{agt}}=\{\alpha_1,\alpha_2,\alpha_3\}$ and $D_{\mathtt{obj}} = Rooms_{obj}\cup Boxes_{obj} \cup Colors_{obj}$, where $Rooms_{obj}=\{\rho_1,\rho_2,\rho_3,\rho_4\}$,  $Boxes_{obj}=\{\beta_1\}$ and  $Colors_{obj}=\{Red,Green\}$. 
    \item $W = \{w_{red},w_{green}\}$.
    \item  $R(\alpha_i) = W\times W$, for $i\in\{1,2,3\}$. 
    \item The interpretation of all constants is the same in $w_{red}$ and $w_{green}$, i.e., all constants are rigid: $I(a_i,u)=\alpha_i$, $I(r_i,u)=\rho_i$, $I(b_i,u)=\beta_i$, $I(green,u)=Green$ and $I(red,u)=Red$ for all $u\in W$.
    \item  The interpretation of the predicates is as follows: $I(\mathsf{In},u)=\{(\alpha_1,\rho_1),(\alpha_2,\rho_3), (\alpha_3,\rho_4),(\beta_1,\rho_2)\}$, for all $u\in W$, $I(\mathsf{Color},w_{red})=\{(\beta_1,red)\}$, $I(Color,w_{green})=\{(\beta_1,green)\}$. The interpretation of the $\mathsf{Adj}$ predicate is as expected.
\end{itemize}

Following the semantics from Definition \ref{def:satisfaction}, it can be seen that $M_0,w_{red}\vDash_v \forall x (K_{x} \mathsf{In}(b_1,r_2))$, i.e., every agent knows the location of box $\beta_1$. Similarly, every agent knows that all agents know this, since $M_0,w_{red}\vDash_v \forall y \forall x(K_{y}K_{x} \mathsf{In}(b_1,r_2))$.  
Moreover, the agents know that the box has a color, but not what color it is. They thus have what is called \textit{de dicto} knowledge of the coloring of the box, but not \textit{de re} knowledge. Agent $\alpha_3$'s de dicto knowledge is captured by $M_0,w_{red}\vDash_v K_{a_3} \exists x \mathsf{Color}(b_1,x)$, while its lack of de re knowledge is captured by $M_0,w_{red}\vDash_v \neg \exists x K_{a_3}\mathsf{Color}(b_1,x)$. Finally, agent $\alpha_3$ has de re knowledge of the box, or as Hintikka \cite{TML_Hintikka1962} puts it, $\alpha_3$ knows what the box is, captured by $M_0,w_{red}\vDash_v \exists x K_{a_3}(x = b_1)$.

\end{example}

\section{Action Representation}\label{sec:action.rep}
In automated planning, a distinction is often drawn between  \textit{action schemas}, which describe classes of actions in a general way, and \textit{ground actions}, which represent a specific action with a fixed set of agents and objects  \cite{ghallab2004automated,russell2016artificial}. Action schemas use so-called \textit{action parameters} or \textit{variables}, which are instantiated into constants to define an action. For example, a schema may be used to represent all actions of the form `agent $x$ tells $y$ that object $z$ has color $u$', where $x$, $y$, $z$ and $u$ are variables standing for agents and objects. A corresponding ground action is obtained by replacing all free variables by names referring to specific agents and objects. For example, a schema instance could be `$ann$ tells $bob$ that $box1$ has color $red$'. 

In DEL, the descriptions of concrete actions are called \textit{action models}. That is, DEL action models correspond to a ground action in classical planning. Following the DEL naming conventions, models of concrete actions will be called action models, whereas variabilized models in the spirit of PDDL will be called action schemas. Action models and action schemas, as well as a suitable notion of schema instantiation relating the two, are introduced next.

\subsection{Action Models}
Formally and intuitively, action models are closely related to Kripke models. Where Kripke models contains worlds and relations, action models instead contain \textit{events} and relations. Under the standard epistemic interpretation, the relations again represent indistinguishability and are again assumed to be equivalence relations. Again, this is a special case of the models introduced here.

The DEL-style action models we add to the term-modal logic setting include \textit{preconditions} (\cite{baltagmoss2004,BaltagBMS_1998}), \textit{postconditions} (\cite{benthem2006_com-change,bolanderbirkegaard2011,vandit2005dynamic}) as well as \textit{edge-conditions} similarly to \cite{Bolander2014}. Preconditions specify when an event is executable (e.g., a precondition of opening the door is that it is closed.) Postconditions describe the physical effects of events (e.g., the door is open after the event). Edge-conditions are used to represent how an agent's observation of an action depends on the agent's circumstances. For example, the way in which an agent $\alpha_i$ observes an action performed by agent $\alpha_j$ may depend on $\alpha_i$'s proximity to $\alpha_j$ or to the objects affected by the action (e.g., to $\alpha_i$, the events of opening and closing the door are distinguishable if $\alpha_i$ can see or hear the door, but else not). Edge-conditions provide a general way to describe actions whose observability is context-dependent. The epistemic effects of an action model is encoded by the \textit{product update} operation by which action models are applied to Kripke models (defined below).

In more detail, the components of term-modal action models (Def. \ref{Def. Action model} below) play the following roles. $E$ represents the set of events that might occur as the action is executed. $Q$ is a map that assigns to each edge $(e,e')\in E\times E$ an edge-condition: a formula with a single free variable ${x^\star}$. Given a model $M$ describing the situation in which the action is applied, an agent $\alpha$ cannot distinguish $e$ from $e'$ iff the edge-condition from $e$ to $e'$ is true in $M$ \textit{when the free variable ${x^\star}$ is mapped to $\alpha$}. Intuitively, if the situation described by the edge-condition is true for $\alpha$, the way $\alpha$ is observes the action does not allow her to tell whether $e$ or $e'$ is taking place. The precondition restricts the applicability of an event $e$ to those states satisfying the precondition formula $\pre(e)$. Precondition formulas contain no free variables to ensure that their effects are conditional only on the model, but not the variable valuation. The postcondition $\mathsf{post}(e)$ describes the physical changes induced by the event. If both $\pre(e)$ and $\mathsf{post}(e)(r(t_1,\dots,t_n))$ are true in a state $s$ of a model $M$, then the event $e$ occurs, and after its occurrence, $r(t_1,\dots, t_n)$ is true in the updated version of $s$. That is, $r(t_1,\dots, t_n)$ is a \textit{conditional effect} of event $e$ with condition  $\mathsf{post}(e)(r(t_1,\dots,t_n))$.

The language used to state pre- and postconditions in action models is an extension of $\mathcal{L}$, denoted $\mathcal{L}_{AM}$, to be introduced in Section \ref{sec:act.lang}. This extended language has formulas of the form $[A,e]\varphi$, which are interpreted as: `after event $e$ of action $A$, $\varphi$ holds'. This type of formula makes it possible to mention other actions in the pre- and postconditions of actions, i.e., to express syntactically some dependencies or interactions between actions. However, the action model construction does not require or depend on the use of $\mathcal{L}_{AM}$ rather $\mathcal{L}$, so the reader can safely ignore the details of $\mathcal{L}_{AM}$ for now.

\begin{defn}
\label{Def. Action model} An \textbf{\textit{action model}} $A$ is a tuple $A=(E,Q,\mathsf{pre},\mathsf{post})$ where 
\begin{enumerate}
\item $E$ is a non-empty, finite set of possible \textit{events}.
\item $Q: (E\times E) \to \mathcal{L}_{AM}$, where for each pair $(e,e')$ the formula $Q(e,e')$ has exactly one free variable $x^\star$.
\item $\mathsf{pre}: E \to \mathcal{L}_{AM}$ is a map that assigns to each event $e\in E$ a precondition formula with no free variables.
\item $\mathsf{post}: E \to (\mathtt{GroundAtoms}(\mathcal{L}) \to \mathcal{L}_{AM})$ is a map that assigns to each event $e\in E$ a postcondition for each ground atom.
\end{enumerate}
It is required that $\mathsf{post}(e)(=(t,t))=\top$ for each event $e$, to preserve the meaning of equality. A pair $(A,e)$ consisting of the action and an event from $E$ is called a \textbf{\textit{pointed action}}. 
\end{defn}

\begin{notat} Let $A=(E,Q,\mathsf{pre},\mathsf{post})$ be an action model. We denote by $\dom( \mathsf{post}(e))$ the set of atoms for which $\post(e)(r(t_1,\dots,t_k)) \neq r(t_1,\dots,t_k)$. We denote any $\post(e)$ that maps every atom to itself by $id$ (the identity function). When convenient, we add the superscript ``$A$'' to the components of $A$, so that $A=(E^A,Q^A,\mathsf{pre}^A,\mathsf{post}^A)$. 
\end{notat}

To ensure that postcondition functions are finite objects, each $\mathsf{post}(e)$ is often required to be only finitely different from the identity function. That is, $\text{dom}(\mathsf{post}(e))$ is required to be finite. This allows for a finite encoding of postconditions, as only pairs with $post(e)(\varphi)\neq \varphi$ need to be stored in memory. Especially in planning, it should be possible to write down a sequence of symbols that completely specifies any given action model in the language. For the sake of generality, we do not impose this restriction in the definition of an action model. But for all practical purposes, this standard restriction will be needed.

\begin{notat} Let $A=(E,Q,\mathsf{pre},\mathsf{post})$ be an action model.
When $A$ is illustrated as a labelled graph, for each node $e\in E$, we write the precondition and postconditions for $e$ as a pair $\langle \mathsf{pre}(e); \mathsf{post}(e)(\psi_1)=\varphi_1 \wedge \dots \wedge \mathsf{post}(e)(\psi_n)=\varphi_n\rangle$. We write postconditions such as $\post(e)(\varphi)=\top \wedge \post(e)(\psi)=\bot$ using the notation $\varphi \wedge \neg \psi$ (indicating that the action makes $\varphi$ true and $\psi$ false unconditionally). In graphs, we omit the postconditions for atoms $\varphi$ with $post(e)(\varphi) = \varphi$.

We do not include the edge-conditions for reflexive loops in illustrations, but always assume that for all $e\in E$, $Q(e,e)=(x^\star=x^\star)$, to the effect that all agents retain reflexive relations following updates. When two events $e,e'$ are connected by a line without arrowheads labeled by a single formula $\varphi$, this means that $Q(e,e')=Q(e',e)=\varphi$, retaining symmetry.
\end{notat}

\begin{example}[Action models for $\text{SC}(3,4,1,2)$]\label{example:action_models} 
Figures \ref{fig:model_move}, \ref{fig:model_sense} and \ref{fig:model_say} depict graphically the action models for the three actions in the plan from in Section \ref{sec:running_example}, i.e., the following movement, sensing and announcement actions: \textit{$\alpha_1$ moves to $\rho_2$, $\alpha_1$ senses the color of $\beta_1$, $\alpha_1$ announces the color of $\beta_1$}.

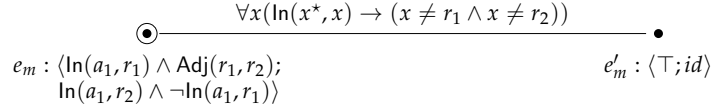
\begin{figure}[H]
    \begin{center}
        \scalebox{0.85}{
            \begin{tikzpicture}
              \tikzstyle{dnode}=[inner sep=1pt,outer sep=1pt,draw,circle,minimum width=9pt]
              \tikzstyle{nnode}=[inner sep=1pt,outer sep=1pt,circle,minimum width=9pt]
              \tikzstyle{label-edge}=[midway,fill=white, inner sep=1pt]
              \node[dnode, label={[align=right]below: $e_m: \langle \mathsf{In}(a_1,r_1) \wedge \mathsf{Adj}(r_1,r_2);$ \\ $\mathsf{In}(a_1,r_2)\wedge \neg \mathsf{In}(a_1,r_1) \rangle$}] (w0) at (0,0) {};
              \fill (w0) circle [radius=2pt];
              \node[nnode,  label={below: $e'_m: \langle \top;id \rangle$}] (w1) at (8,0) {};
              \fill (w1) circle [radius=2pt];
              \draw[-] (w0) -- (w1) node[above,midway] {$\forall x (\mathsf{In}({x^\star},x) \to (x \neq r_1 \wedge x \neq r_2))$};
            \end{tikzpicture}
        }
    \end{center}
\caption{\label{fig:model_move} $\text{Move}(a_1,r_1,r_2)$, the action model for $\alpha_1$ moving from $\rho_1$ to $\rho_2$. Event $e_{m}$ describes what is actually taking place (in the drawing for the model, the actual event is marked with a double circle). The precondition formula says that $\alpha_1$ is in $\rho_1$ and that $\rho_2$ is next to $\rho_1$. The action changes $\alpha_1$'s location to $\rho_2$, as captured by the postcondition. The event $e'_m$ describes the situation in which nothing happens. This is how the action looks to any agent that is neither in the room $\alpha_1$ is currently in, nor in the room the agent is moving to. The edge-condition linking the two events captures this observability constraint.} 
\end{figure}

\begin{figure}[H]
    \begin{center}
        \scalebox{0.85}{
            \begin{tikzpicture}
              \tikzstyle{dnode}=[inner sep=1pt,outer sep=1pt,draw,circle,minimum width=9pt]
              \tikzstyle{nnode}=[inner sep=1pt,outer sep=1pt,circle,minimum width=9pt]
              \tikzstyle{label-edge}=[midway,fill=white, inner sep=1pt]
              \node[dnode, label={[align=right]below: $e_{s}: \langle \mathsf{In}(a_1,r_2) \wedge \mathsf{In}(b_1,r_2)\wedge \mathsf{Color}(b_1,red); id \rangle$}] (w0) at (0,0) {};
              \fill (w0) circle [radius=2pt];
              \node[nnode,  label={below: $e'_s:  \langle \mathsf{In}(a_1,r_2) \wedge \mathsf{In}(b_1,r_2)\wedge \neg \mathsf{Color}(b_1,red); id \rangle$}] (w1) at (8,0) {};
              \fill (w1) circle [radius=2pt];
              \node[nnode,  label={above: $e''_s:  \langle \top; id \rangle$}] (w2) at (4,3) {};
              \fill (w2) circle [radius=2pt];
              \draw[-] (w0) -- (w1) node[above,midway] {$\forall x (\mathsf{In}({x^\star},x) \to  x \neq r_2)$};
              \draw[-] (w0) -- (w2) node[above,midway,rotate=37.5] {$\forall x (\mathsf{In}({x^\star},x) \to x \neq r_2)$};
              \draw[-] (w1) -- (w2) node[above,midway,rotate=-37.5] {$\forall x (\mathsf{In}({x^\star},x) \to x \neq r_2)$};
            \end{tikzpicture}
        }
    \end{center}
\caption{\label{fig:model_sense} $\text{SenseCol}(a_1,red,b_1,r_2)$, the action model for $\alpha_1$ sensing in room $\rho_2$ whether box $\beta_1$ is red or not. Event $e_{s}$ describes what is actually taking place, i.e., $\alpha_1$ seeing that the box is red. The action is a purely epistemic action, i.e., it does not change the physical state of the environment, and therefore the postcondition $\post(e_s)$ is $id$. $e'_s$ represents the event in which $\alpha_1$ sees that the box is not red, while $e''_s$ represents the event in which nothing happens. The agents that are not in $\rho_2$ cannot observe what $\alpha_1$ is doing. More precisely, they cannot distinguish between $\alpha_1$ seeing that the box is red, $\alpha_1$ seeing that it is not red, and $\alpha_1$ doing nothing. This is captured by the edge-conditions.}
\end{figure}
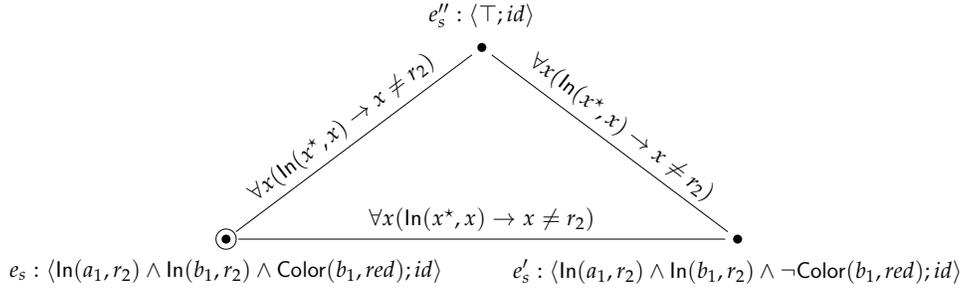

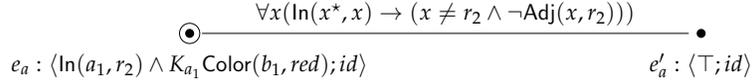
\begin{figure}[H]
    \begin{center}
        \scalebox{0.85}{
            \begin{tikzpicture}
              \tikzstyle{dnode}=[inner sep=1pt,outer sep=1pt,draw,circle,minimum width=9pt]
              \tikzstyle{nnode}=[inner sep=1pt,outer sep=1pt,circle,minimum width=9pt]
              \tikzstyle{label-edge}=[midway,fill=white, inner sep=1pt]
              \node[dnode, label={[align=right]below: $e_a: \langle \mathsf{In}(a_1,r_2) \wedge K_{a_1}\mathsf{Color}(b_1,red); id \rangle$}] (w0) at (0,0) {};
              \fill (w0) circle [radius=2pt];
              \node[nnode,  label={below: $e'_a: \langle \top;id \rangle$}] (w1) at (8,0) {};
              \fill (w1) circle [radius=2pt];
              \draw[-] (w0) -- (w1) node[above,midway] {$\forall x (\mathsf{In}({x^\star},x) \to ( x \neq r_2 \wedge \neg \mathsf{Adj}(x,r_2)))$};
            \end{tikzpicture}
        }
    \end{center}
\caption{\label{fig:model_say} $\text{Announce}(r_1,red,b_1,r_2)$, the action model for $\alpha_1$ announcing that $\beta_1$ is red while in room $\rho_2$. Event $e_a$ describes what is actually taking place. The precondition formula $\pre(e_a)$ says that $\alpha_1$ is in $\rho_2$ and that $a_1$ knows that the color of $\beta_1$ is red. Event $e'_a$ describes the event in which nothing occurs. This is what the announcement looks like to any agent that cannot hear the announcement. An agent $\alpha_i$ cannot hear $\alpha_1$'s announcement if $\alpha_i$ is neither in $\alpha_1$'s room nor in a room that is adjacent to it. This is captured by the (identical) edge-conditions $Q(e_a,e'_a)$ and $Q(e'_a,e_a)$.}
\end{figure}
\end{example}

\subsection{Product Update}\label{subsec:prod.upd.}

Having defined epistemic models and action models, we introduce an operation that computes the epistemic model $M'$ reached by applying action $A$ in model $M$. The operation is a first-order variant of \textit{product update} \cite{BaltagBMS_1998}. Under the indistinguishability interpretation, the core epistemic intuition is that to tell two worlds apart after an update, either the agent could tell them apart beforehand, or it could tell them apart by something happening in one, but not the other. In slightly more detail: Assume that \textit{after} an update, a model contains worlds $(w,e)$ and $(w',e')$, representing that event $e$ occurred in world $w$, and $e'$ occurred in $w'$. Then $(w,e)$ is indistinguishable from $(w',e')$ for agent $\alpha$ iff $\alpha$ found both $w$ and $w'$  indistinguishable and events $e$ and $e'$ indistinguishable. Formally, (term-modal) product update is defined below. An explanatory remark follows the definition.

\begin{defn}
\label{Def. Product update}
Let $M=(D,W,R,I)$ and $A=(E,Q,\mathsf{pre},\mathsf{post})$ be given. The \textbf{\textit{product update}} of $M$ and $A$ yields the epistemic model $M\otimes A = (D', W', R',I')$ where 
\begin{enumerate}
\item $D' = D$
\item $W' = \{ (w,e) \in W\times E \colon (M,w) \vDash_v \mathsf{pre}(e)\}$,
\item For each $i\in D_\texttt{agt}$, $(w,e)R'_i(w',e')$ iff $w R_i w'$ and $M,w\vDash_{v[ {x^\star}\mapsto i ]} Q(e,e')$,
\item $I'(c,(w,e))=I(c,w)$ for all $c\in \mathtt{C}$, $I'(f,(w,e))=I(f,w)$ for all $f\in \mathtt{F}$, and $I'(r,(w,e)) = (I(r,w) \cup r^+(w)) \setminus r^-(w)$, where:
      \begin{align*}
    r^+(w) \coloneqq & \{ (\llbracket t_1\rrbracket^{I,v}_w, \dots,\llbracket t_k\rrbracket^{I,v}_w) \colon (M,w) \vDash_v \mathsf{post}(e)(r(t_1,\dots,t_k))\}\\
    r^-(w) \coloneqq & \{ (\llbracket t_1\rrbracket^{I,v}_w, \dots,\llbracket t_k\rrbracket^{I,v}_w) \colon (M,w) \not\vDash_v \mathsf{post}(e)(r(t_1,\dots,t_k))\}
    \end{align*}
\end{enumerate}
If $(M,w)\vDash_v \pre(e)$, $(A,e)$ is \textbf{\textit{applicable}} to $(M,w)$. If $(A,e)$ is applicable to $(M,w)$, the product update of the two yields the pointed epistemic model $(M\otimes A, (w,e))$. Else it is undefined.
\end{defn}

\begin{remark}
The components of the updated model are as follows. 
The domain of the updated model $D'$ is unchanged, since action models change the state of agents and objects, but do not introduce or remove them. A state $(w,e)$ is in the updated set of states $W'$ if, and only if, $e$ is applicable in $w$, i.e., if $(M,w)$ satisfies the precondition $\pre(e)$. As $\pre(e)$ has no free variables by construction, 
the set of worlds $W'$ is independent of the assignment $v$.
The state $(w,e)$ represents the state reached by taking event $e$ in state $w$. Agent $\alpha$ cannot distinguish $(w,e)$ from $(w',e')$ if (1) $\alpha$ cannot distinguish $w$ from $w'$, which is the case if $wR_\alpha w'$; and (2) $\alpha$ cannot distinguish $e$ from $e'$ given its circumstances in $w$, which is the case if the edge-condition $Q(e,e')$ is true for agent $\alpha$ at $(M,w)$ when ${x^\star}$ is mapped to $\alpha$, i.e., when $M,w\vDash_{v[ {x^\star}\mapsto i ]} Q(e,e')$. Since actions do not change the denotation of ground terms, $I'$ agrees with $I$ in this respect. The extension of relations is changed according to event postconditions. If the condition $\post(e)(r(t_1,\dots,t_k))$ is true at $(M,w)$, then the tuple $(\llbracket t_1\rrbracket^{I,v}_w, \dots,\llbracket t_k\rrbracket^{I,v}_w)$ is added to the extension of $r$ at $(w,e)$, and it is removed otherwise.
\end{remark}

\begin{example}[Product updates for $\text{SC}(3,4,1,2)$] \label{example:prod_update} 
Starting from the initial epistemic model $(M_0,w_{red})$ from Example \ref{example:initial_model}, we model the effects of applying the actions in $\alpha_1$'s plan. First, $\alpha_1$ moves right. This action yields the new pointed model $(M_0\otimes \text{Move}(a_1,r_1,r_2), (w_{red},e_m))$, depicted in Figure \ref{fig:first_update}.
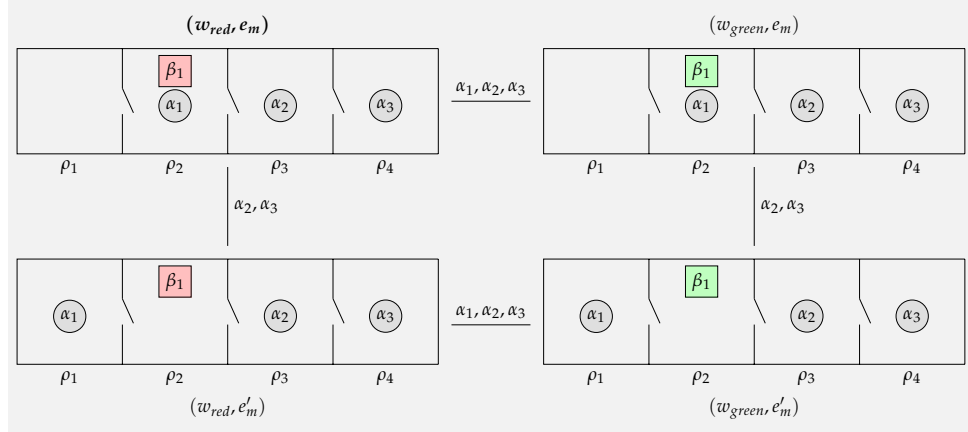
\begin{figure}[H]
  \begin{center}
    \scalebox{0.7}{
            \begin{circuitikz}[framed]
            \draw
            (4,2.75) node[below] {$\boldsymbol{(w_{red},e_m)}$}
            (1,0) node[below] {$\rho_1$}
            (3,0) node[below] {$\rho_2$}
            (5,0) node[below] {$\rho_3$}
            (7,0) node[below] {$\rho_4$}
            (2,0) -- (0,0) -- (0,2) -- (2,2)
            (2,2) to[nos] (2,0)
            (3,1.35) node[below] {\circled{$\alpha_1$}}
            (2,2) -- (4,2) 
            (2,0) -- (4,0)
            (4,2) to[nos] (4,0)
            (3,2) node[below] {\squaredr{$\beta_1$}}
            (4,2) -- (6,2) 
            (4,0) -- (6,0)
            (6,2) to[nos] (6,0)
            (5,1.35) node[below] {\circled{$\alpha_2$}}
            (6,2) -- (8,2) 
            (6,0) -- (8,0)
            (8,2) -- (8,0)
            (7,1.35) node[below] {\circled{$\alpha_3$}}
            ;
            \draw
            (8.25,1) -- (9.75,1) node[midway,
    above]{$\alpha_1,\alpha_2,\alpha_3$}
            (8.25,-3.25) -- (9.75,-3.25) node[midway,
    above]{$\alpha_1,\alpha_2,\alpha_3$}
            (14,-0.25) -- (14,-1.75) node[midway,
    right]{$\alpha_2,\alpha_3$}
            (4,-0.25) -- (4,-1.75) node[midway,
    right]{$\alpha_2,\alpha_3$}
            ;
            \begin{scope}[xshift=10cm]
            \draw
            (1,0) node[below] {$\rho_1$}
            (3,0) node[below] {$\rho_2$}
            (4,2.75) node[below] {$(w_{green},e_m)$}
            (5,0) node[below] {$\rho_3$}
            (7,0) node[below] {$\rho_4$}
            (2,0) -- (0,0) -- (0,2) -- (2,2)
            (2,2) to[nos] (2,0)
            (3,1.35) node[below] {\circled{$\alpha_1$}}
            (2,2) -- (4,2) 
            (2,0) -- (4,0)
            (4,2) to[nos] (4,0)
            (3,2) node[below] {\squaredg{$\beta_1$}}
            (4,2) -- (6,2) 
            (4,0) -- (6,0)
            (6,2) to[nos] (6,0)
            (5,1.35) node[below] {\circled{$\alpha_2$}}
            (6,2) -- (8,2) 
            (6,0) -- (8,0)
            (8,2) -- (8,0)
            (7,1.35) node[below] {\circled{$\alpha_3$}}
            ;
            \begin{scope}[yshift=-4cm]
            \draw
            (1,0) node[below] {$\rho_1$}
            (3,0) node[below] {$\rho_2$}
            (4,-0.5) node[below] {$(w_{green},e'_m)$}
            (5,0) node[below] {$\rho_3$}
            (7,0) node[below] {$\rho_4$}
            (2,0) -- (0,0) -- (0,2) -- (2,2)
            (2,2) to[nos] (2,0)
            (1,1.35) node[below] {\circled{$\alpha_1$}}
            (2,2) -- (4,2) 
            (2,0) -- (4,0)
            (4,2) to[nos] (4,0)
            (3,2) node[below] {\squaredg{$\beta_1$}}
            (4,2) -- (6,2) 
            (4,0) -- (6,0)
            (6,2) to[nos] (6,0)
            (5,1.35) node[below] {\circled{$\alpha_2$}}
            (6,2) -- (8,2) 
            (6,0) -- (8,0)
            (8,2) -- (8,0)
            (7,1.35) node[below] {\circled{$\alpha_3$}}
            ;
            \begin{scope}[xshift=-10cm]
            \draw
            (1,0) node[below] {$\rho_1$}
            (3,0) node[below] {$\rho_2$}
            (4,-0.5) node[below] {$(w_{red},e'_m)$}
            (5,0) node[below] {$\rho_3$}
            (7,0) node[below] {$\rho_4$}
            (2,0) -- (0,0) -- (0,2) -- (2,2)
            (2,2) to[nos] (2,0)
            (1,1.35) node[below] {\circled{$\alpha_1$}}
            (2,2) -- (4,2) 
            (2,0) -- (4,0)
            (4,2) to[nos] (4,0)
            (3,2) node[below] {\squaredr{$\beta_1$}}
            (4,2) -- (6,2) 
            (4,0) -- (6,0)
            (6,2) to[nos] (6,0)
            (5,1.35) node[below] {\circled{$\alpha_2$}}
            (6,2) -- (8,2) 
            (6,0) -- (8,0)
            (8,2) -- (8,0)
            (7,1.35) node[below] {\circled{$\alpha_3$}}
            ;
            \end{scope}
            \end{scope}
            \end{scope}
            \end{circuitikz}
    }
    \end{center}
    \caption{\label{fig:first_update} The pointed model $(M_0\otimes \text{Move}(a_1,r_1,r_2), (w_{red},e_m))$, representing the state after $\alpha_1$ moves into room $\rho_2$. Edges in the reflexive-transitive closure of the indistinguishability relations are omitted. At this point, $\alpha_2$ and $\alpha_3$ are uncertain about the location of $\alpha_1$. More precisely, they cannot tell whether $\alpha_1$ stayed in room $\rho_1$ or moved to $\rho_2$.}
\end{figure}

The second step is sensing the color of $\beta_1$. This action yields the model $(M_0 \otimes \text{Move}(a_1,r_1,r_2)\otimes\text{SenseCol}(a_1,red,b_1,r_2), (w_{red},e_m,e_s))$, depicted in Figure \ref{fig:second_update}.

\begin{figure}[H]
  \begin{center}
    \scalebox{0.7}{
            \begin{circuitikz}[framed]
            \draw
            (4,2.75) node[below] {$\boldsymbol{(w_{red},e_m,e_s)}$}
            (1,0) node[below] {$\rho_1$}
            (3,0) node[below] {$\rho_2$}
            (5,0) node[below] {$\rho_3$}
            (7,0) node[below] {$\rho_4$}
            (2,0) -- (0,0) -- (0,2) -- (2,2)
            (2,2) to[nos] (2,0)
            (3,1.35) node[below] {\circled{$\alpha_1$}}
            (2,2) -- (4,2) 
            (2,0) -- (4,0)
            (4,2) to[nos] (4,0)
            (3,2) node[below] {\squaredr{$\beta_1$}}
            (4,2) -- (6,2) 
            (4,0) -- (6,0)
            (6,2) to[nos] (6,0)
            (5,1.35) node[below] {\circled{$\alpha_2$}}
            (6,2) -- (8,2) 
            (6,0) -- (8,0)
            (8,2) -- (8,0)
            (7,1.35) node[below] {\circled{$\alpha_3$}}
            ;
            \draw
            (8.25,1) -- (9.75,1) node[midway,
    above]{$\alpha_2,\alpha_3$}
            (8.25,-3.25) -- (9.75,-3.25) node[midway,
    above]{$\alpha_1, \alpha_2,\alpha_3$}
            (14,-0.25) -- (14,-1.25) node[midway,
    right]{$\alpha_2,\alpha_3$}
            (8.25,5) -- (9.75,5) node[midway,
    above]{$\alpha_1,\alpha_2,\alpha_3$}
            (4,-0.25) -- (4,-1.25) node[midway,
    right]{$\alpha_2,\alpha_3$}
            (4,2.75) -- (4,3.75) node[midway,
    right]{$\alpha_2,\alpha_3$}
            (14,2.75) -- (14,3.75) node[midway,
    right]{$\alpha_2,\alpha_3$}
            ;
            \begin{scope}[xshift=10cm]
            \draw
            (1,0) node[below] {$\rho_1$}
            (3,0) node[below] {$\rho_2$}
            (4,2.75) node[below] {$(w_{green},e_m,e'_s)$}
            (5,0) node[below] {$\rho_3$}
            (7,0) node[below] {$\rho_4$}
            (2,0) -- (0,0) -- (0,2) -- (2,2)
            (2,2) to[nos] (2,0)
            (3,1.35) node[below] {\circled{$\alpha_1$}}
            (2,2) -- (4,2) 
            (2,0) -- (4,0)
            (4,2) to[nos] (4,0)
            (3,2) node[below] {\squaredg{$\beta_1$}}
            (4,2) -- (6,2) 
            (4,0) -- (6,0)
            (6,2) to[nos] (6,0)
            (5,1.35) node[below] {\circled{$\alpha_2$}}
            (6,2) -- (8,2) 
            (6,0) -- (8,0)
            (8,2) -- (8,0)
            (7,1.35) node[below] {\circled{$\alpha_3$}}
            ;
            \begin{scope}[yshift=-4cm]
            \draw
            (1,0) node[below] {$\rho_1$}
            (3,0) node[below] {$\rho_2$}
            (4,2.2) node[above] {$(w_{green},e'_m,e''_s)$}
            (5,0) node[below] {$\rho_3$}
            (7,0) node[below] {$\rho_4$}
            (2,0) -- (0,0) -- (0,2) -- (2,2)
            (2,2) to[nos] (2,0)
            (1,1.35) node[below] {\circled{$\alpha_1$}}
            (2,2) -- (4,2) 
            (2,0) -- (4,0)
            (4,2) to[nos] (4,0)
            (3,2) node[below] {\squaredg{$\beta_1$}}
            (4,2) -- (6,2) 
            (4,0) -- (6,0)
            (6,2) to[nos] (6,0)
            (5,1.35) node[below] {\circled{$\alpha_2$}}
            (6,2) -- (8,2) 
            (6,0) -- (8,0)
            (8,2) -- (8,0)
            (7,1.35) node[below] {\circled{$\alpha_3$}}
            ;
            \begin{scope}[xshift=-10cm]
            \draw
            (1,0) node[below] {$\rho_1$}
            (3,0) node[below] {$\rho_2$}
            (4,2.2) node[above] {$(w_{red},e'_m,e''_s)$}
            (5,0) node[below] {$\rho_3$}
            (7,0) node[below] {$\rho_4$}
            (2,0) -- (0,0) -- (0,2) -- (2,2)
            (2,2) to[nos] (2,0)
            (1,1.35) node[below] {\circled{$\alpha_1$}}
            (2,2) -- (4,2) 
            (2,0) -- (4,0)
            (4,2) to[nos] (4,0)
            (3,2) node[below] {\squaredr{$\beta_1$}}
            (4,2) -- (6,2) 
            (4,0) -- (6,0)
            (6,2) to[nos] (6,0)
            (5,1.35) node[below] {\circled{$\alpha_2$}}
            (6,2) -- (8,2) 
            (6,0) -- (8,0)
            (8,2) -- (8,0)
            (7,1.35) node[below] {\circled{$\alpha_3$}}
            ;
            \begin{scope}[yshift=8cm]
            \draw
            (4,2.75) node[below] {${(w_{red},e_m,e''_s)}$}
            (1,0) node[below] {$\rho_1$}
            (3,0) node[below] {$\rho_2$}
            (5,0) node[below] {$\rho_3$}
            (7,0) node[below] {$\rho_4$}
            (2,0) -- (0,0) -- (0,2) -- (2,2)
            (2,2) to[nos] (2,0)
            (3,1.35) node[below] {\circled{$\alpha_1$}}
            (2,2) -- (4,2) 
            (2,0) -- (4,0)
            (4,2) to[nos] (4,0)
            (3,2) node[below] {\squaredr{$\beta_1$}}
            (4,2) -- (6,2) 
            (4,0) -- (6,0)
            (6,2) to[nos] (6,0)
            (5,1.35) node[below] {\circled{$\alpha_2$}}
            (6,2) -- (8,2) 
            (6,0) -- (8,0)
            (8,2) -- (8,0)
            (7,1.35) node[below] {\circled{$\alpha_3$}}
            ;
            \begin{scope}[xshift=10cm]
            \draw
            (4,2.75) node[below] {${(w_{green},e_m,e''_s)}$}
            (1,0) node[below] {$\rho_1$}
            (3,0) node[below] {$\rho_2$}
            (5,0) node[below] {$\rho_3$}
            (7,0) node[below] {$\rho_4$}
            (2,0) -- (0,0) -- (0,2) -- (2,2)
            (2,2) to[nos] (2,0)
            (3,1.35) node[below] {\circled{$\alpha_1$}}
            (2,2) -- (4,2) 
            (2,0) -- (4,0)
            (4,2) to[nos] (4,0)
            (3,2) node[below] {\squaredg{$\beta_1$}}
            (4,2) -- (6,2) 
            (4,0) -- (6,0)
            (6,2) to[nos] (6,0)
            (5,1.35) node[below] {\circled{$\alpha_2$}}
            (6,2) -- (8,2) 
            (6,0) -- (8,0)
            (8,2) -- (8,0)
            (7,1.35) node[below] {\circled{$\alpha_3$}}
            ;
            \end{scope}
            \end{scope}
            \end{scope}
            \end{scope}
            \end{scope}
            \end{circuitikz}
    }
    \end{center}
    \caption{\label{fig:second_update} The pointed model $(M_0 \otimes \text{Move}(a_1,r_1,r_2)\otimes\text{SenseCol}(a_1,red,b_1,r_2), (w_{red},e_m,e_s))$, representing the state after $\alpha_1$ senses that $\beta_1$ is red while in room $\rho_2$. Edges in the reflexive-transitive closure of the indistinguishability relations are omitted. Note that, in the actual world, $(w_{red},e_m,e_s)$, $\alpha_1$ does not face any uncertainty. In particular, $K_{a_1}\mathsf{Color}(b_1,red)$ is true at $(w_{red},e_m,e_s)$. On the other hand, $\alpha_2$ and $\alpha_3$ have not observed any of $\alpha_1$'s actions. That is, $\alpha_1$ may or may not have moved, and it may or may not have sensed whether $\beta_1$ is red. As a result, it holds at $(w_{red},e_m,e_s)$ that $\forall x (x\neq a_1 \to \neg K_x \mathsf{In}(a_1,r_2) \wedge \neg K_x \mathsf{Color}(b_1,red))$.}
\end{figure}
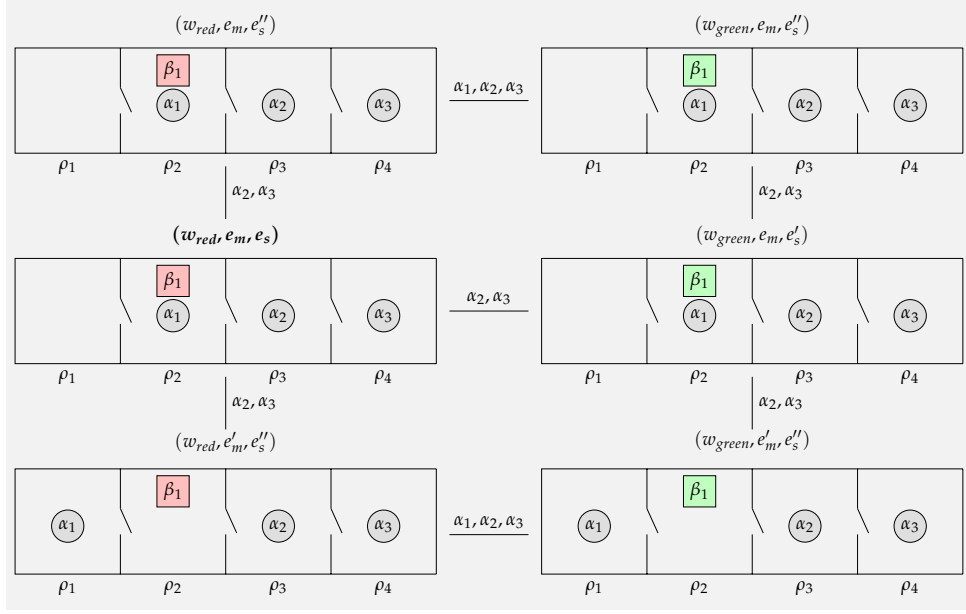

Finally, $\alpha_1$ announces in room $\rho_2$ that the color of $\beta_1$ is red. The result of this action is the model $(M_0 \otimes \text{Move}(a_1,r_1,r_2)\otimes\text{SenseCol}(a_1,red,b_1,r_2) \otimes \text{Announce}(a_1,red,b_1,r_2), (w_{red},e_m,e_s,e_a))$, depicted in Figure \ref{fig:third_update}.

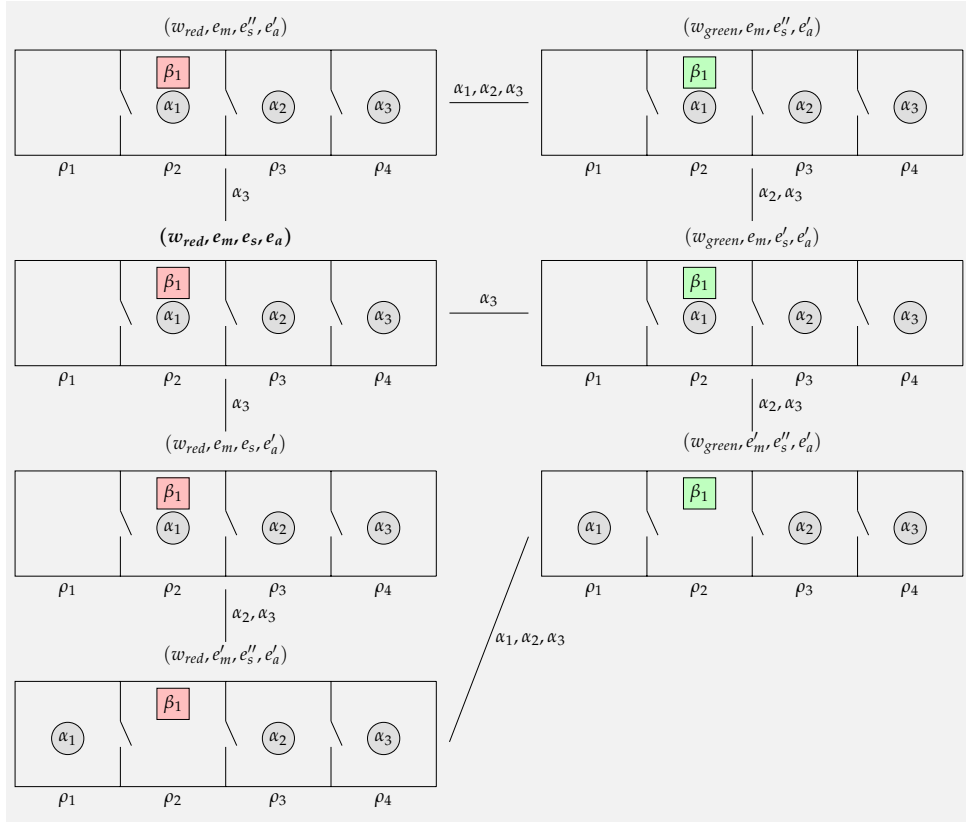
\begin{figure}[H]
  \begin{center}
    \scalebox{0.7}{
            \begin{circuitikz}[framed]
            \draw
            (4,2.75) node[below] {$\boldsymbol{(w_{red},e_m,e_s,e_a)}$}
            (1,0) node[below] {$\rho_1$}
            (3,0) node[below] {$\rho_2$}
            (5,0) node[below] {$\rho_3$}
            (7,0) node[below] {$\rho_4$}
            (2,0) -- (0,0) -- (0,2) -- (2,2)
            (2,2) to[nos] (2,0)
            (3,1.35) node[below] {\circled{$\alpha_1$}}
            (2,2) -- (4,2) 
            (2,0) -- (4,0)
            (4,2) to[nos] (4,0)
            (3,2) node[below] {\squaredr{$\beta_1$}}
            (4,2) -- (6,2) 
            (4,0) -- (6,0)
            (6,2) to[nos] (6,0)
            (5,1.35) node[below] {\circled{$\alpha_2$}}
            (6,2) -- (8,2) 
            (6,0) -- (8,0)
            (8,2) -- (8,0)
            (7,1.35) node[below] {\circled{$\alpha_3$}}
            ;
            \draw
            (8.25,1) -- (9.75,1) node[midway,
    above]{$\alpha_3$}
            (8.25,-7.15) -- (9.75,-3.25) node[midway,
    right]{$\alpha_1,\alpha_2,\alpha_3$}
            (14,-0.25) -- (14,-1.25) node[midway,
    right]{$\alpha_2,\alpha_3$}
            (8.25,5) -- (9.75,5) node[midway,
    above]{$\alpha_1,\alpha_2,\alpha_3$}
            (4,-0.25) -- (4,-1.25) node[midway,
    right]{$\alpha_3$}
            (4,2.75) -- (4,3.75) node[midway,
    right]{$\alpha_3$}
            (14,2.75) -- (14,3.75) node[midway,
    right]{$\alpha_2,\alpha_3$}
            (4,-4.25) -- (4,-5.25) node[midway,
    right]{$\alpha_2,\alpha_3$}
            ;
            \begin{scope}[xshift=10cm]
            \draw
            (1,0) node[below] {$\rho_1$}
            (3,0) node[below] {$\rho_2$}
            (4,2.75) node[below] {$(w_{green},e_m,e'_s,e'_a)$}
            (5,0) node[below] {$\rho_3$}
            (7,0) node[below] {$\rho_4$}
            (2,0) -- (0,0) -- (0,2) -- (2,2)
            (2,2) to[nos] (2,0)
            (3,1.35) node[below] {\circled{$\alpha_1$}}
            (2,2) -- (4,2) 
            (2,0) -- (4,0)
            (4,2) to[nos] (4,0)
            (3,2) node[below] {\squaredg{$\beta_1$}}
            (4,2) -- (6,2) 
            (4,0) -- (6,0)
            (6,2) to[nos] (6,0)
            (5,1.35) node[below] {\circled{$\alpha_2$}}
            (6,2) -- (8,2) 
            (6,0) -- (8,0)
            (8,2) -- (8,0)
            (7,1.35) node[below] {\circled{$\alpha_3$}}
            ;
            \begin{scope}[yshift=-4cm]
            \draw
            (1,0) node[below] {$\rho_1$}
            (3,0) node[below] {$\rho_2$}
            (4,2.2) node[above] {$(w_{green},e'_m,e''_s,e'_a)$}
            (5,0) node[below] {$\rho_3$}
            (7,0) node[below] {$\rho_4$}
            (2,0) -- (0,0) -- (0,2) -- (2,2)
            (2,2) to[nos] (2,0)
            (1,1.35) node[below] {\circled{$\alpha_1$}}
            (2,2) -- (4,2) 
            (2,0) -- (4,0)
            (4,2) to[nos] (4,0)
            (3,2) node[below] {\squaredg{$\beta_1$}}
            (4,2) -- (6,2) 
            (4,0) -- (6,0)
            (6,2) to[nos] (6,0)
            (5,1.35) node[below] {\circled{$\alpha_2$}}
            (6,2) -- (8,2) 
            (6,0) -- (8,0)
            (8,2) -- (8,0)
            (7,1.35) node[below] {\circled{$\alpha_3$}}
            ;
            \begin{scope}[xshift=-10cm]
            \draw
            (1,0) node[below] {$\rho_1$}
            (3,0) node[below] {$\rho_2$}
            (4,2.2) node[above] {$(w_{red},e_m,e_s,e'_a)$}
            (5,0) node[below] {$\rho_3$}
            (7,0) node[below] {$\rho_4$}
            (2,0) -- (0,0) -- (0,2) -- (2,2)
            (2,2) to[nos] (2,0)
            (3,1.35) node[below] {\circled{$\alpha_1$}}
            (2,2) -- (4,2) 
            (2,0) -- (4,0)
            (4,2) to[nos] (4,0)
            (3,2) node[below] {\squaredr{$\beta_1$}}
            (4,2) -- (6,2) 
            (4,0) -- (6,0)
            (6,2) to[nos] (6,0)
            (5,1.35) node[below] {\circled{$\alpha_2$}}
            (6,2) -- (8,2) 
            (6,0) -- (8,0)
            (8,2) -- (8,0)
            (7,1.35) node[below] {\circled{$\alpha_3$}}
            ;
            \begin{scope}[yshift=8cm]
            \draw
            (4,2.75) node[below] {${(w_{red},e_m,e''_s,e'_a)}$}
            (1,0) node[below] {$\rho_1$}
            (3,0) node[below] {$\rho_2$}
            (5,0) node[below] {$\rho_3$}
            (7,0) node[below] {$\rho_4$}
            (2,0) -- (0,0) -- (0,2) -- (2,2)
            (2,2) to[nos] (2,0)
            (3,1.35) node[below] {\circled{$\alpha_1$}}
            (2,2) -- (4,2) 
            (2,0) -- (4,0)
            (4,2) to[nos] (4,0)
            (3,2) node[below] {\squaredr{$\beta_1$}}
            (4,2) -- (6,2) 
            (4,0) -- (6,0)
            (6,2) to[nos] (6,0)
            (5,1.35) node[below] {\circled{$\alpha_2$}}
            (6,2) -- (8,2) 
            (6,0) -- (8,0)
            (8,2) -- (8,0)
            (7,1.35) node[below] {\circled{$\alpha_3$}}
            ;
            \begin{scope}[xshift=10cm]
            \draw
            (4,2.75) node[below] {${(w_{green},e_m,e''_s,e'_a)}$}
            (1,0) node[below] {$\rho_1$}
            (3,0) node[below] {$\rho_2$}
            (5,0) node[below] {$\rho_3$}
            (7,0) node[below] {$\rho_4$}
            (2,0) -- (0,0) -- (0,2) -- (2,2)
            (2,2) to[nos] (2,0)
            (3,1.35) node[below] {\circled{$\alpha_1$}}
            (2,2) -- (4,2) 
            (2,0) -- (4,0)
            (4,2) to[nos] (4,0)
            (3,2) node[below] {\squaredg{$\beta_1$}}
            (4,2) -- (6,2) 
            (4,0) -- (6,0)
            (6,2) to[nos] (6,0)
            (5,1.35) node[below] {\circled{$\alpha_2$}}
            (6,2) -- (8,2) 
            (6,0) -- (8,0)
            (8,2) -- (8,0)
            (7,1.35) node[below] {\circled{$\alpha_3$}}
            ;
            \end{scope}
            \begin{scope}[yshift=-12cm]
            \draw
            (4,2.2) node[above] {${(w_{red},e'_m,e''_s,e'_a)}$}
            (1,0) node[below] {$\rho_1$}
            (3,0) node[below] {$\rho_2$}
            (5,0) node[below] {$\rho_3$}
            (7,0) node[below] {$\rho_4$}
            (2,0) -- (0,0) -- (0,2) -- (2,2)
            (2,2) to[nos] (2,0)
            (1,1.35) node[below] {\circled{$\alpha_1$}}
            (2,2) -- (4,2) 
            (2,0) -- (4,0)
            (4,2) to[nos] (4,0)
            (3,2) node[below] {\squaredr{$\beta_1$}}
            (4,2) -- (6,2) 
            (4,0) -- (6,0)
            (6,2) to[nos] (6,0)
            (5,1.35) node[below] {\circled{$\alpha_2$}}
            (6,2) -- (8,2) 
            (6,0) -- (8,0)
            (8,2) -- (8,0)
            (7,1.35) node[below] {\circled{$\alpha_3$}}
            ;
            \end{scope}
            \end{scope}
            \end{scope}
            \end{scope}
            \end{scope}
            \end{circuitikz}
        }
    \end{center}
    \caption{\label{fig:third_update} The pointed model representing the state after $\alpha_1$ announces that $\beta_1$ is red while at room $\rho_2$, $(M_0 \otimes \text{Move}(a_1,r_1,r_2)\otimes\text{SenseCol}(a_1,red,b_1,r_2) \otimes \text{Announce}(a_1,red,b_1,r_2), (w_{red},e_m,e_s,e_a))$. Edges in the reflexive-transitive closure of the indistinguishability relations are omitted. Note that, in the actual world, $(w_{red},e_m,e_s,e_a)$, $\alpha_1$ and $\alpha_2$ do not face any uncertainty; there are no outgoing edges from the actual world for these agents. The goal $g$ as stated in Section \ref{sec:running_example} holds in the actual world: both $\alpha_1$ and $\alpha_2$ know the color of $\beta_1$, $\alpha_1$ knows that $\alpha_2$ knows this, and $\alpha_1$ knows that $\alpha_3$ does not know this.}
\end{figure}

\end{example}

\subsection{Succinct Representation of Actions via Epistemic Action Schemas \label{subsec:action_schemas}}

We introduce epistemic action schemas, which represent sets of actions in a general way, as done in common planning formalisms such as PDDL. Schemas use \textit{variables} to describe actions, rather than \textit{constant symbols}. These variables denote arbitrary agents and objects and are used to describe their roles with respect to a type of action, such as the roles of \textit{speaker} and \textit{listener} in an action of type `announcement'. 

As anticipated in Section \ref{sec:static_language}, a major reason for introducing schemas is that they result in action representations whose size is independent of the number of agents and objects in a domain. For the \textit{SelectiveCommunication} domain $\text{SC}(n,m,k,\ell)$, there are $n\cdot m \cdot k \cdot \ell \cdot 2^{n-1}$ possible announcement actions, since each of the $n$ agents could, in each of the $m$ rooms, announce about each of the $k$ boxes, that it is of one out of $\ell$ colors, with one out of the $2^{n-1}$ subsets of the other agents hearing the announcement. Representing all actions requires $n\cdot m \cdot k \cdot \ell \cdot 2^{n-1}$ standard DEL action models, i.e., one model per action. Variants of standard DEL models such as edge-conditioned models \cite{Bolander2014} fare substantially better, since the set of hearers is implicitly represented in such models, but $n\cdot m \cdot k \cdot \ell$ models are still required to represent the set of announcements. Other variants of DEL are also more succinct than standard DEL action models, e.g. the symbolic models of \cite{charrier2017succinct, van2018symbolic}. However, all these announcements can be compactly represented with a single \textit{epistemic action schema}, as shown below in Example \ref{example:action_schemas}.

Moreover, epistemic schemas open up the possibility of applying well-known techniques such as \textit{least commitment} or \textit{partial order planning} \cite{weld1994introduction} to epistemic problems. These approaches use the notion of a \textit{partially instantiated action}, such as $Move(B,x,C)$, where $x$ is a variable whose substitution has not yet been chosen. If specifying a binding constraint for $x$ is unnecessary at the current point in the planning process, it is often advantageous to delay this commitment until later, i.e., until other necessary parts of the plan are discovered that further constrain what $x$ should be. 
Other approaches to lifted planning, such as \textit{hierarchical task networks} (HTNs) (\cite{ghallab2004automated}, ch. 11) similarly exploit partial substitutions to optimise the search for solutions. For an epistemic version of lifted HTN planning, the epistemic schemas defined here could play the role of primitive tasks. 

An epistemic action schema $a(x_1,\dots, x_n)$ is defined using an \textit{action name} $a$ and a \textit{parameter list} $(x_1, \dots, x_n)$, as done e.g. in PDDL. The parameter list fixes a finite set of agents and objects involved in the execution of the action. Schemas are required to follow a STRIPS-like scope assumption; all variables referenced in the preconditions or postconditions of an action schema must appear in the action's parameter list. Any agent or object unmentioned in the parameter list is assumed to be unrelated to the action's pre- and postconditions.

\begin{defn}
\label{Def. Action schema} 
An \textbf{\textit{epistemic action schema}} is of the form $a(\vec{x})=(E,Q,\mathsf{pre},\mathsf{post})$ where
\begin{enumerate}
    \item $a$ is the \textit{action name} and $\vec{x}\in \mathtt{V}^n$ is a finite \textit{parameter list}.
    \item $E$ is a non-empty, finite set of \textit{events}.
    \item $Q: (E\times E) \to \mathcal{L}_{AM}$ is an \textit{edge-condition function}, where the formula $Q(e,e')$ has a free variable ${x^\star}$ of type $\mathtt{agt}$, and possibly other free variables all in $\vec{x}$.
    \item $\mathsf{pre}: E \to \mathcal{L}_{AM}$ assigns to each event a \textit{precondition formula} with all free variables in $\vec{x}$.
    \item $\mathsf{post}: E \to (\mathtt{FreeAtoms}(\mathcal{L}) \rightharpoonup\mathcal{L}_{AM})$ assigns to each event a \textit{partial postcondition function} such that if $y_1,\dots,y_m$ all occur in $\vec{x}$, then $\mathsf{post}(e)(r(y_1,\dots,y_m))$ has all free variables from $\vec{x}$; else, $\mathsf{post}(e)(r(y_1,\dots,y_m))$ is undefined.
\end{enumerate}
 $\dom( \mathsf{post}(e))$ denotes the set of atoms for which $\mathsf{post}(e)(r(t_1,\dots,t_k)) \neq r(t_1,\dots,t_k)$.
\end{defn}

The postcondition for each event $e$ is defined as a partial function, with all atoms whose arguments are not a subset of those ocurring in $\vec{x}$ left unaffected. Since the parameter list is required to be finite, this yields a finite encoding of postconditions. 

Note that the parameter list of a schema may include \textit{agent variables}. Just like any other action parameter, these variables can appear in the preconditions or effects of the schema. This is in line with what occurs in multi-agent extensions of PDDL such as MAPL or MA-PDDL \cite{brenner2003multiagent,kovacs2012multi}. Such variables are included to enable the schematization of actions also with respect to agents. In this paper, we adopt the epistemic operators from term-modal logic, \textit{indexed by agent variables}, to achieve schematization with respect to agents when formalizing epistemic planning. In order to be able to express epistemic pre- or postconditions relative to an agent variable $x$ in an action schema, it is necessary to use the variable in the scope of a modal operator.  For example, if an action has e.g. a precondition that requires agent $x$ to know $P(y,z)$, we need the term-modal formula $K_x P(y,z)$ to express this constraint in a schematic way. 

\begin{example}[Action schemas for $\text{SC}(n,m,k,\ell)$] \label{example:action_schemas} Figures  \ref{fig:schema_move}, \ref{fig:schema_sense} and \ref{fig:schema_say} depict graphically the action schemas for the movement, sensing and announcement actions described in the example from Section \ref{sec:running_example}. The schemas have the same structure as the action models from Example \ref{example:action_models}, but the conditions are now expressed in a general way, via free variables for agents, boxes and colors. 

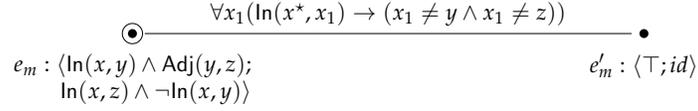
\begin{figure}[H]
    \begin{center}
        \scalebox{0.85}{
            \begin{tikzpicture}
              \tikzstyle{dnode}=[inner sep=1pt,outer sep=1pt,draw,circle,minimum width=9pt]
              \tikzstyle{nnode}=[inner sep=1pt,outer sep=1pt,circle,minimum width=9pt]
              \tikzstyle{label-edge}=[midway,fill=white, inner sep=1pt]
              \node[dnode, label={[align=right]below: $e_m: \langle \mathsf{In}(x,y) \wedge \mathsf{Adj}(y,z);$ \\ $\mathsf{In}(x,z)\wedge \neg \mathsf{In}(x,y) \rangle$}] (w0) at (0,0) {};
              \fill (w0) circle [radius=2pt];
              \node[nnode,  label={below: $e'_m: \langle \top;id \rangle$}] (w1) at (8,0) {};
              \fill (w1) circle [radius=2pt];
              \draw[-] (w0) -- (w1) node[above,midway] {$\forall x_1 (\mathsf{In}({x^\star},x_1) \to (x_1 \neq y \wedge x_1 \neq z))$};
            \end{tikzpicture}
        }
    \end{center}
\caption{\label{fig:schema_move} $\text{Move}(x,y,z)$, the action schema for agent $x$ moving from room $y$ to room $z$.}
\end{figure}

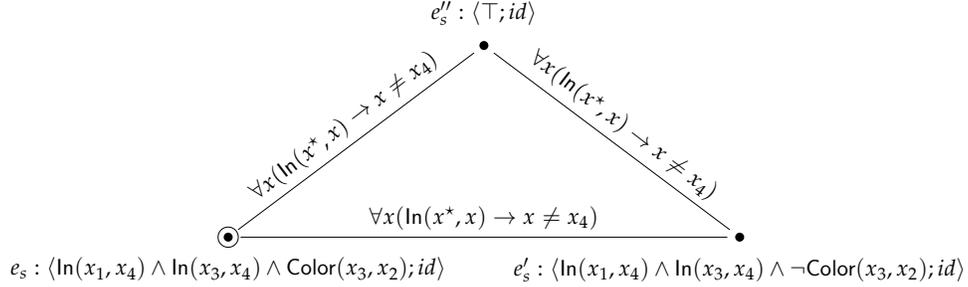
\begin{figure}[H]
    \begin{center}
        \scalebox{0.85}{
            \begin{tikzpicture}
              \tikzstyle{dnode}=[inner sep=1pt,outer sep=1pt,draw,circle,minimum width=9pt]
              \tikzstyle{nnode}=[inner sep=1pt,outer sep=1pt,circle,minimum width=9pt]
              \tikzstyle{label-edge}=[midway,fill=white, inner sep=1pt]
              \node[dnode, label={[align=right]below: $e_{s}: \langle \mathsf{In}(x_1,x_4) \wedge \mathsf{In}(x_3,x_4)\wedge \mathsf{Color}(x_3,x_2); id \rangle$}] (w0) at (0,0) {};
              \fill (w0) circle [radius=2pt];
              \node[nnode,  label={below: $e'_s:   \langle \mathsf{In}(x_1,x_4) \wedge \mathsf{In}(x_3,x_4)\wedge \neg \mathsf{Color}(x_3,x_2); id \rangle$}] (w1) at (8,0) {};
              \fill (w1) circle [radius=2pt];
              \node[nnode,  label={above: $e''_s:  \langle \top; id \rangle$}] (w2) at (4,3) {};
              \fill (w2) circle [radius=2pt];
              \draw[-] (w0) -- (w1) node[above,midway] {$\forall x (\mathsf{In}({x^\star},x) \to  x \neq x_4)$};
              \draw[-] (w0) -- (w2) node[above,midway,rotate=37.5] {$\forall x (\mathsf{In}({x^\star},x) \to x \neq x_4)$};
              \draw[-] (w1) -- (w2) node[above,midway,rotate=-37.5] {$\forall x (\mathsf{In}({x^\star},x) \to x \neq x_4)$};
            \end{tikzpicture}
        }
    \end{center}
\caption{\label{fig:schema_sense} $\text{SenseCol}(x_1,x_2,x_3,x_4)$, the action schema for $x_1$ sensing in room $x_4$ whether box $x_3$ has color $x_2$.}
\end{figure}

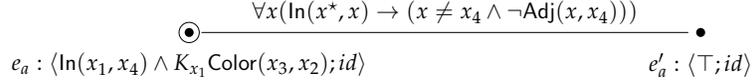
\begin{figure}[H]
    \begin{center}
        \scalebox{0.85}{
            \begin{tikzpicture}
              \tikzstyle{dnode}=[inner sep=1pt,outer sep=1pt,draw,circle,minimum width=9pt]
              \tikzstyle{nnode}=[inner sep=1pt,outer sep=1pt,circle,minimum width=9pt]
              \tikzstyle{label-edge}=[midway,fill=white, inner sep=1pt]
              \node[dnode, label={[align=right]below: $e_a: \langle \mathsf{In}(x_1,x_4) \wedge K_{x_1}\mathsf{Color}(x_3,x_2); id \rangle$}] (w0) at (0,0) {};
              \fill (w0) circle [radius=2pt];
              \node[nnode,  label={below: $e'_a: \langle \top;id \rangle$}] (w1) at (8,0) {};
              \fill (w1) circle [radius=2pt];
              \draw[-] (w0) -- (w1) node[above,midway] {$\forall x (\mathsf{In}({x^\star},x) \to ( x \neq x_4 \wedge \neg \mathsf{Adj}(x,x_4)))$};
            \end{tikzpicture}
        }
    \end{center}
\caption{\label{fig:schema_say} $\text{Announce}(x_1,x_2,x_3,x_4)$, the action schema for $x_1$ announcing that $x_3$ has color $x_2$ while in room $x_4$.}
\end{figure}

\end{example}

As usual in planning, schemas can be instantiated into concrete actions via grounding substitutions. Schema instantiation is defined as follows. Let $a(x_1,\dots,x_n)$ be a schema and $\sigma: \{x_1,\dots,x_n\} \to \mathtt{C}$ be a \textbf{\textit{grounding substitution}}, i.e., a mapping from variables into constants. For a formula $\varphi$, let $\varphi\sigma$ be the result of replacing each occurrence of a free variable $y$ in $\varphi$ by $\sigma(y)$.
\begin{defn}
\label{Def. schema instantiation} Let $a(x_1,\dots,x_n)=(E,Q,\mathsf{pre},\mathsf{post})$  be an action schema and let $\sigma: \{x_1,\dots,x_n\} \to \mathtt{C}$ be a grounding substitution. The \textbf{\textit{action model induced by}} $\sigma$ is  $a(\sigma(x_1),\dots,\sigma(x_n))=(E',Q',\mathsf{pre}',\mathsf{post}')$ where
\begin{enumerate}
\item $E'=E$.
\item for each $e,e'\in E$, $Q'(e,e')=Q(e,e')\sigma$ 
\item for each $e\in E$, $\mathsf{pre}'(e)=\mathsf{pre}(e)\sigma$ .
\item for each $e\in E$, \[ \mathsf{post}'(e)(r(t_1,\dots, t_n)\sigma) = \begin{cases} 
      \mathsf{post}(e)(r(t_1,\dots, t_n))\sigma & \text{if } \mathsf{post}(e)(r(t_1,\dots, t_n)) \text{ is defined }  \\
      r(t_1,\dots, t_n)\sigma & \text{otherwise.}
   \end{cases}
\]
\end{enumerate}
\end{defn}

\begin{example}
The announcement model $A_1$ in Figure \ref{fig:model_say} is the ground action of schema $S_1$ from Figure \ref{fig:schema_say} induced by the substitution $\sigma = \{ x_1\mapsto a_2, x_2\mapsto r_3, x_3\mapsto b_1, x_4\mapsto green\}$.
\end{example}

\section{Problems, Plans and Solutions}\label{sec:planning} 

This section defines \textit{first-order epistemic planning tasks}. An epistemic planning task consists of an initial state, a set of actions, and a goal to be achieved. To solve an epistemic planning task, one may take either an \textit{external} or and \textit{internal} perspective \cite{aucher2010internal}. The external perspective is the view of the system designer, who knows the precise initial state and actual effect of every action. The internal perspective is the view of an in-system agent, who has uncertainty about the state of the world and therefore uncertainty about the effects of executed actions. In the DEL planning framework, an external planning task is defined as a special case of a classical planning task (as in, e.g., \cite{aucher2013undecidability,bolander2015complexity}).  Following \cite{ghallab2004automated}, any \textit{classical planning domain} can be described as a state-transition system $T = (\mathcal{S}, \mathcal{A}, \gamma)$ where $\mathcal{S}$ is a finite or recursively enumerable set of finite states, $\mathcal{A}$ is a finite set of actions and $\gamma : \mathcal{S} \times \mathcal{A} \rightharpoonup \mathcal{S}$ is a partial, computable state-transition function. A \textit{classical planning task} is a triple $(T, s_0, S_G)$, where $T$ is a state-transition system, $s_0\in \mathcal{S}$ is the initial state and $S_G\subseteq \mathcal{S}$ is the set of goal states. A \textit{solution} to a classical planning task $(T, s_0, S_G)$ is a plan consisting of a finite sequence of actions $a_1, a_2,\dots,a_n$ such that (1) For all $i \leq n$, $\gamma(\gamma(\dots \gamma(\gamma(s_0, a_1), a_2), \dots , a_{i-1}), a_i)$ is defined, and (2) $\gamma(\gamma(\dots \gamma(\gamma(s_0, a_1), a_2), \dots , a_{n-1}), a_n) \in S_G$. Epistemic planning tasks can be defined as special cases of classical planning tasks, as follows. 

\begin{defn}
\label{def.ext.planning.task}
Let $\mathsf{A}$ be a finite set of action schemas. A \textit{(first-order) epistemic planning task} based on $\mathsf{A}$ is a triple $P=(s_0, \mathcal{A}, \varphi_G)$ where the initial state $s_0$ is a finite epistemic state with a finite domain, $\mathcal{A}$ is the set of all ground instances of the schemas in $\mathsf{A}$, and the goal formula $\varphi_G$ is a sentence of $\mathcal{L}$. Any epistemic planning task $(s_0, \mathcal{A}, \varphi_G)$ induces a classical planning task $((\mathcal{S}, \mathcal{A}, \gamma), s_0, S_G)$ given by:
\begin{description}
    \item[-] $\mathcal{S} \coloneqq \{s_0 \otimes a_1 \otimes \dots \otimes a_n \mid n\in\mathbb{N}, a_i \in \mathcal{A}\}$
    \item[-] $S_G \coloneqq \{s \in \mathcal{S} \mid s \vDash \varphi_G\}$
    \item[-] $\gamma(s,a) \coloneqq  
    s\otimes a$ if $a$ is applicable in $s$, else undefined.

\end{description}
A \textit{solution} to an epistemic planning task is a solution to the induced classical planning task.
\end{defn}

All the ingredients in this definition of an external planning task can come from the formalism presented here. Note that the planner-modeler in such a task is not one of the agents in the domain $D_{\mathtt{agt}}$. The planner-modeler has access to the actual states $s_i$, i.e., to pointed models $(M,w)$ where $w$ is the actual world. 

Formalisms of internal epistemic planning based on DEL are often defined from the external planning model, either by adding structure to the models or making small modifications. For example, \cite{andersen2012conditional} represents internal perspectives using \textit{information cells}, which are defined from the accessibility relations of an epistemic model. An alternative involves using \textit{multi-pointed models} or adding a set of so-called \textit{designated points} to the epistemic model, with each point describing a world that the agent considers as possible from its internal perspective (see e.g. \cite{bolanderbirkegaard2011}). A third approach uses a belief state representation of the agent's internal view as primitive and then defines an epistemic model from it \cite{kominis2015beliefs}. The approach in \cite{aucher2010internal} offers two different flavors of internal view, both defined  on the basis of a standard epistemic model. These various notions of internal perspective, as well as their associated planning tasks, may be upgraded to our framework without major modifications. We believe that the simplest way to add internal perspectives to the present would be to adopt the approach proposed by Bolander and Andersen \cite{bolanderbirkegaard2011}. Save for the fact that such models are multi-pointed, the core semantics remains the same. We therefore envision no major difficulty in bringing the internal perspective into our formalism; similarly defined multi-pointed structures should suffice. We do not develop the detail here.

\subsection{Decidability of the Plan Existence Problem}

Having defined first-order epistemic planning tasks, a natural first question is whether the corresponding plan existence problem is decidable. We follow Aucher and Bolander \cite{aucher2013undecidability} in defining the plan existence problem:

\begin{defn}
Let $n\in \mathbb{N}$. $\mathsf{PlanEx}(n)$ is the problem: ``Given a (first-order) epistemic planning task $P=(s_0, \mathcal{A}, \varphi_G)$ where $s_0$ is an $n$-agent epistemic state, does $P$ have a solution?''.
\end{defn}

For propositional DEL, the corresponding problem is undecidable in general \cite{bolanderbirkegaard2011}. This entails that the unrestricted first-order problem is undecidable as well, since first-order epistemic planning extends propositional DEL planning. However, decidable and reasonably expressive fragments of propositional DEL planning have been found, such as single-agent planning and multi-agent planning with non-modal preconditions. In \cite{ijcai2020}, we show that \textit{the corresponding first-order fragments are also decidable}. In that paper, bisimulations for the term-modal models presented here are introduced, and shown to have standard model-theoretic properties.  Such bisimulations are key in proving the decidability results, as they allow us to show that the state spaces for certain planning fragments are finitely representable, up to bisimulation. We state the main results here and refer the reader to \cite{ijcai2020} for details.

\begin{thm}[\cite{ijcai2020}] $\mathsf{PlanEx}(1)$ (single-agent planning) is decidable.
\end{thm}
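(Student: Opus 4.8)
The plan is to reduce $\mathsf{PlanEx}(1)$ to a reachability question in a \emph{finite} transition system obtained by quotienting the (a priori infinite) state space by bisimilarity. Fix a single-agent planning task $P=(s_0,\mathcal{A},\varphi_G)$ with induced classical task $((\mathcal{S},\mathcal{A},\gamma),s_0,S_G)$, so $\mathcal{S}=\{s_0\otimes a_1\otimes\cdots\otimes a_n \mid n\in\mathbb{N},\ a_i\in\mathcal{A}\}$. The three facts I would use, all developed in the companion paper \cite{ijcai2020}, are: (a) bisimilar pointed term-modal models satisfy the same $\mathcal{L}$-sentences, so bisimilar states agree on applicability of any given ground action (preconditions being $\mathcal{L}$-formulas) and on membership in $S_G$; (b) product update is a congruence for bisimilarity — if $s\sim s'$ and $a$ is applicable in $s$, then $a$ is applicable in $s'$ and $s\otimes a\sim s'\otimes a$; and (c) in the single-agent case the reachable set $\mathcal{S}$ is finite up to bisimilarity.

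Granting (a)--(c), the decision procedure is a breadth-first exploration of the quotient. Starting from $s_0$, maintain a finite set $\mathsf{Visited}$ of pointed models containing at most one per bisimilarity class. For a frontier state $s$ and each relevant ground action $a\in\mathcal{A}$, test applicability and, if $a$ is applicable, compute the finite pointed model $s\otimes a$; discard it if it is bisimilar to some element of $\mathsf{Visited}$, otherwise add it both to $\mathsf{Visited}$ and to the frontier. Whenever a state is produced (including $s_0$), check whether it satisfies $\varphi_G$. This is effective at every step: satisfaction of an $\mathcal{L}$-sentence on a finite term-modal model with finite domain is decidable, since the recursion of Definition~\ref{def:satisfaction} terminates — $\forall x$ ranges over the finite set $D_{\mathtt{t}(x)}$ and $K_t$ over the finite set $R_i(w)$ — and bisimilarity of two finite term-modal models is decidable by a partition-refinement argument treating the domain as part of the signature. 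By (c) the exploration halts; by (a) and (b) the resulting finite quotient transition system faithfully represents both reachability and goal satisfaction, so $P$ has a solution iff the search reaches a state satisfying $\varphi_G$. Output ``yes'' in that case and ``no'' if the search closes without one.

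The main obstacle is item (c): establishing that iterated product update with a \emph{single} agent, starting from a finite state, visits only finitely many states up to bisimilarity. This is the substantive input taken from \cite{ijcai2020}; relative to the known argument for single-agent propositional DEL planning, it additionally requires checking that the term-modal notion of bisimulation behaves correctly in the presence of a common domain of quantification, of quantifiers, and of \emph{non-rigid} constants, and that these features do not break the bound on bisimulation-contracted models. The remaining ingredients — bisimulation invariance of $\mathcal{L}$, the congruence property of $\otimes$, and decidability of both finite-model satisfaction and finite-model bisimilarity — are routine adaptations of standard facts, as is the observation that only finitely many ground actions need be considered (either assume the underlying signature finite, or note that ground actions differing only in constants with the same interpretation throughout the generated part of $s_0$ — which is preserved by product update — yield bisimilar successors, and there are finitely many such classes since $D$ is finite).
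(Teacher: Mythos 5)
You should first note what the target is here: this paper does not actually prove the theorem --- it is imported from the companion paper \cite{ijcai2020}, and the present text only sketches the method used there (term-modal bisimulations with the standard model-theoretic properties, used to show that the reachable state spaces of the relevant planning fragments are finitely representable up to bisimulation). Your skeleton --- bisimulation invariance of $\mathcal{L}$, congruence of product update with respect to bisimilarity, breadth-first exploration of the bisimulation quotient, decidable model checking and decidable bisimilarity on finite pointed models, and the reduction of the ground-action set to finitely many relevant classes --- reproduces exactly that strategy, and those ingredients are indeed the routine part.

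The genuine gap is your item (c): the finiteness, up to bisimilarity, of $\{s_0\otimes a_1\otimes\dots\otimes a_n \mid n\in\mathbb{N},\ a_i\in\mathcal{A}\}$ in the single-agent case. This is the actual mathematical content of the theorem, and you do not argue for it; you ``grant'' it and explicitly take it from \cite{ijcai2020}, i.e.\ from the very result to be proved, which makes the proposal circular at its only non-trivial point. Nor is (c) a soft step one may wave through: for arbitrary single-agent Kripke models there are infinitely many bisimulation classes even over a fixed finite vocabulary (pairwise non-bisimilar finite chains already witness this), so the claim must exploit specific structure --- the shape of states generated by iterated product update from a finite initial state with a finite constant domain, and/or relational properties such as equivalence relations --- to bound the size of bisimulation contractions of reachable states and to show this bound is preserved by $\otimes$. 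That is precisely the work carried out in \cite{ijcai2020} (and, in the propositional case, in the single-agent argument of Bolander and Andersen), and without it your write-up establishes only the easy reduction ``decidability follows once the quotient is finite,'' not decidability of $\mathsf{PlanEx}(1)$ itself.
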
{}

\begin{thm}[\cite{ijcai2020}] If all actions have non-modal preconditions, then $\mathsf{PlanEx}(k)$ is decidable, for $k\geq 1$.
\end{thm}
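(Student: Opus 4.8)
\medskip
\noindent\textbf{Proof proposal.} The plan is to reduce $\mathsf{PlanEx}(k)$, for a fixed task $P=(s_0,\mathcal{A},\varphi_G)$, to reachability in a \emph{finite} directed graph obtained by quotienting the (in general infinite) induced state space $\mathcal{S}=\{s_0\otimes a_1\otimes\cdots\otimes a_n : n\in\mathbb{N},\ a_i\in\mathcal{A}\}$ by a bounded bisimulation. The required ingredients are supplied by the bisimulation theory for term-modal models developed in \cite{ijcai2020}: (i) notions of bisimulation and of $h$-bisimulation for pointed epistemic states, a Hennessy--Milner-style adequacy result asserting that $h$-bisimilar pointed states agree on all $\mathcal{L}$-sentences of modal depth at most $h$, and an invariance result asserting that product update with a \emph{non-modal} action model (non-modal preconditions and, as in the corresponding propositional fragment, non-modal postconditions and edge-conditions) maps $h$-bisimilar states to $h$-bisimilar states, in a depth-preserving way; and (ii) the combinatorial fact that over a fixed finite domain and a fixed finite set of non-logical symbols there are only finitely many $h$-bisimulation classes of finite pointed epistemic states.

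\medskip
\noindent With these in place I would argue as follows. Let $d_G$ be the modal depth of $\varphi_G$. Product update leaves the domain untouched ($D'=D$), so every $s\in\mathcal{S}$ has the same finite domain $D^{s_0}$, and $P$ mentions only finitely many relation, function and constant symbols (those in $\varphi_G$ and in the finitely many schemas underlying $\mathcal{A}$); the finiteness of the domain is what makes quantifiers behave finitarily here. Hence by (ii) the set $X$ of $d_G$-bisimulation classes of the relevant states is finite; write $[s]$ for the class of $s$. Because preconditions are non-modal, applicability of a pointed action $(a,e)$ to $(s,w)$ is a property of modal depth $0$, hence depends only on $[s]$; and by the invariance result $s\otimes a$ and $s'\otimes a$ are $d_G$-bisimilar whenever $s$ and $s'$ are. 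So the map sending a class $[s]$ and an action $a$ to $[s\otimes a]$ is a well-defined partial function on $X$, and $\{[s] : s\vDash\varphi_G\}$ is a well-defined subset of $X$ (the latter by $d_G$-adequacy). This yields a finite transition system $\mathcal{T}^{\star}$ on $X$ with start class $[s_0]$ and goal classes as above, and it is effectively constructible: on finite models both satisfaction of $\mathcal{L}$-sentences and applicability together with product update are computable, and $d_G$-bisimulation types are computable by bounded partition refinement, so one builds the reachable part of $\mathcal{T}^{\star}$ by breadth-first search from $[s_0]$, which terminates since $X$ is finite. Finally, $P$ has a solution iff some goal class is reachable from $[s_0]$ in $\mathcal{T}^{\star}$ (and then a solution of length at most $|X|$ exists, obtained by lifting a shortest such path), and reachability in a finite graph is decidable. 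The construction is uniform in $k$ --- it uses only that $D^{s_0}$, and in particular $D^{s_0}_{\mathtt{agt}}$, is finite --- so $\mathsf{PlanEx}(k)$ is decidable for every $k\geq 1$; the case $k=1$ is also covered by the preceding single-agent theorem.

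\medskip
\noindent\textbf{The main obstacle} is the invariance of bounded product update under bisimulation, and in particular the modal-depth bookkeeping that keeps it valid along plans of unbounded length. When the action model is non-modal it probes the input state only at modal depth $0$, so product update is depth-preserving for bisimulation: the $d_G$-bisimulation class of $s_0\otimes a_1\otimes\cdots\otimes a_n$ is determined by that of $s_0$ and the action sequence, independently of $n$. This is exactly what lets the dynamics descend to the finite quotient $X$; if preconditions (or postconditions, or edge-conditions) could be epistemic, the look-ahead depth of an action into the model would be positive, the depth needed would grow linearly with plan length, and no finite quotient of this kind could work --- in line with undecidability of the unrestricted problem. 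A second, orthogonal difficulty --- and the reason this is not a transcription of the propositional argument --- is that the bisimulation notion of \cite{ijcai2020} must cope with the term-modal/first-order apparatus: non-rigid constants, the constant but genuinely first-order domain, quantification ranging over the agent domain (agents being at once modal indices and quantifiable objects), and the edge-conditioned product update, in which an agent's accessibility relation in the updated model is read off the world-relative extension of a term and gated by an edge-condition evaluated under an $x^{\star}$-valuation. Checking that the back-and-forth clauses of the bisimulation interact correctly with all of this --- especially with the $x^{\star}$-valuations in edge-conditions and with the non-rigidity of constants --- is where the substantive work lies.
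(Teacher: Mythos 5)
The paper itself contains no proof of this theorem: it is quoted from \cite{ijcai2020}, and the text only sketches the method used there, namely that bisimulations for term-modal models are introduced, shown to have the standard model-theoretic properties, and used to show that the state spaces of these planning fragments are finitely representable up to bisimulation. Your proposal is precisely that argument --- quotient the reachable states by a goal-depth-bounded bisimulation, use depth-preserving invariance of product update for non-modal actions together with finiteness of bounded-bisimulation types over the fixed finite domain, and decide reachability in the resulting effectively constructible finite transition system --- so it matches the approach the paper attributes to \cite{ijcai2020}, and the reduction itself (well-definedness of the quotient dynamics, goal classes via depth-$d_G$ adequacy, termination of the search, lifting a path to a plan) is sound modulo the imported bisimulation lemmas. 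One caveat: as you yourself flag, your invariance lemma requires \emph{all} formulas occurring in the actions to be non-modal --- edge-conditions and postconditions as well as preconditions --- since a modal edge-condition or postcondition makes the depth needed grow with plan length and destroys the finite quotient. The statement as quoted restricts only the preconditions, so strictly speaking you prove the theorem under a stronger hypothesis than the one stated; this is presumably how the fragment is delimited in \cite{ijcai2020} (mirroring the propositional result of \cite{yu2013multi}), but it is an assumption you are adding rather than one your argument discharges.
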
{}

\subsection{An Example with a PDDL-like Description}\label{sec:example.non-rigid}

With the above, the dynamic term-modal planning framework of the paper has been introduced. This section contains a second example, using its different components in one place. The example also serves to illustrate how uncertainty about names may play a role in epistemic planning, and how a term-modal planning domain and an associated planning problem may be described using a `PDDL-like syntax'. This description is meant as an indication of how such definitions could be standardized with a PDDL flavor, but no attempt is made at a precise syntax.

\subsubsection{The \textit{MachineMalfunction} (MM) Domain}
 In the $\text{MM}(n,m,k)$ domain, there are $n+m$ agents supervising $k$ machines, the agents' tasked to ensure that machines function correctly at all times, where a choice of $n,m$ and $k$ fixes the universe of the domain. The agents have different roles: $n$ of the agents are monitoring the machines for potential malfunctions, while the remaining $m$ agents are system administrators, that from behind a terminal may solve any malfunction by issuing a \textsf{reboot} command to the affected machine. To reboot a machine, admins need to know its serial number, which a monitor may be uncertain about. Hence, the optimal sequence of actions to fix a malfunction problem will depend on how the knowledge of serial numbers is distributed amongst agents. Finally, company policy dictates epistemic preconditions for the actions: monitoring agents are only allowed to report machine as malfunctioning once they know that it is malfunctioning. To avoid deadlock, the requirement for admins is weaker: they may reboot any machine once they know some machine is malfunctioning.
 
 The remainder of this section concerns an external epistemic planning task $\mathsf{MM\_task}\coloneqq(s_0; \mathbf{A}; \varphi_g)$ in $\text{MM}(1,2,2)$, i.e., with one monitor, two admins, and two machines.

\subsubsection{Initial State and Goal}
Figure \ref{example:initial_model_Machine} depicts the the pointed epistemic model $s_0:=(M_0,w_0):=((D,W,R,I),w_0)$, the initial state of $\mathsf{MM\_task}$.
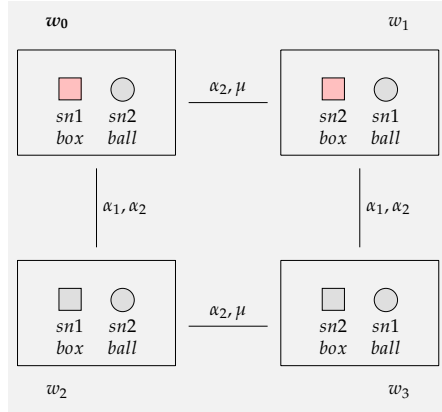
\begin{figure}[H]
  \begin{center}
    \scalebox{0.7}{
            \begin{circuitikz}[framed]
            \draw
            (2.75,2.75) node[below] {$\boldsymbol{w_0}$}
            (9.25,2.75) node[below] {${w_1}$}
            (2.75,-4.25) node[below] {${w_2}$}
            (9.25,-4.25) node[below] {${w_3}$}
            (2,2) -- (2,0)
            (3,1.6) node[below] {\squaredr{\phantom{a}}}
            (3,1) node[below] {$sn1$}
            (3,0.6) node[below] {$box$}
            (2,2) -- (5,2) 
            (2,0) -- (5,0)
            (5,2) -- (5,0)
            (4,1.6) node[below] {\circled{\phantom{a}}}
            (4,1) node[below] {$sn2$}
            (4,0.6) node[below] {$ball$}
            ;
            \draw
            (5.25,1) -- (6.75,1) node[midway,
    above]{$\alpha_2,\mu$}
            (5.25,-3.25) -- (6.75,-3.25) node[midway,
    above]{$\alpha_2,\mu$}
            (8.5,-0.25) -- (8.5,-1.75) node[midway,
    right]{$\alpha_1,\alpha_2$}
            (3.5,-0.25) -- (3.5,-1.75) node[midway,
    right]{$\alpha_1,\alpha_2$}
            ;
            \begin{scope}[xshift=5cm]
            \draw
            (2,2) -- (2,0)
            (3,1.6) node[below] {\squaredr{\phantom{a}}}
            (3,1) node[below] {$sn2$}
            (3,0.6) node[below] {$box$}
            (2,2) -- (5,2) 
            (2,0) -- (5,0)
            (5,2) -- (5,0)
            (4,1.6) node[below] {\circled{\phantom{a}}}
            (4,1) node[below] {$sn1$}
            (4,0.6) node[below] {$ball$}
            ;
            \begin{scope}[yshift=-4cm]
            \draw
            (2,2) -- (2,0)
            (3,1.6) node[below] {\squared{\phantom{a}}}
            (3,1) node[below] {$sn2$}
            (3,0.6) node[below] {$box$}
            (2,2) -- (5,2) 
            (2,0) -- (5,0)
            (5,2) -- (5,0)
            (4,1.6) node[below] {\circled{\phantom{a}}}
            (4,1) node[below] {$sn1$}
            (4,0.6) node[below] {$ball$}
            ;
            \begin{scope}[xshift=-5cm]
            \draw
            (2,2) -- (2,0)
            (3,1.6) node[below] {\squared{\phantom{a}}}
            (3,1) node[below] {$sn1$}
            (3,0.6) node[below] {$box$}
            (2,2) -- (5,2) 
            (2,0) -- (5,0)
            (5,2) -- (5,0)
            (4,1.6) node[below] {\circled{\phantom{a}}}
            (4,1) node[below] {$sn2$}
            (4,0.6) node[below] {$ball$}
            ;
            \end{scope}
            \end{scope}
            \end{scope}
            \end{circuitikz}
    }
    \end{center}
    \caption{\label{example:initial_model_Machine} The pointed epistemic model $s_0=(M_0,w_{0})$, the initial state of $\mathsf{MM\_task}$. The agent domain $D_{\mathtt{agt}}$ comprises three agents, $\alpha_1, \alpha_2$ and $\mu$, for simplicity only depicted as edge labels. Agents $\alpha_1$ and $\alpha_2$ are admins, while $\mu$ is a monitor. There are two machines, depicted as a box and a ball; beneath them are the constants that denote them in that world. In $w_0$, the box machine is denoted in-system by the serial number $sn1$, but colloquially as $box$, etc. In all worlds, the constants $a1, a2$ and $m1$ denote respectively $\alpha_1, \alpha_2$ and $\mu$. The red coloring specifies malfunction. The monitor can observe malfunctions, while the admins cannot. Neither the monitor nor the newly employed admin $\alpha_2$ know the serial numbers, but $\alpha_1$ does. They all know the colloquial descriptions. The monitor knows that both admins face uncertainty about the malfunction, and that $\alpha_1$ knows the serial numbers while $\alpha_2$ does not.
    }
\end{figure}

We aim for a simple example. Hence, the administrator/monitor roles are not formally specified, but could straightforwardly be assigned using predicates.

Before stating the available actions, specify the goal of $\mathsf{MM\_task}$ to be that some agent knows that all machines are not malfunctioning. I.e.,
$$
\varphi_g \coloneqq \exists x K_x \forall y \neg \mathsf{Malfunction}(y)
$$
With this goal achieved, the agent knowing that no malfunction occurs can announce this to the remaining agents to achieve that this becomes known to all, but for simplicity, we have omitted this aspect.

The initial state $s_0$ and the goal $\varphi_g$ may then be described in a `PDDL-like syntax', as it would appear in a problem file, cf. Figure \ref{fig:pddl_problem}.

\begin{figure}[H]
\hrule\vspace{8pt}
\begin{minted}[fontsize=\footnotesize]{text}
(define (problem machine-malfunction-p1)
  (:domain machine-malfunction)
  (:universe
      Alpha1 Alpha2 - admin_agent
      Mu            - monitoring_agent
      Box Ball      - machine)
  (:constants
      sn1 sn2 box ball - machine_id
      m1               - monitoring_agent_id
      a1 a2            - admin_agent_id)
  (:init
    (:actual_world w0
      :constant_map ((sn1 Box) (sn2 Ball) (box Box)  (ball Ball) (m1 Mu) (a1 Alpha1) (a2 Alpha2))
      :atoms     ((malfunction sn1) (malfunction box)))
    (:world w1
      :constant_map ((sn2 Box) (sn1 Ball) (box Box)  (ball Ball) (m1 Mu) (a1 Alpha1) (a2 Alpha2))
      :atoms     ((malfunction sn2) (malfunction box)))
    (:world w2
      :constant_map ((sn1 Box) (sn2 Ball) (box Box) (ball Ball) (m1 Mu) (a1 Alpha1) (a2 Alpha2))
      :atoms     ())
    (:world w3
      :constant_map ((sn2 Box) (sn1 Ball) (box Box) (ball Ball) (m1 Mu) (a1 Alpha1) (a2 Alpha2))
      :atoms     ())
    (:edges
      :Alpha1 ((w0 -- w2) (w1 -- w3))
      :Alpha2 (all)
      :Mu     ((w0 -- w1) (w2 -- w3)))
  (:goal (exists (?a -agent_id) (knows (?a) (forall (?o - object_id) (not (malfunction ?o)))))))
\end{minted}
\hrule
\caption{\label{fig:pddl_problem} A `PDDL-style syntax' description of the initial state $s_0$ and the goal $\varphi_0$. A universe of agents and objects is defined using the keyword $\mathtt{:universe}$, with constants denoting this domain declared the keyword $\mathtt{:constants}$. The $\mathtt{:init}$ keyword precedes the description of the initial state, which comprises worlds and edges between worlds. Each world is declared with a keyword $\mathtt{:world}$ and encompasses a $\mathtt{:constant\_map}$ stating what each constant refers to as a list of pairs (constant, entity), as well as a list of true ground $\mathtt{:atoms}$ (where the closed-world assumption holds). The actual world is defined with the $\mathtt{:actual\_world}$ keyword.  The indistinguishability relation for agents is specified with the $\mathtt{:edges}$ keyword. For each agent, a set of pairs of worlds in the relation is listed, whose reflexive-transitive closure defines the full relation. Finally, the $\mathtt{:goal}$ keyword declares the goal.} 
\end{figure}

\subsubsection{Available Actions as Action Schemas and Domain Definition}
To finalize the specification of the external epistemic planning task $\mathsf{MM\_task}$, the available actions of reporting malfunctions rebooting machines must be defined. To this end, we use the action schemas depicted graphically in Figure \ref{fig:schemas_machine}. 
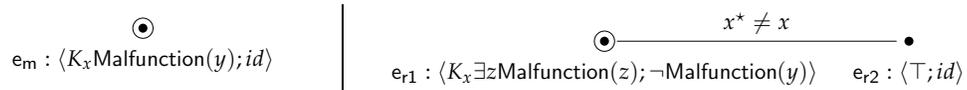
\begin{figure}[H]
\centering
\begin{minipage}{.34\textwidth}
  \centering
\scalebox{0.85}{
            \begin{tikzpicture}
              \tikzstyle{dnode}=[inner sep=1pt,outer sep=1pt,draw,circle,minimum width=9pt]
              \node[dnode, label={[align=right]below: ${\mathsf{e_m}}: \langle K_x \mathsf{Malfunction}(y) ;id\rangle$}] (w0) at (0,0) {};
              \fill (w0) circle [radius=2pt];
            \end{tikzpicture}
        }
\end{minipage}%
\vrule{}
\begin{minipage}{.56\textwidth}
  \centering
        \scalebox{0.85}{
            \begin{tikzpicture}
            \tikzstyle{dnode}=[inner sep=1pt,outer sep=1pt,draw,circle,minimum width=9pt]
            \tikzstyle{nnode}=[inner sep=1pt,outer sep=1pt,circle,minimum width=9pt]
              \node[dnode, label={[align=right]below: ${\mathsf{e_{r1}}}: \langle K_x \exists z \mathsf{Malfunction}(z);\neg\mathsf{Malfunction}(y)\rangle$}] (w0) at (0,0) {};
              \fill (w0) circle [radius=2pt];
              \node[nnode, right=130pt of w0, label={[align=right]below: ${\mathsf{e_{r2}}}: \langle \top ; id\rangle$}] (w1) at (0,0) {};
              \fill (w1) circle [radius=2pt];
              \draw[-] (w0) -- (w1) node[above,midway] {$x^\star \neq x$};
            \end{tikzpicture}
        }
\end{minipage}
\caption{\label{fig:schemas_machine} \textit{Left}: $\text{Malfunction}(x,y)$, the action schema for agent $x$ announcing that they know that $y$ is malfunctioning. \textit{Right}: $\text{Reboot}(x,y)$, the action schema for agent $x$ rebooting machine $y$, an action that $x$ is permitted to do only if $x$ knows that some malfunction is occurring, and which is done privately: other agents than $x$ are uncertain about its execution.}
\end{figure}

Continuing with the `PDDL-like syntax', Figure \ref{fig:pddl_Machine} describes the \textit{MachineMalfunction} domain. The figure is suggestive of a possible approach to standardizing domain definitions, but again, no formal specification of the syntax is attempted.

\begin{figure}[H]
\hrule\vspace{4pt}
\begin{minted}[fontsize=\footnotesize]{text}
;; machine-malfunction domain.
(define (domain machine-malfunction)
  (:types admin_agent_id - agent_id
          monitoring_agent_id - agent_id
          serial_number - machine_id
          machine_id
          agent_id)
  (:predicates (malfunction ?o - machine_id))
  (:action MALFUNCTION
    :agent ?s - monitoring_agent_id
    :parameters (?o - machine_id)
    (:actual_event em
      :precondition  (knows (?s) (malfunction ?o))
      :postcondition        (id))
    (:edge-conditions
      :em -- em (= ?x* ?x*)))
  (:action REBOOT
    :agent ?a - admin_agent_id
    :parameters (?n - serial_number)
    (:actual_event er1
      :precondition  (knows (?a) (exists (?x - object) malfunction (?x)))
      :postcondition        ((malfunction ?n if FALSE))
    (:event er2
      :precondition  (TRUE)
      :posttcondition  (id))
    (:edge-conditions
      :er1 -- er1 (= ?x* ?x*)
      :er2 -- er2 (= ?x* ?x*)
      :er1 -- er2 (not (= ?x* ?a)))
\end{minted}
\vspace{-4pt}\hrule
\caption{\label{fig:pddl_Machine} A domain definition for the MM domain in a `PDDL-style syntax'. Each action schema includes an $\mathtt{:agent}$ executing it as well as the schema's $\mathtt{:parameters}$ list. The possible events comprised in the action and their corresponding edge-conditions are listed next. The actual event is defined with the $\mathtt{:actual\_event}$ keyword, under which pre- and postconditions are listed. Postconditions are given as a list of statements of the form ``$\mathtt{ground\_atom \ if \ condition}$''. A similar $\mathtt{:event}$ keyword is used for non-actual events.  The keyword $\mathtt{edge\text{-}conditions}$ lists, for each pair of events, its edge-condition, via a line of the form $\mathtt{event1 \text{--} event1 (edge\text{-}condition)}$. Both the $\mathtt{MALFUNCTION}$ and the $\mathtt{REBOOT}$ schemas have epistemic preconditions, which can be schematized thanks to the variable-indexed epistemic operators $K_x$.}
\end{figure}

\subsubsection{Plan and Execution}
Given the initial state and actions of the external epistemic planning task $\mathsf{MM\_task}$, the goal $\varphi_g$ may be achieved by the monitor agent reporting that the machine colloquially called $box$ is malfunctioning, after which the administrator $\alpha_1$ may reboot that machine, by knowing its serial number $sn1$. I.e., the plan $\pi$ below achieves the goal, cf. Figure \ref{example:machine_updates}:
$$
\pi \coloneqq \text{Malfunction}(m1,box), \text{Reboot}(a1,sn1)
$$

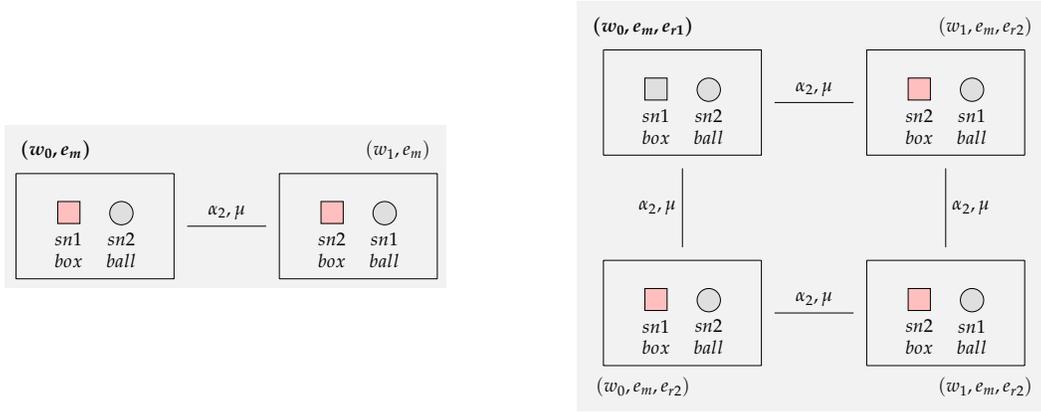
\begin{figure}[H]
\begin{minipage}{.5\textwidth}
  \centering
    \scalebox{0.7}{
            \begin{circuitikz}[framed]
            \draw
            (2.75,2.75) node[below] {$\boldsymbol{(w_0,e_m)}$}
            (9.25,2.75) node[below] {$(w_1,e_m)$}
            (2,2) -- (2,0)
            (3,1.6) node[below] {\squaredr{\phantom{a}}}
            (3,1) node[below] {$sn1$}
            (3,0.6) node[below] {$box$}
            (2,2) -- (5,2) 
            (2,0) -- (5,0)
            (5,2) -- (5,0)
            (4,1.6) node[below] {\circled{\phantom{a}}}
            (4,1) node[below] {$sn2$}
            (4,0.6) node[below] {$ball$}
            ;
            \draw
            (5.25,1) -- (6.75,1) node[midway, above]{$\alpha_2,\mu$}
            ;
            \begin{scope}[xshift=5cm]
            \draw
            (2,2) -- (2,0)
            (3,1.6) node[below] {\squaredr{\phantom{a}}}
            (3,1) node[below] {$sn2$}
            (3,0.6) node[below] {$box$}
            (2,2) -- (5,2) 
            (2,0) -- (5,0)
            (5,2) -- (5,0)
            (4,1.6) node[below] {\circled{\phantom{a}}}
            (4,1) node[below] {$sn1$}
            (4,0.6) node[below] {$ball$}
            ;
            \end{scope}
            \end{circuitikz}
    }
\end{minipage}%
\begin{minipage}{.5\textwidth}
  \centering
  \scalebox{0.7}{
            \begin{circuitikz}[framed]
            \draw
            (2.75,2.75) node[below] {$\boldsymbol{(w_0,e_m,e_{r1})}$}
            (9.25,2.75) node[below] {${(w_1,e_m,e_{r2})}$}
            (2,2) -- (2,0)
            (3,1.6) node[below] {\squared{\phantom{a}}}
            (3,1) node[below] {$sn1$}
            (3,0.6) node[below] {$box$}
            (2,2) -- (5,2) 
            (2,0) -- (5,0)
            (5,2) -- (5,0)
            (4,1.6) node[below] {\circled{\phantom{a}}}
            (4,1) node[below] {$sn2$}
            (4,0.6) node[below] {$ball$}
            ;
            \draw
            (5.25,1) -- (6.75,1) node[midway, above]{$\alpha_2,\mu$}
            ;
            \begin{scope}[xshift=5cm]
            \draw
            (2,2) -- (2,0)
            (3,1.6) node[below] {\squaredr{\phantom{a}}}
            (3,1) node[below] {$sn2$}
            (3,0.6) node[below] {$box$}
            (2,2) -- (5,2) 
            (2,0) -- (5,0)
            (5,2) -- (5,0)
            (4,1.6) node[below] {\circled{\phantom{a}}}
            (4,1) node[below] {$sn1$}
            (4,0.6) node[below] {$ball$}
            ;
            \end{scope}
            \begin{scope}[yshift=-4cm]
            \draw
            (2,2) -- (2,0)
            (3,1.6) node[below] {\squaredr{\phantom{a}}}
            (3,1) node[below] {$sn1$}
            (3,0.6) node[below] {$box$}
            (2,2) -- (5,2) 
            (2,0) -- (5,0)
            (5,2) -- (5,0)
            (4,1.6) node[below] {\circled{\phantom{a}}}
            (4,1) node[below] {$sn2$}
            (4,0.6) node[below] {$ball$}
            ;
            \draw
            (5.25,1) -- (6.75,1) node[midway, above]{$\alpha_2,\mu$}
            ;
            \end{scope}
            \begin{scope}[xshift=5cm, yshift=-4cm]
            \draw
            (2,2) -- (2,0)
            (3,1.6) node[below] {\squaredr{\phantom{a}}}
            (3,1) node[below] {$sn2$}
            (3,0.6) node[below] {$box$}
            (2,2) -- (5,2) 
            (2,0) -- (5,0)
            (5,2) -- (5,0)
            (4,1.6) node[below] {\circled{\phantom{a}}}
            (4,1) node[below] {$sn1$}
            (4,0.6) node[below] {$ball$}
            ;
            \end{scope}
            \draw
            (3.5,-0.25) -- (3.5,-1.75) node[midway, left]{$\alpha_2,\mu$}
            (8.5,-0.25) -- (8.5,-1.75) node[midway, right]{$\alpha_2,\mu$}
            (2.75,-4.1) node[below] {$(w_0,e_m,e_{r2})$}
            (9.25,-4.1) node[below] {${(w_1,e_m,e_{r2})}$}
            ;
            \end{circuitikz}
    }
\end{minipage}
    \caption{
    Executing the plan $\pi$ in $s_0$. \textit{Left}: The epistemic state $s_1 \coloneqq s_0\otimes \text{Malfunction}(m1,box)$ reached after $\mu$ reports that $box$ is malfunctioning. All agents learn that a machine is malfunctioning. However, $\alpha_2$ and $\mu$ are still unsure about the serial number of the malfunctioning robot, but $\alpha_1$ is not. \textit{Right}: The epistemic state $s_2 \coloneqq s_1\otimes \text{Reboot}(a1,box)$ reached after admin agent $\alpha_1$ reboots $sn1$. Since $\alpha_1$ has rebooted $sn1$ privately, $\alpha_2$ and $\mu$ still do not know that all machines are functioning. However, the goal is achieved: there is \textit{some} agent ($\alpha_1$) that knows that all machines are functioning.}
\label{example:machine_updates}
\end{figure}

\subsubsection{Knowing Who and Alternative Plans}
There are other plans than $\pi$ that would also solve the external epistemic planning task $\mathsf{MM\_task}$, but $\pi$ achieves the goal in the fewest number of steps.

The plan $\pi$ is kept short by making admin $\alpha_1$ act and finally witness the existence criteria in the goal $\varphi_g$. Admin $\alpha_1$ is well-suited to this purpose because $\alpha_1$ does not face uncertainty about the two names $box$ and $sn1$. For the both names, $\alpha_1$ \textit{knows who} the names refer to, which in turn entails that in $s_1$, $\alpha_1$ knows what machine to reboot.

Any plan where admin $\alpha_2$, instead of $\alpha_1$, acts, will necessarily be longer, because $\alpha_2$ does not know is uncertain about what serial number belong to which machine, and must therefore reboot \textit{both} machines. More specifically, then given $s_0$, any successful plan must start with the announcement Malfunction$(m1,box)$ to the effect that the admins know of a malfunction, required for them to execute reboots. From $s_0$, this results in $s_1$ of Figure \ref{example:machine_updates}. From there, both Reboot$(a2,sn1)$ and Reboot$(a2,sn2)$ must be performed before $\alpha_2$ knows that no machine is malfunctioning. Hence, the two shortest successful plans in which only $\alpha_2$ reboots are
\begin{align*}
\pi' \coloneqq \text{Malfunction}(m1,box), \text{Reboot}(a2,sn1), \text{Reboot}(a2,sn2)\\
\pi'' \coloneqq \text{Malfunction}(m1,box), \text{Reboot}(a2,sn2), \text{Reboot}(a2,sn1)    
\end{align*}

Hence, the potential uncertainty introduced by the non-rigidity of constants may be consequential for epistemic planning. In this simple example, the non-rigid serial numbers refers to objects, but as also agent terms may be non-rigid, the presented framework allows modeling situations in which e.g. an agent is uncertain about who a message must be delivered to, or planning in situations involving `code names', known only to a strict subset of agents.

\section{\label{sec:act.lang}Languages for Actions}

We define a language for reasoning about actions, denoted $\mathcal{L}_{AM}$. This language extends the basic language $\mathcal{L}$ with \textit{action modalities} with the form $[A,e]$, where $A$ is an action model and $e$ is an event from $A$. The language  $\mathcal{L}_{AM}$ has formulas of the form $[A,e]\varphi$, which are interpreted as: `after event $e$ of action $A$ occurs, $\varphi$ is true'. This language extension allows us to include formulas mentioning other actions in the pre- and postconditions of some actions, as well as in goal formulas. It is thus possible to define, e.g., a goal such as: ``Achieve a state in which it is impossible to perform an action that will result in $\varphi$''. With finitely many actions described by the models $\mathcal{A}=\{A_1,\dots,A_n\}$, such a formula would be $\bigwedge_{A\in\mathcal{A},e\in E^A} [A,e]\neg \varphi$. 

The grammar of $\mathcal{L}_{AM}$ is defined by double recursion, adapting a construction well known in the DEL literature (see, e.g., appendix H in \cite{sep-dynamic-epistemic} or \cite{ditmarsch2007}).

\begin{defn}
\label{Def. dynamic language}
Let $\mathcal{L}_0 = \mathcal{L}$, and let $AM_0$ be the set of pointed action models whose precondition formulas are all from $\mathcal{L}_0$. Define $\mathcal{L}_{k+1}$ and $AM_{k+1}$ as follows:
\begin{gather*}\tag{\(\mathcal{L}_{k+1}\)}
\varphi \Coloneqq r(t_{1},...,t_{\ell})\mid\neg\varphi\mid\varphi\wedge\varphi\mid K_{t}\varphi\mid\forall x\varphi \mid [A,e]\varphi\\
\end{gather*}
where $(A,e)\in AM_{k}$, and let $\mathtt{AM}_{k+1}$ be the set of pointed action models whose precondition formulas are all from $\mathcal{L}_{k+1}$. Lastly, define the \textbf{\textit{language}} $\mathcal{L}_{AM}$ and the set of \textbf{\textit{action models}} $AM$ as \[\mathcal{L}_{AM} \coloneqq \bigcup_{k\in\mathbb{N}}\mathcal{L}_{k}, \; AM \coloneqq \bigcup_{k\in\mathbb{N}}(AM_k)\] 
\end{defn}

As with the formulas from the static language $\mathcal{L}$, the formulas from $\mathcal{L}_{AM}$ are evaluated over epistemic models. 

\begin{defn} The satisfaction relation between epistemic models, assignments and formulas of $\mathcal{L}_{AM}$ is the smallest extension of $\vDash$ that satisfies:
 \[M,w \vDash_v [A,e]\varphi \text{ iff } M,w\vDash_v \mathsf{pre}(e) \text{ implies } M\otimes A, (w,e)\vDash_v \varphi \] 
\end{defn}

This extended satisfaction relation makes it possible to model-check conditions concerning actions. Given a pointed model $(M,w)$, we may want to know whether a formula $\varphi$ would hold after a sequence of pointed action models $(A_1,e_1),\dots, (A_n,e_n)$ has been executed.
This can of course be done by computing a sequence of product updates and checking whether $M\otimes A_1\otimes \dots \otimes A_n,(w,e_1,\dots,e_n)\vDash \varphi$. But, equivalently, we can check whether the corresponding formula holds at $(M,w)$, i.e., whether $M,w\vDash [A_1,e_1]\dots [A_n,e_n]\varphi$. If $\varphi$ is a goal formula and $(A_1,e_1),\dots,(A_n,e_n)$ is a plan, then model-checking such a formula corresponds to \textit{plan verification}. Section \ref{subsec:dynamic_metatheory} gives so-called reduction axioms for $\mathcal{L}_{AM}$ formulas, showing that any formula containing an action modality can be expressed as a formula in the basic epistemic language $\mathcal{L}$. Consequently, plan verification could be treated as a problem of model-checking formulas of $\mathcal{L}$ in an initial state $s_0 = (M,w)$. 

\section{Axiomatic Systems and Metatheory}\label{sec:metatheory}
This section presents axiom systems for both static and dynamic term-modal logic. Metatheoretical results include soundness and completeness, frame characterizations, and decidability results. All proofs may be found in Appendix \ref{A.proof.appendix}.
\subsection{\label{subsec:Normal-Term-Modal-Logic}Normal Term-Modal Logic}

\subsubsection{Axiom System}

\begin{table}
\begin{tabular}{lllll}
First-order principles &  &  & Modal and interaction principles & \tabularnewline
\cline{1-2} \cline{4-5} 
\noalign{\vskip3pt}
all propositional tautologies &  &  & $K_{t}(\varphi\rightarrow\psi)\rightarrow(K_{t}\varphi\rightarrow K_{t}\psi)$ & K\tabularnewline
\noalign{\vskip3pt}
$\forall x\varphi\rightarrow\varphi\left(y/x\right)$, for $y$ free
in $\varphi$ & UE &  & $\forall xK_{t}\varphi\rightarrow K_{t}\forall x\varphi$, for $x$
not occurring in $t$ & BF\tabularnewline
\noalign{\vskip3pt}
$t=t$, for $t\in\mathtt{T}$ & Id &  & $(x\neq y)\rightarrow K_{t}(x\neq y)$ & KNI\tabularnewline
\noalign{\vskip3pt}
$(x=y)\rightarrow\big(\varphi(x)\rightarrow\varphi(y)\big)$ & PS &  &  & \tabularnewline
\noalign{\vskip3pt}
$(c=c)\rightarrow\exists x(x=c)$ & $\exists$Id &  & Inference rules & \tabularnewline
\cline{4-5} 
\noalign{\vskip3pt}
$x\neq y$, if $\mathtt{t}(x)\neq\mathtt{t}(y)$ & DD &  & From $\varphi,\varphi\rightarrow\psi$, infer $\psi$ & MP\tabularnewline
\noalign{\vskip3pt}
 &  &  & From $\varphi$, infer $K_{t}\varphi$ & KG\tabularnewline
\noalign{\vskip3pt}
 &  &  & From $\varphi\rightarrow\psi$, infer $\varphi\rightarrow\forall x\psi$,
for $x$ not free in $\varphi$ & UG\tabularnewline
\noalign{\vskip3pt}
\end{tabular}

\caption{\label{tab:axioms}Axiom schemata for the minimal normal term-modal
logic $\mathsf{K}$.}
\end{table}
Table \ref{tab:axioms} contains the axioms and inference rules for the term-modal logic
$\mathsf{K}$. Some are common first-order axioms, like \textit{Universal
Elimination }(UE), Reflexivity of Identity (Id), and the \textit{Principle
of Substitution} (PS). In a modal logical context, PS also has a modal
feature: it is restricted to variables to allow for non-rigid constants.
If PS is assumed also for constants, $(a=b)\rightarrow(K_{t}\varphi(a)\rightarrow K_{t}\varphi(b))$
becomes a theorem, valid only for rigid constants. \textit{Existence
of Identicals} ($\exists$Id) is included to ensure that all constants
obtains an extension in the canonical models of Section \ref{subsec:Completeness};
\textit{Divided Domain} (DD) is included to enforce type-distinction
between variables logically rather than syntactically. The modal and
interaction principles \textit{Distribution} (K) and \textit{Knowledge
of Non-Identity }(KNI) are formulated as standard while the \textit{Barcan
Formula }(BF) has a restriction in the term-modal case; the Barcan Formula
ensures constant domains: its validity implies non-growing domains,
illustrated in the proof of soundness (Section \ref{A.static.soundness}),
and its converse implies non-shrinking domains (and is provable in
$\mathsf{K}$, cf. e.g. \cite[p. 245]{Hughes_&_Cresswell_New.Intro}).
Knowledge of Non-Identity reflects the rigidity of variables. The
inference rules \textit{Modus Ponens }(MP), \textit{Knowledge Generalization
}(KG) and \textit{Universal Generalization }(UG) contain no surprises.

Notice that nothing in the language or axioms of $\mathsf{K}$ specify
the number of agents in the system. The number of agents emerges as
a definable frame characteristic, cf. Section \ref{subsec:frame.char}.

\subsubsection{Normality}

In Section \ref{subsec:static_metatheory}, we formally state that
$\mathsf{K}$ is complete with respect to the class of all frames.
The axioms and inference rules sufficient for a complete system are
close to standard axiomatizations of first-order modal logic, cf.
e.g. \cite{Brauner&Ghilardi-FOML,Fagin_et_al._1995,Hughes_&_Cresswell_New.Intro}.
We take the close-to-standard format of the $\mathsf{K}$ axioms to
indicate the innocence of the term-modal extensions of the syntax
and semantics. This is further corroborated by the main result of
this section, the \textit{Canonical Class Theorem} \vpageref{thm:Canonical}.
In essence, the theorem shows that any closed extension of $\mathsf{K}$
is complete with respect to the class of its canonical models. The
result thus justifies the following definition:
\begin{defn}
A set of formulas $\Lambda\subseteq\mathcal{L}$ is called a \textbf{\textit{normal
term-modal logic}} if, and only if, $\Lambda$ contains all axioms
of Table \ref{tab:axioms} and is closed under the Table \ref{tab:axioms}'s
inference rules. The smallest normal term-modal logic is denoted $\mathsf{K}$.
\end{defn}

\subsubsection{\label{subsec:static_metatheory}Canonical Class Theorem and Completeness}

In ordinary modal logic, each normal modal logic gives rise to a unique
\textit{canonical model}. In a similar manner, each normal term-modal
logic $\Lambda$ gives rise to a \textit{class }of canonical models,
one for each $\Lambda$-maximal consistent set. Section \ref{A.proof.appendix}
contains the details of the construction, as well as the proof of
the following main theorem:
\begin{restatable}[Canonical Class Theorem]{thm}{Canonical}\label{thm:Canonical}
Any normal term-modal logic $\Lambda$ is strongly complete with respect to its canonical class.
\end{restatable}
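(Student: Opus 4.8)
The plan is to run a Henkin-style canonical-model construction, adapted to the term-modal and constant-domain setting. It suffices to establish the contrapositive form of strong completeness: every $\Lambda$-consistent set $\Sigma\subseteq\mathcal{L}$ is satisfied, at some world, in some model belonging to $\Lambda$'s canonical class. The first task is a Lindenbaum--Henkin lemma producing, from $\Sigma$, a $\Lambda$-maximal consistent set $\Gamma^{*}$ that is moreover \emph{witnessed}: whenever $\exists x\varphi\in\Gamma^{*}$ there is a variable $y$ of the appropriate type with $\varphi(x\mapsto y)\in\Gamma^{*}$. Since $\Sigma$ may already use every variable, I would first conservatively extend the signature by a fresh countably infinite supply of variables of each sort to serve as witnesses (harmless, since $\Lambda$ and $\mathcal{L}$ embed into the extended system), and then build $\Gamma^{*}$ by the usual stagewise process, using UG to check that adding each witness formula preserves consistency.

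Second, for each witnessed $\Lambda$-MCS $\Gamma$ I would define the canonical model $M^{\Gamma^{*}}=(D,W,R,I)$ attached to $\Gamma^{*}$. Because valuations are rigid, the domain is the set of $\sim$-classes of variables, where $x\sim y$ iff $(x=y)\in\Gamma^{*}$; by DD this splits into $D_{\mathtt{agt}}\dot{\cup}D_{\mathtt{obj}}$, both nonempty since each sort contains variables. The worlds $W$ are the witnessed $\Lambda$-MCSs containing exactly the variable-(in)equalities of $\Gamma^{*}$ --- this is well-behaved because KNI, together with the knowledge-of-identity principle that follows easily from PS and Id, makes variable-(in)equality ``known'', hence invariant along the relations. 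For an agent class $i=[x]_{\sim}$ put $\Gamma_{1}R_{i}\Gamma_{2}$ iff $\{\theta\colon K_{x}\theta\in\Gamma_{1}\}\subseteq\Gamma_{2}$; interpret each relation symbol $r$ by $([x_{1}]_{\sim},\dots,[x_{n}]_{\sim})\in I(r,\Gamma_{1})$ iff $r(x_{1},\dots,x_{n})\in\Gamma_{1}$; interpret each constant $c$ by the $\sim$-class of a variable $y$ with $(y=c)\in\Gamma_{1}$, which exists by $\exists$Id and witnessing; and interpret function symbols analogously as partial maps. This step carries a pile of routine but necessary checks: that $R_{i}$ is independent of the chosen representative of $i$ (via PS applied to a formula of the form $K_{z}\theta$, replacing only the index occurrence), that membership in $W$ is preserved along $R_{i}$-edges (using knowledge-of-identity and KNI), and that $I$ respects $\sim$ and is well-defined.

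Third comes the Truth Lemma: for every $\psi\in\mathcal{L}$, every $\Gamma\in W$ and the canonical valuation $v$ (with $v(x)=[x]_{\sim}$), $M^{\Gamma^{*}},\Gamma\vDash_{v}\psi$ iff $\psi\in\Gamma$. The proof is by induction on $\psi$; the Boolean cases are the usual maximal-consistency facts, and one needs a preliminary ``extension-of-terms'' sublemma matching $\llbracket t\rrbracket^{I,v}_{\Gamma}$ with the domain element that $t$ names in $\Gamma$ (minding the partiality of function interpretations). The quantifier case uses UE for one direction and the witness property for the other, together with a substitution lemma relating $M,\Gamma\vDash_{u}\psi$ to $\psi(x\mapsto y)\in\Gamma$ for the $x$-variant $u$ sending $x$ to $[y]_{\sim}$. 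The modal case $K_{t}\psi$ uses, for one direction, the definition of $R$, and for the other an Existence Lemma: if $K_{t}\psi\notin\Gamma$ and $t$ names $[x]_{\sim}$ in $\Gamma$, then $\{\neg\psi\}\cup\{\theta\colon K_{x}\theta\in\Gamma\}$ is $\Lambda$-consistent (otherwise the distribution axiom, KG and maximality would force $K_{x}\psi\in\Gamma$), it contains the variable-(in)equalities of $\Gamma^{*}$, and so it extends to a witnessed $\Lambda$-MCS $\Gamma'\in W$ with $\Gamma R_{[x]_{\sim}}\Gamma'$ and $\psi\notin\Gamma'$; one also checks the witnessing step in this extension does not disturb the identity structure. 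With the Truth Lemma in hand, $M^{\Gamma^{*}},\Gamma^{*}\vDash_{v}\Sigma$, and verifying that $M^{\Gamma^{*}}$ is a genuine model in the sense of Section~\ref{sec:state.rep} (nonempty domains, $I$ of the right type) places it in $\Lambda$'s canonical class and closes the argument.

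The step I expect to be the main obstacle is the atomic and modal part of the Truth Lemma under the term-modal twist: the accessibility relations are indexed by \emph{semantic} agents, whereas formulas carry \emph{syntactic} terms in modal-index position, and since constants are non-rigid one cannot in general substitute provable equals inside a modal index. Making the canonical $R$ representative-independent, correctly reading off the extension of constant and function-term indices, and pushing substitution through modal operators in the induction (while respecting the free-variable side conditions on UE, BF and UG) is precisely where the seemingly mild extra axioms --- $\exists$Id, the variable-restriction of PS, and KNI --- must carry the load. Keeping the domain genuinely constant across worlds is what forces the ``canonical class'' packaging (one model per MCS, worlds restricted to those sharing its rigid identity structure), and it requires the Henkin witnesses to be introduced so that witness variables neither collapse into, nor separate, previously distinct domain elements.
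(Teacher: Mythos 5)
Your proposal is correct and follows essentially the same route as the paper: a Lindenbaum--Henkin construction over a variable-extended signature, one canonical model per witnessed maximal consistent set with worlds restricted to those sharing its identity theory, representative-independence of the canonical relations via PS, an Existence Lemma relying on KNI and knowledge-of-identity, and a Truth Lemma feeding the standard satisfiability-implies-strong-completeness equivalence. The points you flag as delicate (non-rigid constants in modal indices, preservation of the identity structure along edges and under witnessing) are exactly the ones the paper's Uniformity and Witnessed Lemmas handle.
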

Mirroring the role of the Canonical Model Theorem of ordinary modal logic (see e.g. \cite{BlueModalLogic}), we obtain the following corollary to Theorem \ref{thm:Canonical}:
\begin{restatable}[Completeness]{cor}{Completeness}\label{Cor. Complete ALL-1}
The logic $\mathsf{K}$ is strongly complete with respect to the class of all frames $\boldsymbol{F}$.
\end{restatable}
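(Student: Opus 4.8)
The plan is to derive completeness of $\mathsf{K}$ with respect to $\boldsymbol{F}$ directly from the Canonical Class Theorem (Theorem \ref{thm:Canonical}), exactly as the Canonical Model Theorem yields completeness of propositional $\mathsf{K}$ in ordinary modal logic. Since $\mathsf{K}$ is by definition the smallest normal term-modal logic, Theorem \ref{thm:Canonical} tells us $\mathsf{K}$ is strongly complete with respect to its canonical class $\mathcal{C}_{\mathsf{K}}$, i.e., for every set $\Gamma \subseteq \mathcal{L}$ and every $\varphi \in \mathcal{L}$, if $\Gamma \vDash_{\mathcal{C}_{\mathsf{K}}} \varphi$ then $\Gamma \vdash_{\mathsf{K}} \varphi$. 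So it suffices to observe that each canonical model for $\mathsf{K}$ is built on a frame in $\boldsymbol{F}$, and hence that $\boldsymbol{F}$-validity is at least as strong as $\mathcal{C}_{\mathsf{K}}$-validity.

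Concretely, I would argue the contrapositive of strong completeness. Suppose $\Gamma \not\vdash_{\mathsf{K}} \varphi$. Then $\Gamma \cup \{\neg\varphi\}$ is $\mathsf{K}$-consistent, so by a Lindenbaum-style argument (carried out in Appendix \ref{A.proof.appendix} as part of the canonical construction) it extends to a $\mathsf{K}$-maximal consistent set $\Gamma^{+}$, which determines one of the canonical models $M^{\Gamma^{+}}$ in the canonical class, with a designated world $w^{+}$ such that $M^{\Gamma^{+}}, w^{+} \vDash_{v} \psi$ for every $\psi \in \Gamma^{+}$ (for the relevant canonical valuation $v$), and in particular $M^{\Gamma^{+}}, w^{+} \vDash_{v} \Gamma$ but $M^{\Gamma^{+}}, w^{+} \not\vDash_{v} \varphi$. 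The key remaining point is that the underlying frame of $M^{\Gamma^{+}}$ is a legitimate frame in the sense of Definition \ref{Def. Frame}: its domain splits into nonempty agent and object parts (nonemptiness of $D_{\mathtt{agt}}$ and $D_{\mathtt{obj}}$ is exactly what axioms $\exists$Id and DD, together with the countably-infinite supply of typed variables, guarantee in the canonical construction), its world set is nonempty (it contains $w^{+}$), and $R$ assigns an accessibility relation to each agent-domain element. Therefore this frame belongs to $\boldsymbol{F}$, and it witnesses $\Gamma \not\vDash_{\boldsymbol{F}} \varphi$.

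Turning this around: $\Gamma \vDash_{\boldsymbol{F}} \varphi$ implies $\Gamma \vDash_{\mathcal{C}_{\mathsf{K}}} \varphi$ (since $\mathcal{C}_{\mathsf{K}}$-models are $\boldsymbol{F}$-models), and the latter implies $\Gamma \vdash_{\mathsf{K}} \varphi$ by Theorem \ref{thm:Canonical}; soundness (Section \ref{A.static.soundness}) gives the converse, so $\mathsf{K}$ is strongly sound and complete with respect to $\boldsymbol{F}$. The only genuinely substantive step here is the one already discharged upstream in the canonical-model construction — namely that the canonical frame really does satisfy all of Definition \ref{Def. Frame}, especially the nonemptiness of both sorts of the domain — so in the body of this corollary's proof there is essentially nothing left to do beyond citing Theorem \ref{thm:Canonical} and noting the inclusion of model classes. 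I expect no obstacle at this level; any difficulty lives entirely in the appendix proof of the Canonical Class Theorem.
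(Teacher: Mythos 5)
Your proposal is correct and takes essentially the same route as the paper: the paper's proof is a one-liner observing that any $\mathsf{K}$-consistent $\Gamma$ is satisfied (at $\Gamma^{*}$ under the canonical valuation) on the frame of the canonical model $M_{(\mathsf{K},\Gamma^{*})}$, which lies in $\boldsymbol{F}$, and then strong completeness follows via Proposition \ref{prop.iff} / Theorem \ref{thm:Canonical}. Your contrapositive formulation and the remark that all substantive work (Lindenbaum, witnesses, Truth Lemma, well-formedness of the canonical frame) is discharged upstream is exactly the intended reading.
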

$\mathsf{K}$ is also sound with respect to the class of all frames.
Section \ref{A.static.soundness} contains the formal statement and
a proof sketch, with details given for the axiom K and the Barcan
Formula.

\subsubsection{\label{subsec:frame.char}Characterizing Frame Properties}

The completeness result of Corollary \ref{Cor. Complete ALL-1} may be extended to more specific frame classes. Table \ref{tab:char} contains an overview of axiom schemata and the frame conditions they characterize. For illustration, proofs for 4 and N are given in Section \ref{A.frame.char.}. From the Canonical Class Theorem and Table \ref{tab:char}, completeness results for standard logics like $\mathsf{KD45}$, $\mathsf{S4}$ and $\mathsf{S5}$ follow as corollaries.
\begin{table}[H]
\begin{centering}
\begin{tabular}{lllll}
Axiom &  &  &  & Frame condition\tabularnewline
\cline{1-3} \cline{5-5} 
$\forall x\big(K_{x}\varphi\rightarrow\varphi\big)$ &  & T &  & Reflexive\tabularnewline
$\forall x(\neg K_{x}\bot)$ &  & D &  & Serial\tabularnewline
$\forall x\big(K_{x}\varphi\rightarrow K_{x}K_{x}\varphi\big)$ &  & 4 &  & Transitive\tabularnewline
$\forall x\big(\neg K_{x}\varphi\rightarrow K_{x}\neg K_{x}\varphi\big)$ &  & 5 &  & Euclidean\tabularnewline
$\exists x_{1},...,x_{n}\left(\left(\bigwedge_{i\leq n}K_{x_{i}}\top\right)\wedge\left(\bigwedge_{i,j\leq n,i\neq j}x_{i}\neq x_{j}\right)\wedge\forall y\left(K_{y}\top\rightarrow\bigvee_{i\leq n}y=x_{i}\right)\right)$ &  & N &  & $|D_{\mathtt{agt}}|=n$\tabularnewline
$\exists x_{1},...,x_{m}\left(\left(\bigwedge_{i,j\leq m,i\neq j}x_{i}\neq x_{j}\right)\wedge\forall y\left(\bigvee_{i\leq m}y=x_{i}\right)\right)$ &  & M &  & $|D|=m$\tabularnewline
\end{tabular}
\par\end{centering}
\centering{}\caption{\label{tab:char}Term-modal axiom schemata and the frame conditions they characterize.}
\end{table}
The principles N and M are special to our term-modal treatment. N and M define domain sizes. Nothing in the language or axioms of $\mathsf{K}$ specify the number of agents in the system: as in first-order logic, the domain size is by default left unspecified. In ordinary epistemic logic, it is common to assume a fixed, finite index set of agents. The domain size principles N and M similarly fixes domain sizes: N fixes the agent domain to size $n$. It uses the $K_x \top$-expressions to ensure that the bound variables are all of type $\mathtt{agt}$. With the quantifications thus ranging only over agents, N specifies specifically the size of $D_{\mathtt{agt}}$. This may be compared to M, which does not put constraints on the type of the bound variables, thereby fixing only the size of the joint agent--object domain $D$. For details concerning N, see Proposition \ref{prop.N} on \pageref{prop.N}.

The principles T, D, 4 and 5 deviate from their ordinary forms by being quantified. In standard modal logic, the formula
\begin{equation}
K_{c}\varphi\rightarrow K_{c}K_{c}\varphi\label{eq:4_classic}
\end{equation}
characterizes the class of transitive frames. This is not true here, as the constant $c$ may be  \textit{non-rigid}:\footnote{The non-rigidity of constants is reflected in $\mathsf{K}$: \textit{Knowledge of Identity }is provable for variables, but not for constants. I.e., $\mathsf{K}$ proves $(x=y)\rightarrow K_{t}(x=y)$, but not $(a=b)\rightarrow K_{t}(a=b)$.} see Figure \ref{fig:4invalid} for a transitive model invalidating (\ref{eq:4_classic}). 
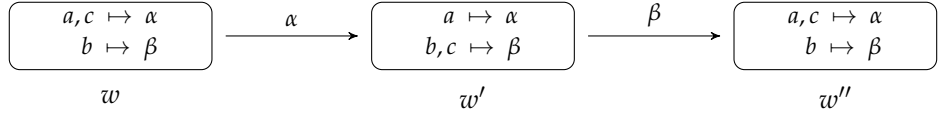
\begin{figure}

\begin{centering}
\scalebox{1.0}{
\begin{tikzpicture}[->,>=stealth',auto,shorten >=5pt,shorten <=5pt]

\node[world] (w) {{\small $a,c \mapsto \alpha$\\ \small$\phantom{a,}b \mapsto \beta$}};
\node[world,right=60pt of w] (ww) {{\small $\phantom{c,}a \mapsto \alpha$\\ \small$b,c \mapsto \beta$}};
\node[world,right=60pt of ww] (www) {{\small $a,c \mapsto \alpha$\\ \small$\phantom{c,}b \mapsto \beta$}};

\node[empty,below=0pt of w] (.) {$w$};
\node[empty,below=0pt of ww] (.) {$w'$};
\node[empty,below=0pt of www] (.) {$w''$};

\draw[->] (w) -- (ww) node[above,midway,xshift=0pt,yshift=0pt] {{\small $\alpha$}};
\draw[->] (ww) -- (www) node[above,midway,yshift=0pt,xshift=0pt] {{\small $\beta$}};
\end{tikzpicture}
}
\par\end{centering}
\caption{\label{fig:4invalid} A transitive model invalidating $K_{c}(b=c)\rightarrow K_{c}K_{c}(b=c)$ at $w$. With $D_{\mathtt{agt}}=\{\alpha,\beta\}$, all relations are transitive. The notation $a\protect\mapsto\alpha$ specifies that $\alpha$ is the extension of the constant $a$ in the given world. In $w$, it holds that $K_c (b = c)$ as $c\protect\mapsto\alpha$ in $w$ and $b,c\protect\mapsto\beta$ in $w'$ (in $w$, the knowledge that $(b = c)$ is held by agent $\alpha$, as $c\protect\mapsto\alpha$ in $w$). World $w'$ satisfies $K_c (b \neq c)$ as $c\protect\mapsto\beta$ in $w'$ and $b\protect\mapsto\beta, c\protect\mapsto\alpha$ in $w''$ (in $w'$, the knowledge that $(b \neq c)$ is held by $\beta$, as $c\protect\mapsto\beta$ in $w'$). As a consequence, $w'$ also satisfies $\neg K_c (b = c)$. Hence, $w$ does not satisfy $K_c K_c (b = c)$---that the agent named $c$ (i.e., $\alpha$) knows that the agent named $c$ (i.e., $\beta$) knows that $(b = c)$.}

\end{figure}
The invalidity arises as the extension of the $c$ is not fixed under scope of operators: in the consequent, the accessibility relation which the inner occurrence of $K_{c}$ quantifies over need not be the same as the accessibility relation of the outer. This makes the appeal to transitivity void.\footnote{In his 1962 \cite{TML_Hintikka1962}, Hintikka argues that $K_{c}\varphi\rightarrow K_{c}K_{c}\varphi$ intuitively is valid only if $c$ knows that she is $c$; i.e., that she knows who $c$ is. Hintikka argues that this is captured by $\exists xK_{c}(x=c)$, which makes $c$ locally rigid for the agent: $I(c,w')=I(c,w)$ for all $w'$ in $R_{I(c,w)}(w)$. Indeed, $\exists xK_{c}(x=c)\wedge K_{c}\varphi\rightarrow K_{c}K_{c}\varphi$ is valid on transitive frames.} The formulation in Table \ref{tab:char} avoids the non-rigidity problem, but does impose the criteria for all agents uniformly.

\subsubsection{Heterogeneous Agents}

Though treating all agents uniformly is common in epistemic logic, one may desire heterogeneous agents. With the given setup, we do not believe this can be done at the level of frames. On the level of models, one option to this end is to attribute epistemic criteria to subgroups using predicates; a second is to introduce individual names. In either case, one may desire the defining criterion to be rigid. However, full rigidity is not definable in general as models may be disconnected. \textit{Local rigidity}\textemdash invariance of interpretation over connected components\textemdash is definable by formulas of the forms 
\begin{align}
\forall x(r(x)\leftrightarrow\forall yK_{y}r(x))\label{eq:pseudo.rigid.P}\\
\exists x((x=a)\wedge\forall yK_{y}(x=a))\label{eq:pseudo.rigid.a}
\end{align}
The validity of (\ref{eq:pseudo.rigid.P}) and (\ref{eq:pseudo.rigid.a}) characterize features of interpretations: (\ref{eq:pseudo.rigid.P}) ((\ref{eq:pseudo.rigid.a}), resp.) is valid in a model $M=(D,W,R,I)$
iff for all $w,w'\in W$, $(w,w')\in R_{\alpha}$ for some $\alpha\in D_{\mathtt{agt}}$
implies $I(r,w)=I(r,w')$ ($I(a,w)=I(a,w')$, resp.). In conjunction
with formulas of the forms 
\begin{align}
\forall x(r(x)\rightarrow(K_{x}\varphi\rightarrow K_{x}K_{x}\varphi))\label{eq:P->4}\\
K_{a}\varphi\rightarrow K_{a}K_{a}\varphi
\end{align}
one may obtain some individuated control over relation properties.

\subsubsection{Decidability}

Let $\mathsf{K}_{n}$ and $\mathsf{K}_{n/m}$ be the smallest normal extensions of $\mathsf{K}$ with, respectively, the domain size axiom N, and both domain size axioms N and M under the proviso that $m>n$. $\mathsf{K}_{n}$ and $\mathsf{K}_{n/m}$ are then sound and complete with respect to, respectively, the class of all frames with exactly $n$ agents, and the class of all frames with exactly $n$ agents and exactly $m-n$ objects. These finite domain properties are used in the proof of items 1. and 2. of the below proposition, shown in Section \ref{A.decidability}. Decidability results from the literature are discussed in Section \ref{subsec:TML.litt.}.
\begin{prop}
Let $\mathsf{K}_{n/m}$, $\mathsf{K}_{n}$ and $\mathsf{K}$ be given
in $\mathcal{L}$, based on the signature $\Sigma$. Let $\mathcal{L}_{\mathtt{agt}}\subseteq\mathcal{L}$
contain all formulas containing only agent-terms, $t\in\mathtt{t}^{-1}(\mathtt{agt})$.
\begin{enumerate}
\item For all $\varphi\in\mathcal{L}$, it is decidable whether $\vdash_{\mathsf{K}_{n/m}}\varphi$ or not.
\item a) For all $\varphi\in\mathcal{L}_{\mathtt{agt}}$, it is decidable whether $\vdash_{\mathsf{K}_{n}}\varphi$ or not. b) In general, it is undecidable.
\item In general, it is undecidable whether $\vdash_{\mathsf{K}}\varphi$
or not.
\end{enumerate}
\end{prop}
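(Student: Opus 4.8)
\emph{Overview.} The plan is to treat the three items separately. For (1) and (2a) I would establish a \emph{bounded model property} and read off a decision procedure from it; for (2b) and (3) I would embed the (undecidable) classical predicate calculus into the modality-free, object-only fragment. Throughout I use the soundness/completeness facts recorded just above the proposition: $\mathsf{K}$ is sound and complete for the class $\boldsymbol{F}$ of all frames, $\mathsf{K}_n$ for the class of frames with exactly $n$ agents, and $\mathsf{K}_{n/m}$ for the class $\boldsymbol{F}_{(n,m-n)}$ of frames with exactly $n$ agents and exactly $m-n$ objects.

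\emph{Item 1.} By soundness and completeness, $\vdash_{\mathsf{K}_{n/m}}\varphi$ iff $\varphi$ is true in every model over a frame in $\boldsymbol{F}_{(n,m-n)}$. Only the finitely many non-logical symbols occurring in $\varphi$ matter, so fix a finite sub-signature. Since in every such model the domain $D$ is a fixed finite set of size $m$ (split into $n$ agents and $m-n$ objects), a filtration argument yields a bounded model property: if $\neg\varphi$ is satisfiable in a model over $\boldsymbol{F}_{(n,m-n)}$, then it is satisfiable in one with at most $2^{|\Theta|}$ worlds, where $\Theta$ is a suitable finite set containing $\varphi$, closed under subformulas and single negations, rich enough to record --- via atoms built from fresh rigid names $\underline{d}$ for the elements $d\in D$ --- the interpretation on $D$ of the relevant relation, function and constant symbols, and rich enough to contain all quantifier instances over $D$; here $|\Theta|$ is computable from $|\varphi|$, $n$ and $m$. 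Up to isomorphism there are only finitely many models over the finite sub-signature with at most $2^{|\Theta|}$ worlds, and model-checking $\varphi$ in each is decidable, so testing whether all of them satisfy $\varphi$ decides $\vdash_{\mathsf{K}_{n/m}}\varphi$. (Equivalently, one may compute a translation of $\varphi$ into propositional multimodal logic with $n$ box operators --- ground relation and function facts and constant denotations become propositional atoms, quantifiers become finite conjunctions and disjunctions over $D$, and the finite-domain constraints are relativised to the modal depth of $\varphi$ --- and invoke the decidability of that logic.)

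\emph{Item 2(a).} If $\varphi\in\mathcal{L}_{\mathtt{agt}}$ then no object variable, object relation symbol, object constant or object-valued function occurs in $\varphi$, so the truth of $\varphi$ at a pointed model depends only on the agent domain and the interpretation of the agent-typed symbols; the object domain is immaterial and may be collapsed to a singleton. Hence, for such $\varphi$, validity over the class of $n$-agent frames coincides with validity over $\boldsymbol{F}_{(n,1)}$, and therefore $\vdash_{\mathsf{K}_n}\varphi$ iff $\vdash_{\mathsf{K}_{n/(n+1)}}\varphi$. The right-hand side is decidable by Item 1 (note $n+1>n$). \emph{Item 2(b)} is proved together with Item 3.

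\emph{Items 2(b) and 3.} Both embed classical first-order validity --- undecidable already for formulas over a single relation symbol of arity at least two, which is the interesting case for the ambient $\Sigma$ --- into the modality-free fragment. Fix a classical first-order formula $\varphi$ over one binary relation $R$ and read it as a formula $\widehat{\varphi}\in\mathcal{L}$ by typing $R$ as $(\mathtt{obj},\mathtt{obj})$ and all variables as object variables; the map $\varphi\mapsto\widehat{\varphi}$ is computable, and $\widehat{\varphi}$ is modality-free and mentions no agents. I claim $\varphi$ is classically valid iff $\vdash_{\mathsf{K}}\widehat{\varphi}$, and likewise iff $\vdash_{\mathsf{K}_n}\widehat{\varphi}$. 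Indeed, from a classical countermodel $(A,R^A,g)$ for $\varphi$ with $A$ non-empty one builds a model with a single world, empty accessibility relations, any non-empty agent domain (a singleton for $\mathsf{K}$, an $n$-element set for $\mathsf{K}_n$), object domain $A$, $I(R,\cdot)=R^A$ and all remaining (irrelevant) symbols interpreted arbitrarily subject to the model conditions; a trivial induction shows $\widehat{\varphi}$ fails there, so by completeness $\not\vdash_{\mathsf{K}}\widehat{\varphi}$ (and $\not\vdash_{\mathsf{K}_n}\widehat{\varphi}$). Conversely, any pointed model refuting $\widehat{\varphi}$ restricts, at its designated world, to a classical structure on the non-empty set $D_{\mathtt{obj}}$ refuting $\varphi$, since $\widehat{\varphi}$ is modality-free and object-only; with soundness this gives the other direction. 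As classical validity is undecidable, so are $\vdash_{\mathsf{K}}$ and, for arbitrary $\varphi\in\mathcal{L}$, $\vdash_{\mathsf{K}_n}$.

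\emph{Main obstacle.} The delicate step is the bounded model property underlying Items 1 and 2(a): one must verify that the filtrated structure is a genuine model of the first-order modal language --- that relation extensions, partial function graphs and, crucially, the \emph{non-rigid} constant denotations are well defined on the equivalence classes --- and that the filtration lemma survives quantifiers and the operators $K_t$ whose index term $t$ may denote different agents in different worlds (this is precisely why $\Theta$ must pin down the denotations of the index terms). The undecidability items, by contrast, are routine once the soundness and completeness theorems are in hand.
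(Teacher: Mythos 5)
Your proposal is correct, and its overall decomposition coincides with the paper's: item 1 by exploiting the fixed finite domain of size $m$, item 2(a) by the observation that an agent-only formula is a $\mathsf{K}_n$-theorem iff it is a $\mathsf{K}_{n/(n+1)}$-theorem (the object domain being immaterial, so collapsible to a singleton), and items 2(b) and 3 by noting that the object domain carries unrestricted classical first-order logic, whose validity problem is undecidable; your worked-out embedding $\varphi\mapsto\widehat{\varphi}$ with single-world, empty-relation countermodels is just a detailed version of the paper's one-line remark. Where you genuinely diverge is the technical device for item 1: the paper argues directly that, with $D$ fixed, only finitely many extensions of the symbols occurring in $\varphi$ are possible, so there are finitely many $\varphi$-relevant worlds, and then bounds the pointed models to be checked by the modal depth of $\varphi$ and bisimulation, yielding a finite enumeration; you instead prove a bounded model property by filtration through a closure set $\Theta$ enriched with rigid names for the $m$ domain elements (or, alternatively, translate into propositional multimodal logic with $n$ boxes by grounding quantifiers, constant denotations and function graphs over $D$). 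Both routes work; the paper's enumeration is lighter and sidesteps exactly the delicacies you flag (well-definedness of non-rigid constant and partial-function interpretations on filtration classes, and the behaviour of $K_t$ when the index term's denotation varies), whereas your route, once those checks are done, yields explicit computable world-count bounds and a reduction to a well-known decidable propositional logic -- information the paper's argument does not make explicit.
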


\subsection{Dynamic Term-Modal Logic\label{sec:dyn.TML}}
\subsubsection{Axiom System}\label{subsec:red.ax}
Table \ref{tab:axioms_dyn} contains the axioms and inference rules for the dynamic term-modal logic $\mathsf{AM}$. In Section \ref{subsec:dynamic_metatheory}, we formally state that $\mathsf{K} + \mathsf{AM}$ is sound and complete with respect to the class of all frames. This completeness result may be extended to more specific frame classes, as was the case with $\mathsf{K}$ (see Section \ref{subsec:frame.char}). The completeness proof for $\mathsf{K} + \mathsf{AM}$ is by translation, a well-known approach in DEL \cite{BaltagBMS_1998,sep-dynamic-epistemic,ditmarsch2007,plaza1989}. The axioms in $\mathsf{AM}$ are so-called \textit{reduction axioms}, which enable the translation of formulas with action modalities into provably equivalent ones without any action modalities. Then completeness follows from the known completeness of the static logic $\mathsf{K}$. For a detailed description of the reduction strategy to completeness, see e.g. \cite{ditmarsch2007}.

\begin{table}[H]
\centering
\begin{tabular}{llllll}
\noalign{\vskip3pt}
\multicolumn{4}{l}{Reduction axioms} & 
 & \tabularnewline
\hline 
\noalign{\vskip3pt}
\multicolumn{4}{l}{$[A,e] r(t_{1},...,t_{n}) \leftrightarrow (\mathsf{pre}(e)\to \mathsf{post}(e)(r(t_{1},...,t_{n})))$} & 
Action and atom & \tabularnewline
\multicolumn{4}{l}{$[A,e]\neg \varphi \leftrightarrow (\mathsf{pre}(e) \to \neg [A,e]\varphi)$} & 
Action and negation & \tabularnewline
\multicolumn{4}{l}{$[A,e](\varphi \wedge \psi) \leftrightarrow (([A,e]\varphi) \wedge ([A,e]\psi))$} & 
Action and conjunction & \tabularnewline
\multicolumn{4}{l}{$[A,e]K_{t}\varphi \leftrightarrow \bigwedge_{e'\in E} (Q(e,e')[{x^\star}\mapsto t] \to K_t [A,e']\varphi)$} & 
Action and knowledge & \tabularnewline
\multicolumn{4}{l}{$[A,e]\forall x \varphi \leftrightarrow (\mathsf{pre}(e) \to \forall x [A,e]\varphi)$} & 
Action and quantification (Dynamic Barcan) & \tabularnewline
\multicolumn{4}{l}{$[A,e][A',e']\varphi \leftrightarrow [(A,e)\circ (A',e')]\varphi$} & 
Action composition & \tabularnewline
\noalign{\vskip3pt}
\multicolumn{4}{l}{Inference rules} & 
 & \tabularnewline
\hline 
\noalign{\vskip3pt}
\multicolumn{4}{l}{$\text{From } \varphi, \text{infer } [A,e]\varphi$}  & 
Action necessitation
\end{tabular}
\caption{\label{tab:axioms_dyn} Axiom and rule schemata for the system $\mathsf{AM}$.
} \end{table}

The reduction axioms in $\mathsf{AM}$ are similar to those used in logics for epistemic actions, introduced by \cite{BaltagBMS_1998}. Naturally, as dynamic term-modal logic is first-order, there are reduction axioms for formulas involving quantifiers. Moreover, the axiom for formulas with the knowledge operator is non-standard. Unlike standard action models, the ones presented here are edge-conditioned and use variable substitutions, which require some modifications. A more detailed comparison of these axioms and standard ones is provided in Section \ref{subsec:DEL.litt.}. The \textit{Action composition} axiom appeals to action models of the form $(A,e)\circ (A',e')$. This notation refers to the \textit{composition} of $(A,e)$ and $(A',e')$, defined following \cite{ditmarsch_kooi_ontic}, but adapted to accommodate edge-conditions and first-order atoms:

\begin{defn} \label{Def. Action composition} Let $A_1=(E_1,Q_1,\mathsf{pre}_1,\mathsf{post}_1)$ and $A=(E_2,Q_2,\mathsf{pre}_2,\mathsf{post}_2)$  be given. The \textbf{\textit{composition}} of $A_1$ and $A_2$ is the action model $A_1\circ A_2 = (E,Q,\mathsf{pre},\mathsf{post})$ where 
\begin{enumerate}
\item $E= E_1\times E_2$
\item $Q((e_1,f_1),(e_2,f_2)) =  Q_1(e_1,f_1) \wedge [A_1,e_1]Q_2(e_2,f_2)$.
\item $\mathsf{pre}(e_1,e_2) = \pre(e_1) \wedge [A_1, e_1]\pre(e_2)$,
\item $\dom(\mathsf{post}((e_1,e_2)))=\dom(\mathsf{post}_1(e_1))\cup \dom(\mathsf{post}_2(e_2))$ and if $r(t_1,\dots, t_k)\in \dom(\mathsf{post}((e_1,e_2)))$, then
\[\mathsf{post}((e_1,e_2)(r(t_1,\dots, t_k)) = \begin{cases} 
\mathsf{post}_1(e_1)(r(t_1,\dots, t_k)) & \text{if } r(t_1,\dots, t_k)\not\in \dom(\mathsf{post}_2(e_2))\\
[A,e]\mathsf{post}_2(e_2)(r(t_1,\dots, t_k)) & \text{otherwise}
\end{cases}
\]
\end{enumerate}

\end{defn}

\subsubsection{\label{subsec:dynamic_metatheory} Soundness and Completeness via Reduction Axioms} 

As anticipated in Section \ref{subsec:red.ax},  $\mathsf{K+AM}$ is sound and complete with respect to the class of all models. 

\begin{prop}[Soundness of $\mathsf{K+AM}$] \label{prop:sound_dyn} $\mathsf{K+AM}$ is sound with respect to the class of all models. 
\end{prop}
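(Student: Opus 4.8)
The plan is to prove soundness in the standard way: verify that every axiom of $\mathsf{K+AM}$ is valid on the class of all models and that every inference rule preserves validity, then conclude by induction on the length of derivations. For the $\mathsf{K}$-fragment nothing new is needed beyond the soundness of $\mathsf{K}$ already recorded in Section~\ref{A.static.soundness}: the truth conditions for atoms, Booleans, $K_t$ and $\forall x$ are literally the clauses of Definition~\ref{def:satisfaction}, unchanged in the extended language $\mathcal{L}_{AM}$, so those arguments transfer verbatim. Hence the real content is the action-necessitation rule and the six reduction axioms of Table~\ref{tab:axioms_dyn}.

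Action necessitation is immediate: if $\varphi$ is valid, then for any $M,w,v$ with $(w,e)\in W'$ we have $M\otimes A,(w,e)\vDash_v\varphi$, so a fortiori $M,w\vDash_v\mathsf{pre}(e)$ implies $M\otimes A,(w,e)\vDash_v\varphi$, i.e.\ $M,w\vDash_v[A,e]\varphi$. For the reduction axioms I would fix an arbitrary $(M,w)$ and valuation $v$, unfold the truth condition of the left-hand side using the $[A,e]$-clause together with the product-update construction of Definition~\ref{Def. Product update}, and check it coincides with the right-hand side. The \emph{atom} axiom uses that $I'$ agrees with $I$ on constants and function symbols — so term denotations are preserved across the update — together with the definition of $I'(r,(w,e))$ via $r^{+}(w)$ and $r^{-}(w)$, which gives $(\llbracket t_1\rrbracket^{I,v}_w,\dots)\in I'(r,(w,e))$ exactly when $M,w\vDash_v\post(e)(r(t_1,\dots))$, modulo the $\pre(e)$-guard. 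The \emph{negation} and \emph{conjunction} cases are pure propositional bookkeeping with the $\pre(e)$-guard, and the \emph{quantifier} (Dynamic Barcan) case uses $D'=D$ (so the quantifier's range is the same before and after) and that $\pre(e)$ has no free variables (so $W'$ does not depend on the chosen $x$-variant).

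The two substantive cases are \emph{Action and knowledge} and \emph{Action composition}. For the knowledge axiom the key extra ingredient, absent in propositional DEL, is a substitution lemma for the distinguished variable: for a term $t$ of type $\mathtt{agt}$ and any $\mathcal{L}_{AM}$-formula $\psi$ without capture, $M,w\vDash_v\psi[x^\star\mapsto t]$ iff $M,w\vDash_{v[x^\star\mapsto\llbracket t\rrbracket^{I,v}_w]}\psi$. Granting this, unfold $M,w\vDash_v[A,e]K_t\varphi$ to ``$\pre(e)$ implies: for all $(w',e')$ with $(w,e)R'_{\llbracket t\rrbracket^{I,v}_w}(w',e')$, $M\otimes A,(w',e')\vDash_v\varphi$''; rewrite the updated accessibility via clause~3 of Definition~\ref{Def. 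Product update} as ``$wR_{\llbracket t\rrbracket^{I,v}_w}w'$, $M,w\vDash_{v[x^\star\mapsto\llbracket t\rrbracket^{I,v}_w]}Q(e,e')$, and $M,w'\vDash_v\pre(e')$'', then use the substitution lemma to turn $M,w\vDash_{v[x^\star\mapsto\llbracket t\rrbracket^{I,v}_w]}Q(e,e')$ into $M,w\vDash_v Q(e,e')[x^\star\mapsto t]$ and re-package the $\pre(e')$-clause inside $[A,e']\varphi$; what remains is the right-hand side of the axiom. For \emph{Action composition} I would first prove the auxiliary lemma that the correspondence $((w,e),e')\mapsto(w,(e,e'))$ identifies $(M\otimes A)\otimes A'$ with $M\otimes(A\circ A')$ in a satisfaction-preserving way — in particular that $\pre(e)\wedge[A,e]\pre(e')$ holds at $(M,w)$ iff $(w,e)\in W^{M\otimes A}$ and $((w,e),e')\in W^{(M\otimes A)\otimes A'}$, and that the composed $Q$ and $\post$ of Definition~\ref{Def. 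Action composition} make the accessibility relations and valuations line up under this correspondence — and then read the axiom off through this identification.

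I expect the composition axiom to be the main obstacle: it is the only reduction axiom for which one must reconcile two genuinely different action-model constructions rather than unwind a single product update, and the underlying identification lemma has to be set up carefully along the $\mathcal{L}_k/AM_k$ stratification of Definition~\ref{Def. dynamic language}, so that the occurrences of $[A,e]$ inside the composed preconditions, edge-conditions and postconditions are well-formed and carry the intended meaning; the bookkeeping with nested $\pre/[A,e]$ guards is where the calculation is heaviest. The $x^\star$-substitution lemma is the other genuinely new ingredient relative to the classical DEL soundness argument, but it is routine. With all axioms valid and all rules validity-preserving, soundness then follows by the usual induction on derivations.
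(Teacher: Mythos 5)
Your proposal is correct and takes essentially the same route as the paper: the paper itself establishes Proposition~\ref{prop:sound_dyn} "in the standard way, by showing the semantic validity of the reduction axioms and inference rules," and omits the details as straightforward. Your outline is precisely that argument, only spelled out further (the $x^\star$-substitution lemma for the knowledge axiom and the identification of $(M\otimes A)\otimes A'$ with $M\otimes(A\circ A')$ for composition), so it supplies detail the paper leaves implicit rather than diverging from it.
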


The soundness of $\mathsf{K+AM}$ (Proposition  \ref{prop:sound_dyn}) is established in the standard way, by showing the semantic validity of the reduction axioms and inference rules. The proof is straightforward and therefore omitted.

Completeness follows as a corollary from a number of lemmas that are presented in Section \ref{subsec:comp_dyn}. 

\begin{cor}[Completeness of $\mathsf{K+AM}$] \label{cor:comp_dyn} $\mathsf{K+AM}$ is complete with respect to the class of all models. Moreover, any extension of $\mathsf{K+AM}$ obtained by adding axioms characterizing frame conditions is complete with respect to the corresponding class of models. 
\end{cor}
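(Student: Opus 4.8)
The plan is to prove Corollary \ref{cor:comp_dyn} by the standard DEL \emph{reduction} method: show that every formula of $\mathcal{L}_{AM}$ is provably equivalent in $\mathsf{K+AM}$ to an action-modality-free formula of $\mathcal{L}$, and then hand the work to completeness of the static logic. Concretely, I would first define a translation $t\colon \mathcal{L}_{AM}\to\mathcal{L}$ that repeatedly rewrites an innermost action modality according to the reduction axioms of Table \ref{tab:axioms_dyn}: $[A,e]$ is distributed over $\neg$, $\wedge$, $\forall x$, and $K_t$ (the last via the \emph{Action and knowledge} axiom, which also instantiates the edge conditions, substituting $t$ for $x^\star$), absorbed on atoms via \emph{Action and atom}, and a stacked pair $[A,e][A',e']$ is collapsed to the single modality $[(A,e)\circ(A',e')]$ of Definition \ref{Def. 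Action composition}. Each rewriting step instantiates a $\mathsf{K+AM}$-theorem, and provable equivalence is a congruence in $\mathsf{K+AM}$ (using $K$, KG, and propositional reasoning), so once termination of the rewriting is established one obtains $\vdash_{\mathsf{K+AM}}\varphi\leftrightarrow t(\varphi)$ for all $\varphi\in\mathcal{L}_{AM}$, with $t(\varphi)\in\mathcal{L}$.

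The delicate point is termination of $t$, because the edge conditions, preconditions, and postconditions appearing on the right-hand sides of the reduction axioms are themselves $\mathcal{L}_{AM}$-formulas that may embed action modalities --- and, worst of all, the composed action $(A,e)\circ(A',e')$ has preconditions $\pre(e_1)\wedge[A_1,e_1]\pre(e_2)$, edge conditions $Q_1(e_1,f_1)\wedge[A_1,e_1]Q_2(e_2,f_2)$, and postconditions $[A_1,e_1]\post_2(e_2)(\cdot)$ that reintroduce modalities inside the new action model. I would handle this with a complexity measure $c$ that counts only logical structure (so that the substitution $[x^\star\mapsto t]$ does not change $c$) and that assigns $[A,e]\varphi$ a value of the form $w(A)\cdot c(\varphi)$, where $w(A)$ is a multiplicative weight chosen large relative to the number of events $|E^A|$ and to the $c$-values of the finitely many pre-, post-, and edge-condition formulas of $A$; one then verifies that pushing $[A,e]$ inward strictly decreases $c$ in each of the six cases, the binding constraints being the \emph{Action and knowledge} case (an $|E^A|$-fold conjunction of implications whose antecedents have $c$-value bounded by $w(A)$) and the composition case (where one checks that $w(A_1\circ A_2)$ stays below the product $w(A_1)\cdot w(A_2)$, which holds because the composed conditions' $c$-values are at most roughly $w(A_1)\cdot c(A_2) + O(1)$). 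Well-definedness of $c$ on all of $\mathcal{L}_{AM}$ --- hence of $w$ --- is by induction on the stratification $\mathcal{L}_k$, $AM_k$ of Definition \ref{Def. dynamic language}, since each $AM_{k+1}$ is built over the strictly lower levels; no variable-capture arises because preconditions are closed and edge conditions have only $x^\star$ free.

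With the translation in hand, completeness follows quickly. If $\models\varphi$ over all models, then by soundness of $\mathsf{K+AM}$ (Proposition \ref{prop:sound_dyn}) together with $\vdash_{\mathsf{K+AM}}\varphi\leftrightarrow t(\varphi)$ we get $\models t(\varphi)$; as $t(\varphi)\in\mathcal{L}$, completeness of $\mathsf{K}$ (Corollary \ref{Cor. Complete ALL-1}) gives $\vdash_{\mathsf{K}}t(\varphi)$, hence $\vdash_{\mathsf{K+AM}}t(\varphi)$, hence $\vdash_{\mathsf{K+AM}}\varphi$. For the ``moreover'' clause, let $X$ be a set of frame-characterizing axioms from Table \ref{tab:char} and $\mathbf{C}$ the corresponding frame class; by the Canonical Class Theorem (Theorem \ref{thm:Canonical}) and the frame characterizations, $\mathsf{K}+X$ is complete with respect to $\mathbf{C}$. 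The reduction axioms of $\mathsf{AM}$ are valid on \emph{all} models (the soundness argument for $\mathsf{AM}$ never invokes a frame property), hence valid on $\mathbf{C}$; so the same translation argument runs verbatim with $\mathbf{C}$ and $\mathsf{K}+X$ in place of the class of all frames and $\mathsf{K}$, yielding completeness of $\mathsf{K}+X+\mathsf{AM}$ with respect to $\mathbf{C}$.

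I expect the main obstacle to be exactly the termination of $t$: engineering a single complexity measure that simultaneously decreases under all six reduction axioms, in particular under the composition axiom, whose output reinserts action modalities into the pre-, post-, and edge-conditions of the newly formed action model. This is where the multiplicative weight and the layered construction of $\mathcal{L}_{AM}$ must be combined with care; the remaining ingredients --- congruence of provable equivalence under the reduction axioms, the bridge via soundness and static completeness, and the transfer to frame subclasses --- are routine.
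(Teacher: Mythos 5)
Your proposal follows essentially the same route as the paper: a reduction-axiom translation into $\mathcal{L}$, termination secured by a complexity measure of exactly the multiplicative form the paper uses ($c([A,e]\varphi)=(4+c(A))\cdot c(\varphi)$ with $c(A)$ bounding the pre-, post- and edge-condition complexities), provable equivalence by induction on that measure, and then completeness via soundness plus static completeness of $\mathsf{K}$, with the frame-class extensions handled by the same argument. No substantive differences or gaps.
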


\section{Related Work}\label{sec:related.work}

\subsection{Epistemic Planning}\label{subsec:rel.work.ep.plan}
Several articles on multi-agent epistemic planning have appeared recently. The existing work can be organised along the following categories: modeling of epistemic planning, tractability and complexity, and implementation and applications.

On the modeling side, multiple articles have presented formalisms for multi-agent epistemic planning based on DEL \cite{andersen2012conditional, bolanderbirkegaard2011, lowe2011planning, yu2013multi}. These models are very expressive, capturing several key aspects of multi-agent epistemic planning. These aspects include: epistemic actions and goals, higher-order knowledge and belief, partial observability, etc. A thorough comparison of the present framework with existing DEL formalisms is found in Section \ref{subsec:DEL.litt.}. 

The rich expressivity of DEL comes at a cost, as planning problems specified in DEL are in general computationally difficult to solve (more on this below). This has partly motivated the introduction of simpler formalisms for epistemic planning. Some of these formalisms build on classical planning. The model in  \cite{petrick2004extending} extends STRIPS to allow knowledge declarations in preconditions and postconditions. The framework is however restricted to single-agent planning, does not support higher-order reasoning, and allows only a restricted form of quantification. The multi-agent planning frameworks in \cite{kominis2015beliefs,muise2015planning} follow a compilation approach, translating restricted fragments of epistemic planning into classical planning languages. 

The approaches in \cite{van2002tractable, jamroga2007constructive} describe planning domains via a type of state-transition system extended with epistemic information, called a \textit{concurrent epistemic game structure} (CEGS). This representation makes it easy to define multi-agent notions such as `joint action' or `multi-agent plan'. However, the representation inherits some of the well-known problems of transition-system models, including the lack of compact descriptions of actions and efficient heuristics that can avoid building the full state-transition system when planning (see  \cite{bolander2017gentle} for a discussion of these and other limitations). 

The non-DEL formalism that most closely resembles the approach of this paper is the epistemic game description language GDL-III \cite{thielscher2017gdl}. The language is epistemic and first-order.  A key feature of this language is that only what agents can see and do has to be defined. This is done via declarations that use the keywords $\mathbf{Sees}$ and $\mathbf{Does}$, which loosely correspond to modalities. GDL-III has a simple syntax and allows compact specifications of actions. For instance, the following GDL-III rules \cite{baral_et_al:DR:2017:8285} describe schematically communication actions which are similar to the ones from Example \ref{example:action_schemas}:
\[\mathbf{Sees}(x, i_a) \Leftrightarrow \mathbf{Does}(i, announce(z)) \wedge Obs(i, x)) \]
\[\mathbf{Sees}(y, \varphi) \Leftrightarrow \mathbf{Does}(i, announce(\varphi)) \wedge Obs(i, y) \wedge Listen(i, y) \wedge \varphi\]
These two transition rules are interpreted as follows: if agent $i$ announces $z$ then any agent $x$ observing $i$ will receive the information $i_a$. Any agent $y$ that observes $i$ and listens to $i$ will learn the content of the announcement, $\varphi$. Given the semantics of GDL-III, it follows that agents who only see $i_a$ will know that $i$ made an announcement but will not learn the content of the announcement. Agents who observe $i$, however, will know that $\varphi$ must be true. Moreover, if an agent $x$ observes agent $i$, does not listen to $i$, but knows that another agent $y$ listens to $i$, then the semantics entails that $x$ will know that $y$ will know the content of the announcement after it has been made. The model is therefore schematic and context-sensitive, like the epistemic action schemas presented here. The syntax of GDL-III is simpler than that of DEL when it comes to representing actions. However, as noted in \cite{Engesser2018}, specifying nested and higher-order knowledge is more difficult in GDL-III than in DEL, and the formalism requires more involved semantics. The work in \cite{Engesser2018} provides a detailed comparison of DEL and GDL-III, concluding that GDL-III offers a simpler syntax, while DEL provides simpler semantics. In \cite{Engesser2018}, it is shown that large fragments of GDL-III and DEL are equally expressive by giving compilations between the two.

Concerning decidability and complexity, it was first shown in \cite{bolanderbirkegaard2011} that the general plan existence problem in propositional DEL planning (i.e., deciding whether a plan exists given a multi-agent planning task) is undecidable. In fact, the problem is undecidable with two agents only, no common knowledge, and no postconditions. In \cite{le2018small} it is shown that public actions are enough for undecidability when the initial state meets certain technical conditions. That paper also identifies an undecidable subclass of small epistemic planning problems comprising two agents, one action, six propositions and a fixed goal. The undecidability results straighforwardly apply also to the present framework.

Although the general problem is undecidable, several papers have identified decidable fragments of epistemic planning that are still reasonably expressive. Single-agent epistemic planning is decidable \cite{bolanderbirkegaard2011}. The multi-agent problem becomes decidable if actions are only allowed to have propositional preconditions (i.e., no epistemic formulas appear in the preconditions) \cite{yu2013multi}. The computational complexity of this fragment belongs to $(d + 1)\textsc{-ExpTime}$ for a goal formula whose modal depth is $d$. If actions are restricted to have propositional preconditions and no postconditions, the plan existence problem becomes $\textsc{PSpace-}$complete \cite{charrier2016impact}. Stronger restrictions, such as allowing only private and public announcements, bring the complexity down to $\textsc{NP-}$complete \cite{bolander2015complexity}.
As mentioned in Section 6, in \cite{ijcai2020} we show that single-agent epistemic planning and multi-agent planning with non-modal preconditions are also decidable in the term-modal, first-order case described in this paper, echoing the results for propositional planning in \cite{bolanderbirkegaard2011,yu2013multi}.

As for implementation and applications, a number of techniques and planners have been developed over the last decade. An approach that has gained popularity is the compilation approach. The idea involves choosing a suitably restricted fragment of DEL that can be encoded in a classical planning language. Epistemic problems are then translated into classical ones so that state-of-the-art planners can be used to solve them efficiently. The compilations rely on different restrictions.  The system in \cite{kominis2015beliefs} assumes that actions are public, physical actions are deterministic, and that all agents start with a common initial belief on the set of worlds that are possible. The paper adopts a centralised perspective, with planning done off-line from the viewpoint of a single agent. In \cite{kominis2017multiagent}, the authors extend this framework to cover on-line planning from the perspective of the agents themselves. The planner in Muise \textit{et al.} \cite{muise2015planning} requires a finite depth of nesting of modalities and no disjunctions. Cooper \textit{et al.} \cite{cooper2016simple} use an encoding based on special variables describing what agents can see. The epistemic problems expressible with this restricted language are then encoded in PDDL and solved using the Fast Downward planner \cite{helmert2006fast}. As mentioned before, the PKS system in \cite{petrick2004extending} encodes epistemic planning using a STRIPS-like language. This language can describe single-agent, epistemic planning problems with conditional effects. The PKS system tries to solve these problems using an efficient but incomplete algorithm. 

A small number of epistemic planners do not rely on compilation into classical planning. The system MEPK \cite{huanggeneral} performs multi-agent epistemic planning from the third-person viewpoint. The system can handle private actions and beliefs, as formalized with the modal logic  KD45n. The systems does not support arbitrary common knowledge but can deal with a weaker form of common knowledge. Finally, Le \textit{et al.} \cite{le2018efp} present two forward planners, called EFP and PG-EFP, for multi-agent epistemic planning. These planners can deal with unlimited nested beliefs, common knowledge, and epistemic goals when the number of worlds in the initial state is not too large.

\subsection{\label{subsec:DEL.litt.}Dynamic Epistemic Logic} There is a vast and excellent literature on both epistemic logic and dynamic epistemic logic to which the reader is referred for both technical and conceptual introductions\textemdash see e.g. \cite{sep-dynamic-epistemic,benthem2011b,ditmarsch2007,Fagin_et_al._1995,TML_Hintikka1962,mossintro,Lee2018}.

The approach to modeling actions taken in this paper is based on the idea of \textit{action models} applied using \textit{product update} as introduced first by Baltag, Moss and Solecki \cite{BaltagBMS_1998}. The reduction axiom approach to proving completeness for logics with actions was first suggested by Plaza \cite{plaza1989} for the case of \textit{truthful public announcements}. Our approach is the same, but for general actions models. It is based on  \cite{baltagmoss2004,sep-dynamic-epistemic,ditmarsch2007}.

The format of the action models presented here differs mainly in four aspects from those introduced in \cite{BaltagBMS_1998}: our action models have \textit{postconditions}; are \textit{first-order} rather than propositional; accommodate \textit{term-modal relations}; and have \textit{conditioned edges}. 

Our approach to postconditions is inspired by \cite{benthem2006_com-change,ditmarsch_kooi_ontic}. From there, it is a straightforward generalization to alter pre- and postconditions to allow updates of first-order Kripke models.

A substantial departure from the standard is the accommodation of term-modal relations and the edge-conditioning. The definition avoids two problems for term-modal action models\textemdash one pointed out by Kooi \cite{TML_KooiTermModalLogic} and one concerning reduction axioms\textemdash by an adjustment of the propositional \textit{edge-conditioned} action models of Bolander \cite{Bolander2014}.\footnote{Both approaches result in context-sensitive actions: the distinguishability of two events depends on model to be updated. See \cite{Bolander2014,Lee2018,Rendsvig2014a,Rendsvig-DS-DEL-2015,Rendsvig2018_Thesis} for arguments to the effect that more context-sensitivity than what is present in standard action models is desirable.}

In the standard definition, an action model $A$ for index set of agents $\mathcal{I}$ consists of a finite set of events $E=\{e,...,e'\}$ and a map $R:\mathcal{I}\rightarrow\mathcal{P}(E\times E)$, plus assignments of pre- and postconditions. In the term-modal treatment, the set $\mathcal{I}$ is a proper part of the semantics of state representations. Adding an operator $[A,e]$ to the language thus conflates syntax and semantics, Kooi points out.

In considering reduction actions, we found that this problem runs deep. Consider the standard reduction axiom for the modal operator:
\begin{equation}
[A,e]K_{i}\varphi\leftrightarrow\left(pre(e)\rightarrow\bigwedge_{f:(e,f)\in R_{i}}K_{i}[A,f]\varphi\right)\label{eq:red.ax.classic}
\end{equation}
In (\ref{eq:red.ax.classic}), the agent index $i$ links the occurrences of the modal operator $K_{i}$ with the relation $R_{i}$ used in the quantifying conjunction. This link is broken in the term-modal treatment: the ``$i$'' indexing the operators is a syntactic term, while the ``$i$'' indexing the relation is an element of a domain of quantification. Without consulting an interpretation (or variable assignment), these two occurrences are unlinked: there is no guarantee that $R_{i}$ is the relation used in evaluating $K_{i}\varphi$.

To resolve the conflation problem, Kooi defines action models where accessibility relations over events $E$ are assigned to \textit{groups} of agents on a per-application basis: With $\Phi$ a finite set of mutually inconsistent and jointly exhaustive formulas with free variable $x$, each pointed model $(M,w)$ and variable valuation $v$ defines a partition on the agent domain with cells $\{d\in D_{agt}\colon M,w\vDash_{v[x/d]}\varphi(x)\}$ for each $\varphi(x)\in\Phi$; each agent in such a cell (group) is assigned the same accessibility relation using a map $S:\Phi\rightarrow\mathcal{P}(E\times E)$. In effect, the action model makes no direct reference to the agent domain, thus avoiding the conflation problem.

Additionally, Kooi's definition yields a solution to the problem of unlinked indices as the relations of the action model may now be referred to using syntactical constructs. With $\Phi=\{P_{1}(x),...,P_{n}(x)\}$, a suggestion for a reduction axiom could be

\[
[A,e]K_{t}\varphi\leftrightarrow\left(pre(e)\rightarrow\bigwedge_{k\leq n}\left(P_{k}(t)\rightarrow\bigwedge_{f:(e,f)\in R(P_{k})}K_{t}[A,f]\varphi\right)\right).
\]

We obtain a similar solution by adjusting the edge-conditioned action model of Bolander \cite{Bolander2014}. In an edge-conditioned action model, whether two events are related for some agent $i\in\mathcal{I}$ is conditional on whether a given formula is satisfied in the pointed model on which the action is executed. Formally, each agent-edge pair is assigned a condition by a map $Q:\mathcal{I}\rightarrow(E\times E\rightarrow\mathcal{L})$. 

Inspired by both Bolander and Kooi, we use a map $Q:E\times E\rightarrow\mathcal{L}$ where $Q(e,e')(x^\star)$ has exactly one free variable, $x^\star$. When the resulting action model is executed on a pointed model $(M,w)$, an edge is present for an agent $\alpha\in D_{\mathtt{agt}}$ if $M,w\vDash_{v[x^\star\mapsto\alpha]}Q(e,e')(x^\star)$. As the condition $Q(e,e')(x^\star)$ is a formula, this approach allows the formulation of reduction axioms, cf. Section \ref{subsec:red.ax}. 

Our version of the $Q$ function and Kooi's approach \textsc{$S$} are equally general. Given an action model $(E,S,\mathsf{pre},\mathsf{post})$ with $S:\Phi\rightarrow\mathcal{P}(E\times E)$, let $Q:E\times E\rightarrow\mathcal{L}$ be given by $Q(e,e')=\varphi$ such that $(e,e')\in S(\varphi)$. Then $Q$ emulates $S$: for all models $M$, $M\otimes(E,S,\mathsf{pre},\mathsf{post})=M\otimes(E,Q,\mathsf{pre},\mathsf{post})$. \textit{Vice versa}, to emulate a map $Q$, for each $A\subseteq E\times E$, let 
\begin{equation}
\varphi_{A}:=\bigwedge_{\varphi\in Q(A)}\varphi\wedge\bigwedge_{\psi\in \Psi}\psi\label{eq:kooi}
\end{equation}
with $\Psi$ the largest subset of $\{\neg\varphi\colon\varphi\in Q(E\times E)\backslash Q(A)\}$ such that (\ref{eq:kooi}) is consistent. Then $S:\varphi_{A}\mapsto A$ for each $A\subseteq E\times E$ is a Kooi map that emulates $Q$. We opt for the edge-conditioned formulation due to its correspondence with the standard precondition maps $\mathsf{pre}:E\rightarrow\mathcal{L}$.

Finally, note that both may emulate standard action models over classes of models where each agent $\alpha$ is designated by a rigid constant $a_{\alpha}$ (as is conceptually implied by identifying agents with indices). The standard map $R:D_{\mathtt{agt}}\rightarrow\mathcal{P}(E\times E)$ may be emulated by the map $Q:E\times E\rightarrow\mathcal{L}$ with $Q(e,e')=\bigvee_{a_{\alpha}\colon(e,e')\in R_{\alpha}}(x=a_{\alpha})$.

\subsection{\label{subsec:TML.litt.}Term-Modal Logic}

The term-modal treatment of epistemic operators as behaving both as modal operators and as first-order predicates was suggested already by von Wright in his 1951 \cite{vonWright1951}, though the direction was not formally explored. Formally, Hintikka allowed the constructions in his 1962 \cite{TML_Hintikka1962}, and the term-modal aspects are used in discussions concerning the validity of $K_{a}\varphi\rightarrow K_{a}K_{a}\varphi$, where Hintikka notes that the schema is only valid if $a$ \textit{knows who }$a$\textit{ is}, captured by $\exists xK_{a}(x=a)$ (see also Section \ref{subsec:frame.char} on frame characterizations). Semantically, Hintikka linked individuals and operators in \cite{TML_Hintikka1969} using world-relative first-order interpretations extended to assign alternatives to individuals in the domain of quantification, $D$. Work in philosophical logic followed Hintikka's term-modal syntax\textemdash even called ``standard'' by Carlson in 1988 \cite{TML_Carlson_1988}\textemdash but the semantic link did not pertain: \cite{TML_Sleigh_1972} exemplifies a pseudo-use. Carlson enforced the semantic link, using a partial map $R:D\longrightarrow\mathcal{P}(W\times W)$ to assign accessibility relations to individuals. He further presents a Hintikka-style model set proof theory for a three-valued Kripke-style semantics with non-rigid terms, varying domains and reflexive relations, and shows completeness. 

In computer science, a format similar to Carlson's is frequently used when giving the semantics for propositional epistemic logic, with the set of agents $D$ treated as an \textit{index set} instead of a domain of quantification, even in the first-order case: e.g., in Fagin \textit{et al.}'s first-order treatment \cite{Fagin_et_al._1995}, in a formula like $K_{Alice}Governor(California,Pete)$, both $California$ and $Pete$ are first-order terms, but $Alice$ is not\textemdash $Alice$ is an agent. Here, then, agents and their names are \textit{equated}. 

The issue of equating agents and their names, and why this is unsatisfactory in many computer science applications, is discussed at length by Grove \& Halpern \cite{GroveHalpern1993} and Grove \cite{Grove1995}. They identify the following inadequacies: systems that equate agents with their names cannot represent agent sets of non-fixed size, do not allow for reference to agent groups, for non-rigid names, nor for indexical and relative reasoning (using terms like ``me'' to express e.g. ``the agent to the left of me''). In response, \cite{GroveHalpern1993} develops a propositional epistemic logic with indexical reference obtained by evaluating formulas at agent-world pairs, on which \cite{Grove1995} builds a first-order variant to additionally handle issues of \textit{de dicto/de re}-like reference scope, as well as multi-naming of agents. The latter is in effect a variant of non-rigid constant, varying domain, term-modal logic with formulas evaluated at agent-world pairs. Further, the language contains two sorts, one for \textit{agents} (like our \texttt{agt} terms), and one for \textit{names}.\footnote{Such two sorts are also used by Rendsvig in a quantified, but not term-modal, epistemic logic analysis of semantic competence in relation to Frege's puzzle about identity \cite{Rendsvig2011Thesis, Rendsvig2012a}.} This allows explicit reasoning about \textit{naming}. Adding a third sort to the present framework would be unproblematic, but the indexical semantics would require in-depth re-working. Similarly would varying agent domains require work, unless emulated by an existence predicate, cf. \cite{Fitting_&_Mendelsohn}. Beyond this, the present framework tackles the issues raised in \cite{Grove1995, GroveHalpern1993}: agents and names are not equated by the use of (non-rigid) constants of sort \texttt{agt}, that additionally allow for multi-naming; agents groups may be denoted by predicates and relative properties by relations; finally, \textit{de dicto/de re} distinctions are expressible using quantification. However, beyond the formal difference and similarities, we would find an in-depth philosophical comparison of the interpretation of the two frameworks interesting. 
In \cite{LibermanRendsvig2019}, we illustrate the system presented here with examples that touch on several of the involved issues.

One reason for sticking with ordinary modal operators even in a first-order setting is that term-modal operators adds design choices and possible complications, as discussed by Lomuscio \& Colombetti in their early contribution to the term-modal literature \cite{TML_Lomuscio1997}. In constructing a term-modal extension of multi-agent KD45 with non-rigid terms, they discuss how to evaluate formulas $B_{a}\varphi$ when $a$ is not an agent denoting term. Intuitively, $B_{a}\varphi$ should be false, as only agents can truly hold beliefs, but\textemdash they remark\textemdash this would imply the invalidity of $B_{a}(\varphi\vee\neg\varphi)$. They conclude against a two-sorted approach, as a similar problem surfaces for formulas $B_{a}B_{b}\varphi$ when agent $a$ believes that the term $b$ denotes a non-agent.\footnote{This obstacle is avoided in the present paper by syntactically forcing all operator-subscripts to be of the agent-sort.} Ultimately, Lomuscio \& Columbetti opt for a \textit{partial logic} with \textit{truth-value gaps}, letting the truth-value of $B_{a}\varphi$ be \textit{undefined} when $a$ denotes a non-agent; they take a \textit{valid} formula to be \textit{sometimes satisfied}, but \textit{never false}. Their semantics are constant domain, and each element is, at each world, assigned a set of \textit{doxastic alternatives}; an element is an \textit{agent in world $w$} if it is assigned a non-empty set. Hence, agenthood is \textit{world-relative}. They present an axiom system\textemdash which includes a term-modal Barcan formula $\forall y(B_{x}\varphi(y))\rightarrow B_{x}\forall y(\varphi(y))$ and quantified frame-characterizing formulas like $\forall x(B_{x}\varphi\rightarrow B_{x}B_{x}\varphi)$ like the present paper\textemdash and show soundness, citing \cite{Lomuscio95} for details.

Bivalent systems are presented by Thalmann \cite{TML_Thalmann2000a} and Fitting, Thalmann \& Voronkov \cite{TML_Fitting2001}, with these two works coining the label `term-modal logic'. In their setting, each world $w$ is associated with an \textit{inner domain} $D(w)$ of objects existing at $w$, with $D(w)$ a subset of the \textit{outer domain $D$,} for all $w$. The inner domains are assumed \textit{increasing}: if $wR_{d}w'$ for some $d\in D$, then $D(w)\subseteq D(w')$. Further, terms are assumed rigid and with an interpretation defined at every world ($I(c)\in D(w)$ for all $w\in W$). This combination seemingly\footnote{Seemingly, as we are confused about the satisfaction clause for atomic formulas \cite[Def. 7, It. 1]{TML_Fitting2001}, stating that $w,V\Vdash R(t_{1},...,t_{n})$ iff $w\Vdash R(V(t_{1}),...,V(t_{n}))$ with $V(t_{i})\in D$, but no specification of the conditions for the right-hand condition, nor any specification of how the relation symbol $R$ is assigned extension. However, if this is assumed settled as ordinarily (as in the present paper), the increasing domain assumption seems sufficient to obtain a well-behaved semantics, as is the case in ordinary first-order modal logic. See e.g. \cite{Gamut_II} for an introduction and \cite{Hughes_&_Cresswell_New.Intro} for details.} eliminates the need for truth-value gaps, but the problems raised by non-agents are not discussed. For several classic frame-conditions, \cite{TML_Fitting2001,TML_Thalmann2000a} presents both sequent and tableau proof systems (K, D, T, K4, D4, S4).

Orlandelli \& Corsi \cite{TML_Decidable2018} also investigate sequent calculi for term-modal logics. Their semantics is more general as they omit the increasing domain requirement, and---as they also consider Euclidean frames---they also obtain completeness for more frame classes. The syntax is without constants, so the rigidity/non-rigidity dichotomy is non-applicable. The semantics are bivalent. The combination of varying domains and bivalent semantics is facilitated by the atomic formula satisfaction clause 
\[
M,w\vDash_{v}r(x_{1},...,x_{n})\text{ iff }(v(x_{1}),...,v(x_{n}))\in I(r,w),
\]
with $I(r,w)\subseteq D^{n}$ again with $D$ the outer domain. E.g., with $I(=,w)=\{(d,d)\in D^{2}\colon d\in D\}$, the formula $(x=x)$ is satisfied in $(M,w)$ even if $v(x)\notin w$. However, as the quantifiers only range over the inner domain of worlds, the semantics oddly make $p(x)\wedge\forall y\neg p(y)$ satisfiable.

In \cite{TML_KooiTermModalLogic}, Kooi introduces a \textit{dynamic} term-modal logic, including a first-ever first-order version of DEL action models. The language of \cite{TML_KooiTermModalLogic} is first-order dynamic logic with wildcard assignment, but where the set of first-order terms is also the set of atomic programs, the models for which are constant agents-only domain with non-rigid terms (and very similar to our general case, but restricted to agents-only). This language is more expressive than ordinary term-modal logic. The first-order dynamic logic aspect implies that the validity problem is $\Pi_{1}^{1}$ complete, eliminating hope for a finitary proof system. However, the expressivity of the language allows the definition of a \textit{non-rigid common knowledge}. If not for our two-sorted domain, our language and semantics could be seen as a special case of Kooi's. Kooi's action models are discussed in the next section. 

Seligman \& Wang \cite{TML_SeligmanWang2018} investigate a fragment Kooi's system. The fragment allows only basic assignment modalities to form a quantifer-free term-modal logic (without function symbols), a fragment rich enough to express \textit{de dicto}/\textit{de re} distinctions and \textit{knowing who} constructions in a setting where names are not common knowledge. The main result is a complete axiomatization for the class of S5 models. As Barcan-like formulas are not included in the investigated language fragment but are the common characterizers of constant domain semantics, this result is quite non-standard. The authors also discuss decidability: providing no hard results, they conclude ``We are not that far from the decidability boundary, if not on the wrong side.'' 

Corsi \& Orlandelli \cite{Add_Corsi2013} introduce a generalization of term-modal syntax to be able to express the difference between \textit{de dicto} and \textit{de re} statements without invoking quantifiers. They introduce complex term-modal operators $|t\colon_{x}^{c}|p(x)$ with the reading that $t$ knows of $c$ that (s)he is $p(x)$. These are interpreted over so-called \textit{epistemic transition structures} with \textit{double-domains}. The resulting \textit{indexed epistemic logics} are further investigated in \cite{Add_Corsi2014,Add_Corsi2016}. It would be interesting to know what the relationship is to the also expressive language of Kooi \cite{TML_KooiTermModalLogic}.

Where the domain of Kooi \cite{TML_KooiTermModalLogic} consists only of agents, Rendsvig \cite{TML_Rendsvig2010} introduces a model with a single-sorted language with non-rigid terms that denote elements in a constant domain containing both agents and objects. As in \cite{TML_Lomuscio1997}, this requires an \textit{ad hoc} solution to the semantics of formulas $K_{a}\varphi$ when $a$ denotes a non-agent. The solution used is to then interpret $K_{a}\varphi$ as a \textit{global }modality. This preserves the bivalence of the systems while making all operators normal. As a result, \cite{TML_Rendsvig2010} presents a canonical model theorem, facilitating completeness proofs for classic frame classes.

The semantics of this paper are based on Achen's \cite{Achen2017}, which in turn is a two-sorted refinement of \cite{TML_Rendsvig2010}. What we consider an improvement of \cite{Achen2017} over \cite{TML_Rendsvig2010} is exactly the two-sorted approach: distinguishing between agent and object terms removes the need to define \textit{ad hoc} semantics for knowledge operators indexed by non-agents. Taking a two-sorted approach eliminates the possibility of modeling agents that are uncertain about whether a given term refers to an agent or an object, but results in a system which we consider well-behaved.

Term-modal like, Naumov \& Tao \cite{Naumov2018} present a propositional term-modal logic, but where operators may be indexed by sets of terms, making $\exists xK_{\{x,a\}}\varphi$ a formula. Such operators are given a \textit{distributed knowledge} semantics in S5 models with constant agents-only domain and rigid terms for which a complete axiom systems is presented.

Sawasaki, Sano and Yamada \cite{Sawasaki2019} consider a term-modal syntax where operators are indexed by a sequence of terms making e.g. $\forall x \forall y K_{[x,y]} R(x,y)$ well-formed, with the intended deontic reading that $x$ is obliged by $y$ to ensure $R(x,y)$. They present complete axiom system and sequent calculi.

Sedlar \cite{TML_IgorSedlar2014--INCOMPLETE-REF} also uses a rigid terms, agents-only constant domain semantics to represent an epistemic logic of evidence using a term-modal language as that presented here. Sedlar shows that his term-modal framework is able to emulate monotonic modal logics and epistemic logics with awareness, obtaining a decidability result for the fragment with no constants nor functions, but $0$-ary predicates and single unary predicate.

Several other authors have also looked at decidability issues for varieties of term-modal logics. Kooi \cite{TML_KooiTermModalLogic} points out that the monadic fragment of his system is undecidable by a result of Kripke \cite{Kripke1962}. As Kripke's result concerns first-order modal logic in general (see e.g. \cite[p. 271 ff.]{Hughes_&_Cresswell_New.Intro}), it applies to broadly to term-modal logics, too. For term-modal logics, Padmanabha \& Ramanujam \cite{TML_Padmanabha2018} even show that the propositional fragment is undecidable. As decidable, they identify the monodic fragment (formulas using only one free variable in the scope of a modality).
\cite{Padmanabha2017} considers model checking for the fragment over a restricted model class and \cite{Padmanabha2019} presents a translation of the monodic fragment (without identity) into \textsf{FOML}.

In \cite{Padmanabha2019b}, Padmanabha \& Ramanujam further investigate a variable-free propositional bi-modal logic with implicit quantification, with formulas $[\forall]\varphi$ and $[\exists]\varphi$ asserting that along all (resp. some) accessibility relations $\varphi$ is necessary. These variable-free formulas thus correspond to the propositional term-modal formulas $\forall x K_x \varphi$ and $\exists x K_x \varphi$. The relevant logic is shown decidable, to be bisimulation-invariant fragment of an appropriate two-sorted first-order logic, related to the `bundled fragment' of term-modal logic.
Model checking for the system is investigated in \cite{Padmanabha2020}.
In \cite{Padmanabha2019a} Padmanabha \& Ramanujam, turn to the two variable fragment of term-modal logic, which they show decidable. 
The thesis \cite{Padmanabha2019} collects a selection of the mentioned results, and additionally presents a translation of \textsf{TML} without identity into propositional \textsf{TML}.

For their own system, Orlandelli \& Corsi \cite{TML_Decidable2018} show two fragments decidable, the first propositional with quantifiers and operators occurring only in pairs of the forms $\exists x[x]$ or $\forall x\langle x\rangle$. This fragment simulates non-normal monotone epistemic logics. The second fragment allows expressing $1$-ary groups' higher-order knowledge about proposition symbols, e.g. with $\forall x(p(x)\rightarrow K_{x}(K_{y}q))$ an allowed formula. Also  Pliu\v{s}kevi\v{c}ius  \&  Pliu\v{s}kevi\v{c}ien\.{e} \cite{TML_Pliuskevicius2006a} treats a fragment of propositional term-modal logic, but with, term-modal operators for belief and mutual belief, allowing only pair-wise quantifier-operator nestings (e.g., for $p$ a propositional atom, $\forall xB_{x}\exists yB_{y}p$ is well-formed, while $\forall x\exists yB_{x}B_{y}p$ is not). For their agents-only constant domain KD45 semantics, they present a terminating sequent calculus decision procedure. For further decidability results, it may be relevant to consult Shtakser \cite{Add_Shtakser2018}, who investigates propositional modal languages includes quantification over modal operators and predicate symbols that take modal operators as arguments.

Beyond its main decidability result, Padmanabha \& Ramanujam \cite{Padmanabha2019a} also discusses translation of term-modal logic into first-order modal logic. In a setting with no constants or function symbols, the authors suggest a translation of \textsf{TML} into \textsf{FOML} with a single modality $K$ and a new unary predicate $P$, inductively translating $K_{x}\varphi$ to $K(P(x)\rightarrow\varphi)$ and $\hat{K}_{x}\varphi$ to $\hat{K}(P(x)\wedge\varphi)$.
\cite{Padmanabha2019a} omits the details, but claims this translation produces \textsf{FOML }formulas equi-satisfiable with their \textsf{TML }originals. This suggests that completeness results for term-modal logics may also be shown indirectly via translation and application of well-known results for \textsf{FOML} (see e.g. \cite{Hughes_&_Cresswell_New.Intro}), instead of by the direct constructions found in the Appendix.\footnote{We thank a reviewer for pointing this out.} Whether a translation approach would work for the present framework is an open question, but we have reservations concerning the general applicability of the suggested translation.
\footnote{We hold a reservation as satisfiability is not generally preserved by the translation. In the class of \textsf{TML}  models with exactly 2 agents (characterized by axiom N for $n=2$) wlog called $\alpha$ and $\beta$, with constants $a$ and $b$ locally rigid, but non-identical (characterized by $\exists x\exists y((x\neq y)\wedge(x=a)\wedge(y=b)\wedge\forall zK_{z}((x=a)\wedge(y=b)))$), and satisfying for $i,j\in\{\alpha,\beta\},i\neq j,$ $\forall x,y,z\in W,\text{ if }xR_{i}y\text{ and }xR_{j}z,\text{ then }yR_{i}z$ (characterized by $\forall x\forall y(((x\neq y)\wedge\hat{K}_{x}\top\wedge\hat{K}_{y}\varphi)\rightarrow K_{x}\hat{K}_{y}\varphi)$), the formula $\exists x\exists y((x\neq y)\wedge\hat{K}_{y}\top\wedge\hat{K}_{x}\hat{K}_{x}\top\wedge K_{x}K_{x}\neg\hat{K}_{x}\top)$ is satisfiable. However, the translation of the latter is not satisfiable in the class of \textsf{FOML} models characterized by the translation of the three former.}

\section{Final Remarks}\label{sec:final.remarks}
We conclude with open questions we see in relation to epistemic planning, and a summary of the main contributions of the paper. The following are some possible avenues for future research:
\begin{enumerate}
    \item \textit{Decidability and complexity}. As presented in the literature review on epistemic planning with propositional DEL (Section \ref{subsec:rel.work.ep.plan}), results exist concerning the undecidability of several classes of epistemic planning problems, but decidability and complexity results also exist. It is clear that the negative results apply in the richer setting of this paper. In \cite{ijcai2020}, we show that some of the positive decidability results can be established in the decidable finite-agent setting of dynamic term-modal logic (i.e., decidability for single-agent planning and multi-agent planning with non-modal preconditions). It is an open question whether any other decidability results can be extended as well, and the complexity of first-order epistemic planning has not been studied. 
    \item \textit{Reasoning about schematic actions}. In extension to defining first-order variants of action models, it was natural to define action schemas to obtain succinct action representations. These action schemas are however not described by the dynamic languages and logics introduced. We find it an interesting question how the languages and logics should be altered to obtain a logic of action schemas. Constructing such a logic could possibly draw connections to recent work on \textit{Arbitrary Public Announcemnet Logic} and its generalizations, cf. e.g. \cite{APAL_2007,APAL_2016}.
    \item \textit{Supporting other planning features}. A possibly fruitful avenue for future research is to devise a first-order \textit{probabilistic} DEL framework for probabilistic epistemic planning. In the standard planning literature, probabilistic PDDL is often used to support probabilistic effects, allowing the specification of Markov decision processes \cite{younes2004ppddl1}. There is a rich literature on probabilistic propositional DEL on which a first-order setting for probabilistic epistemic planning could be based (for an overview, see  \cite[Appendix L]{sep-dynamic-epistemic}). Other well-known planning features, such as numeric fluents, temporal aspects, etc., could also be integrated.
\end{enumerate}{}

Finally, we briefly recall what we see as the main contributions of the paper:

\begin{enumerate}
    \item \textit{A first-order dynamic epistemic logic}. The paper develops novel dynamics for a variant of term-modal logic with the addition of first-order action models. It thereby generalizes propositional DEL to a setting allowing full first-order epistemic reasoning about both objects and agents.
    \item \textit{A compact epistemic domain definition language}. As the epistemic planning formalism developed builds on first-order logic, it allows for a compact specification of domain dynamics via \textit{epistemic action schemas}. Such schematization is inspired by that used in PDDL, and to the best of our knowledge, it provides the most compact representation of actions available in the DEL framework. The setting conservatively extends propositional DEL, in the sense that it contains it as a special case, inheriting the ingredients of the DEL planning framework. 
    \item \textit{Expressive, yet decidable axiom systems for reasoning about epistemic actions}. On the reasoning side, the paper develops static and dynamic axiom systems that are well-behaved. Although the logical languages proposed are fairly expressive, it is shown that sound, complete and decidable systems exist for several natural classes of models.
\end{enumerate}{}

\section*{Acknowledgements}
We sincerely thank the three anonymous reviewers for their insightful questions, comments and criticisms: We appreciate your time and efforts.

The Center for Information and Bubble Studies is funded by the Carlsberg Foundation. RKR was partially supported by the DFG-ANR  joint project Collective Attitude Formation [RO 4548/8-1].

\appendix

\section{\label{A.proof.appendix}Proof Appendix}

\subsection{Term-Modal Logic}

This section establishes the results stated in Section \ref{subsec:Normal-Term-Modal-Logic}.
The logic $\mathsf{K}$ is well-behaved, with standard techniques
for establishing strong completeness carrying over from the propositional
and quantified modal logic cases. Therefore, the section presents
only proof strategy, with non-standard elements given special attention.
Full details may be found in \cite{Achen2017}.

The involved notions are standard (see e.g. \cite{BlueModalLogic,Brauner&Ghilardi-FOML,Hughes_&_Cresswell_New.Intro}),
but we remark that a formula $\varphi$ is \textit{valid} over a class
of frames $\boldsymbol{X}$ iff for every frame $F=(D,W,R)\in\boldsymbol{X}$,
every interpretation $I$ over $F$, every world $w\in W$ and every
valuation $v$, it is the case that $M,w\vDash_{v}\varphi$. That
$\varphi$ is a \textit{semantic consequence} of the formula-set $\Gamma$
over a class $\boldsymbol{X}$ is written $\Gamma\vDash_{\boldsymbol{X}}\varphi$.
For $\varphi$ provable from the assumptions $\Gamma$ in the logic $\Lambda$,
write $\Gamma\vdash_\Lambda\varphi$. In both cases, when $\Gamma=\emptyset$, it is omitted.%

\subsubsection{\label{A.static.soundness}Soundness}

\begin{prop}
The system $\mathsf{K}$ is sound with respect to the class $\boldsymbol{F}$
of all frames: for all $\varphi\in\mathcal{L}$, if $\vdash_{\mathsf{K}}\varphi$,
then $\vDash_{\boldsymbol{F}}\varphi$.
\end{prop}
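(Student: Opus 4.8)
The plan is to establish soundness in the standard way: show that every axiom schema in Table~\ref{tab:axioms} is valid over the class $\boldsymbol{F}$ of all frames, and that each inference rule (MP, KG, UG) preserves validity. Since provability in $\mathsf{K}$ is by definition closure under these rules starting from the axioms, an induction on the length of derivations then yields that $\vdash_{\mathsf{K}}\varphi$ implies $\vDash_{\boldsymbol{F}}\varphi$. Most of the work reduces to routine semantic verification using Definition~\ref{def:satisfaction}, so I would only spell out the cases that interact with the term-modal and first-order machinery, and sketch the rest.

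First I would dispatch the easy cases. All propositional tautologies are valid and MP preserves validity by the usual truth-functional argument. The axiom K, $K_{t}(\varphi\to\psi)\to(K_{t}\varphi\to K_{t}\psi)$, is valid because $R_{\llbracket t\rrbracket_w^{I,v}}$ is a fixed relation once $M$, $w$, $v$ are fixed: if every $w'$ with $wR_{\llbracket t\rrbracket_w^{I,v}}w'$ satisfies $\varphi\to\psi$ and $\varphi$, then each such $w'$ satisfies $\psi$. KG preserves validity since if $\varphi$ holds at all points under all valuations, it holds in particular at every $R_{\llbracket t\rrbracket_w^{I,v}}$-successor. For the first-order principles: Id ($t=t$) holds because $I(=,w)=\{(d,d):d\in D\}$ and $\llbracket t\rrbracket_w^{I,v}\in D$; UE follows from the definition of $x$-variants together with the substitution lemma relating $M,w\vDash_v\varphi(y/x)$ to $M,w\vDash_{v[x\mapsto v(y)]}\varphi$; PS (restricted to variables) holds because $v(x)=v(y)$ makes $x$ and $y$ interchangeable in the extension of every subterm and hence in satisfaction; UG preserves validity by the standard argument that if $\varphi\to\psi$ is valid and $x$ is not free in $\varphi$, then at any point where $\varphi$ holds, $\psi$ holds under every $x$-variant. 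For DD, $x\neq y$ when $\mathtt{t}(x)\neq\mathtt{t}(y)$ is valid because $v(x)\in D_{\mathtt{t}(x)}$ and $v(y)\in D_{\mathtt{t}(y)}$ lie in the disjoint summands $D_{\mathtt{agt}}$ and $D_{\mathtt{obj}}$, so the pair cannot be in $I(=,w)$. For $\exists$Id, $(c=c)\to\exists x(x=c)$: since $I(c,w)\in D_{\mathtt{t}(c)}$, the witness is the $x$-variant sending $x$ to $I(c,w)$.

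The two cases I expect to require genuine care, and where I would give full detail, are KNI and BF, precisely because they encode the rigidity of variables and the constant-domain assumption. For KNI, $(x\neq y)\to K_t(x\neq y)$: suppose $M,w\vDash_v x\neq y$, i.e.\ $v(x)\neq v(y)$ as elements of $D$. Since valuations are world-independent, for every $w'$ we still have $v(x)\neq v(y)$, hence $(v(x),v(y))\notin I(=,w')$, hence $M,w'\vDash_v x\neq y$; as this holds for all $w'$ it holds for all $R_{\llbracket t\rrbracket_w^{I,v}}$-successors, giving $M,w\vDash_v K_t(x\neq y)$. The essential point is the rigidity of $v$ contrasted with the non-rigidity of constants, which is why the axiom is stated for variables. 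For BF, $\forall x K_t\varphi\to K_t\forall x\varphi$ with $x$ not occurring in $t$: assume $M,w\vDash_v \forall x K_t\varphi$, and fix $w'$ with $wR_{\llbracket t\rrbracket_w^{I,v}}w'$; we must show $M,w'\vDash_u\varphi$ for every $x$-variant $u$ of $v$. Fix such a $u$. From the assumption, $M,w\vDash_u K_t\varphi$ (this uses that $x$ does not occur in $t$, so $\llbracket t\rrbracket_w^{I,u}=\llbracket t\rrbracket_w^{I,v}$ and the same relation is quantified over). Since $wR_{\llbracket t\rrbracket_w^{I,u}}w'$, we get $M,w'\vDash_u\varphi$, as required. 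The crucial structural fact enabling this is that the domain $D$ is constant across worlds, so the range of the $x$-quantifier at $w'$ is the same as at $w$; the authors note this is exactly where non-growing domains enter, and I would remark that the converse Barcan formula, corresponding to non-shrinking domains, is derivable in $\mathsf{K}$ and so need not be treated separately. The main obstacle, such as it is, is bookkeeping: making sure the substitution/coincidence lemmas about $\llbracket\cdot\rrbracket_w^{I,v}$ and about free variables are stated cleanly enough that the KNI and BF arguments go through without circularity, since everything else is a direct unwinding of the satisfaction clauses.
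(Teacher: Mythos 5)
Your proposal is correct and follows essentially the same route as the paper: a standard soundness argument showing each axiom of Table~\ref{tab:axioms} valid over $\boldsymbol{F}$ and each rule validity-preserving, with detailed verification reserved for the axioms sensitive to the term-modal machinery (the paper spells out K and BF, you spell out KNI and BF, and your BF argument matches the paper's reliance on constant domains and the invariance of $\llbracket t\rrbracket_w^{I,v}$ under $x$-variants when $x$ does not occur in $t$). No gaps beyond the routine substitution/coincidence lemmas you already flag.
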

\begin{proof}
The proof is standard: the axioms of $\mathsf{K}$ are shown valid
over $\boldsymbol{F}$ and the rules of inference are shown to preserve
validity. To give a feel, arguments follow for the K axiom and the
Barcan~Formula.\medskip{}

\noindent K: Let $M$ be a model based on an arbitrary frame $F\in\boldsymbol{F}$,
let $w\in M$ and let $v$ be a valuation; let $K_{t}\top,\varphi,\psi\in\mathcal{L}$.
To show that $M,w\vDash_{v}K_{t}\big(\varphi\rightarrow\psi\big)\rightarrow\big(K_{t}\varphi\rightarrow K_{t}\psi\big)$,
assume $M,w\vDash_{v}K_{t}\big(\varphi\rightarrow\psi\big)$. As
$K_{t}\top\in\mathcal{L}$, $\left\llbracket t\right\rrbracket _{w}^{I,v}\in D_{\mathtt{agt}}$
by assumption. Hence $F$ contains an accessibility relation $R_{\left\llbracket t\right\rrbracket _{w}^{I,v}}$.
Having fixed the accessibility relation going though the term $t$
to the agent domain, the argument is standard: By the semantics of
$K_{t}$, $M,w'\vDash_{v}\varphi\rightarrow\psi$ for every $w'\in M$
with $w'\in R_{\left\llbracket t\right\rrbracket _{w}^{I,v}}(w)$.
Hence $M,w'\vDash_{v}\neg\varphi$ or $M,w'\vDash_{v}\psi$. If
all such $w'$ satisfies $\varphi$, $M,w\vDash_{v}K_{t}\varphi$;
but then each $w'$ must also satisfy $\psi$, so $M,w\vDash_{v}K_{t}\psi$,
and hence $M,w\vDash_{v}K_{t}\varphi\rightarrow K_{t}\psi$. Else,
some such $w'$ satisfies $\neg\varphi$; then $M,w\vDash_{v}\neg K_{t}\varphi$,
so $M,w\vDash_{v}K_{t}\varphi\rightarrow K_{t}\psi$.

\medskip{}

\noindent \textbf{}%
BF: Let $M,w,v,\varphi$ and $t$ be as above. Pick a variable $x\neq t$
and assume that $M,w\vDash_{v}\forall xK_{t}\varphi$. Then for all
$x$-variants $v'$ of $v$, $M,w\vDash_{v}K_{t}\varphi$ (i.e., intuitively,
if $x$ is free in $\varphi$ so that $K_{t}\varphi(x)$ defines a
predicate, all elements in the $\mathtt{t}(x)$-domain of $w$ fall
in this predicate's extension). From $M,w\vDash_{v}K_{t}\varphi$,
it follows that for all $w'\in R_{\left\llbracket t\right\rrbracket _{w}^{I,v}}(w)$,
$M,w'\vDash_{v'}\varphi$ (intuitively, as $v'$ is an arbitrary $x$-variant
$v$, all $\mathtt{t}(x)$-elements \textit{existing in }$w'$ fall
in the extension of $\varphi(x)$. This would not hold if elements
could exist in $w'$ that do not exist in $w$). As $v'$ is an arbitrary
$x$-variant of $v$, it follows that $M,w'\vDash_{v}\forall x\varphi$
(again, illegitimate \textit{if} new elements could spring to existence).
As $w'$ was arbitrary from $R_{\left\llbracket t\right\rrbracket _{w}^{I,v}}(w)$,
finally $M,w\vDash_{v}K_{t}\forall x\varphi$.
\end{proof}

\subsubsection{Completeness \label{subsec:Completeness}}

This section establishes that the system $\mathsf{K}$ is \textit{strongly
complete} with respect to the class $\boldsymbol{F}$ of all frames.
I.e.,
\[
\text{for all }\Gamma\subseteq\mathcal{L},\text{for all }\varphi\in\mathcal{L},\text{ if }\Gamma\vDash_{\boldsymbol{F}}\varphi,\text{ then }\Gamma\vdash_{\mathsf{K}}\varphi.
\]

This follows as a corollary of the section's main result, the \textit{Canonical
Class Theorem} (Theorem \ref{thm:Canonical}) which states that any
normal term-modal logic is strongly complete with respect to its \textit{canonical
class}.

The theorem is establish by appeal to the following well-known\footnote{See e.g. \cite[p. 194]{BlueModalLogic}.}
proposition linking satisfaction and completeness:
\begin{prop}
\label{prop.iff} A logic $\Lambda$ is strongly complete with respect
to a class of structure $\boldsymbol{S}$ iff every $\Lambda$-consistent
set of formulas is satisfiable on some $s\in\boldsymbol{S}$.
\end{prop}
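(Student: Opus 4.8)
The statement to prove is Proposition \ref{prop.iff}: a logic $\Lambda$ is strongly complete with respect to a class of structures $\boldsymbol{S}$ iff every $\Lambda$-consistent set of formulas is satisfiable on some $s \in \boldsymbol{S}$. This is a well-known "folklore" equivalence, and the proof is a routine argument by contraposition in both directions.

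\textbf{Proof proposal.} The plan is to prove both implications by contraposition, using the standard interdefinability of consistency, (un)satisfiability, and derivability. Recall that strong completeness of $\Lambda$ with respect to $\boldsymbol{S}$ means: for all $\Gamma \subseteq \mathcal{L}$ and all $\varphi \in \mathcal{L}$, if $\Gamma \vDash_{\boldsymbol{S}} \varphi$ then $\Gamma \vdash_\Lambda \varphi$. For the left-to-right direction, I would argue contrapositively: suppose some $\Lambda$-consistent set $\Gamma$ is \emph{not} satisfiable on any $s \in \boldsymbol{S}$. Then vacuously $\Gamma \vDash_{\boldsymbol{S}} \bot$ (there is no structure in $\boldsymbol{S}$ and no world/valuation satisfying all of $\Gamma$, so the conditional defining semantic consequence holds trivially for $\varphi = \bot$, or for any formula). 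By strong completeness, $\Gamma \vdash_\Lambda \bot$, which contradicts the $\Lambda$-consistency of $\Gamma$. Hence every $\Lambda$-consistent set is satisfiable on some $s \in \boldsymbol{S}$.

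For the right-to-left direction, assume every $\Lambda$-consistent set is satisfiable on some $s \in \boldsymbol{S}$, and suppose $\Gamma \vDash_{\boldsymbol{S}} \varphi$; I want $\Gamma \vdash_\Lambda \varphi$. Again argue by contraposition: if $\Gamma \nvdash_\Lambda \varphi$, then $\Gamma \cup \{\neg\varphi\}$ is $\Lambda$-consistent (by the standard deduction-style fact that underivability of $\varphi$ from $\Gamma$ yields consistency of $\Gamma$ together with $\neg\varphi$; this uses only that $\Lambda$ contains propositional logic and is closed under modus ponens). By hypothesis, $\Gamma \cup \{\neg\varphi\}$ is satisfiable on some $s = (M,w) \in \boldsymbol{S}$ under some valuation $v$, so $M,w \vDash_v \psi$ for all $\psi \in \Gamma$ and $M,w \vDash_v \neg\varphi$. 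But then $M,w \nvDash_v \varphi$ while $M,w \vDash_v \Gamma$, contradicting $\Gamma \vDash_{\boldsymbol{S}} \varphi$. Therefore $\Gamma \vdash_\Lambda \varphi$, establishing strong completeness.

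\textbf{Main obstacle.} There is essentially no deep obstacle here; the proposition is a packaging lemma. The only points requiring minor care are (i) making precise what "satisfiable on $s \in \boldsymbol{S}$" means when $\boldsymbol{S}$ is a class of \emph{frames} rather than models — one reads it as: there is an interpretation $I$ over the frame, a world $w$, and a valuation $v$ with $M,w \vDash_v \Gamma$ for the resulting model $M$ — and (ii) invoking the Lindenbaum-style fact that $\Gamma \nvdash_\Lambda \varphi$ implies $\Gamma \cup \{\neg\varphi\}$ is $\Lambda$-consistent, which holds for any normal term-modal logic since such logics contain all propositional tautologies and are closed under MP. Both of these are standard and can be stated in a line each; the argument itself is just the two contrapositive chains above.
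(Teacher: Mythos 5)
Your proof is correct and is exactly the standard argument: the paper itself offers no proof of this proposition, citing it as a well-known fact (referring to Blackburn, de Rijke and Venema), and the two contrapositive chains you give — unsatisfiable-consistent set yields $\Gamma\vdash_\Lambda\bot$ via vacuous semantic consequence, and $\Gamma\nvdash_\Lambda\varphi$ yields a consistent hence satisfiable $\Gamma\cup\{\neg\varphi\}$ refuting $\Gamma\vDash_{\boldsymbol{S}}\varphi$ — are precisely the textbook route. Your two cautionary remarks (reading satisfiability on a frame class via some interpretation, world and valuation, and the deduction-theorem step giving consistency of $\Gamma\cup\{\neg\varphi\}$) are the right ones and both hold here.
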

By this proposition, a completeness proof can be undertaken as an
existence proof: For a consistent set of formulas $\Gamma$, a satisfying
model from the appropriate class must be found. In the propositional
case, one model is constructed for all consistent sets simultaneously,
giving rise to the propositional \textit{Canonical Model Theorem} (see
e.g. \cite{BlueModalLogic}): any normal propositional modal logic
is strongly complete with respect to its \textit{canonical model}. 

The present proof cannot rely on single canonical model. As variables
are semantically rigid and any signature $\Sigma$ includes identity, the
same identity statements between variables are true across all worlds
of any model-valuation pair. A canonical model defined as usual would
not satisfy this: with consistent sets forming the basis of worlds,
if two worlds are disconnected by all accessibility relations, then
they need not satisfy the same identity statements between variables.
Hence, a rigid variable valuation cannot be defined. Further, different
$\mathsf{K}$-consistent sets may give rise to different domains.
Hence, non-constant domains result, and the construction is thus not
of the appropriate class. Therefore, our construction is of a canonical
model \textit{per }consistent set, resulting in a \textit{canonical class}.

The construction contains first-order aspects irrelevant in the propositional
case and term-modal logical aspects irrelevant to the standard quantified
case, but the approach is familiar: worlds are maximally consistent
sets that bear witnesses, ensured constructable by Lindenbaum-like
lemmas; domains are equivalence classes of variables induced by identity
statements; and canonical accessibility relations, interpretation
and valuation are defined as expected. That the canonical accessibility
relations are well-defined requires an additional lemma, but a familiar
Existence Lemma facilitates a familiar Truth Lemma, which in combination
with the above Proposition \ref{prop.iff} yields the main result.

\paragraph{Canonical Worlds}

Fix a signature $\Sigma=(\mathtt{V},\mathtt{C},\mathtt{R},\mathtt{F},\mathtt{t})$,
its language $\mathcal{L}$ and a normal term-modal logic $\Lambda\subseteq\mathcal{L}$.
When a set $\Gamma\subseteq\mathcal{L}$ is \textit{maximal $\Lambda$-consistent}
(defined as usual \cite{BlueModalLogic}), call $\Gamma$ a \textit{$\Lambda$-mcs}.

Maximal consistency does not suffice for a set to be a canonical world
in the first-order case. It must also be ensured that whenever a formula
of the form $\neg\forall x\varphi$ is included in $\Gamma$, then
$\Gamma$ must bear witness to this ``falsity'' of $\forall x\varphi$:\footnote{Witnesses bearing is called the \textit{$\forall$-property} in \cite[p. 257]{Hughes_&_Cresswell_New.Intro};
that the set is \textit{saturated} is also used in the literature.}
\begin{defn}
A set $\Gamma\subseteq\mathcal{L}$ \textit{bears witnesses }if for
every $\varphi\in\mathcal{L}$, for every variable $x$, there is
some variable $y$ such that $\big(\varphi(y/x)\rightarrow\forall x\varphi\big)\in\Gamma$.
\end{defn}
If a set $\Gamma$ bear witnesses, then so does every super-set of
$\Gamma$. If $\Gamma$ is a $\Lambda$-mcs that bears witnesses and contains
$\neg\forall x\varphi$, then for some $y\in\mathtt{V}$, $\neg\varphi(y/x)\in\Gamma$.

To ensure that every $\Lambda$-mcs can be extended to one bearing
witnesses, countably infinite sets of both agent and object variables
beyond those in $\mathtt{V}$ are needed. Define the extended signature
$\Sigma^{+}$ as $(\mathtt{V^{+}},\mathtt{C},\mathtt{R},\mathtt{F},\mathtt{t}^{+})$
where $\mathtt{V}\subseteq\mathtt{V}^{+}$, $\mathtt{t}^{+}(x)=\mathtt{t}(x)$
for all $x\in\mathtt{V}\cup\mathtt{C}\cup\mathtt{R}\cup\mathtt{F}$
and both $(\mathtt{t}^{+})^{-1}(\mathtt{agt})\cap\mathtt{V}^{+}\backslash\mathtt{V}$
and $(\mathtt{t}^{+})^{-1}(\mathtt{obj})\cap\mathtt{V}^{+}\backslash\mathtt{V}$
are countably infinite. Let $\mathcal{L}$$^{+}$ be the term-modal
language based on $\Sigma^{+}$. Then $\mathcal{L}\subseteq\mathcal{L}^{+}$.
The following two lemmas then ensure that the worlds of the canonical
models are constructable:
\begin{lem}
[Lindenbaum]\label{lem:Lindenbaum} If $\Gamma\subseteq\mathcal{L}$
is $\Lambda$-consistent, then there is a $\Lambda$-mcs $\Gamma^{'}$
such that $\Gamma\subseteq\Gamma^{'}$.
\end{lem}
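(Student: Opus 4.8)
The plan is to run the classical Lindenbaum construction, which carries over unchanged to the term-modal setting since the only reasoning needed at this stage is propositional. First I would observe that, because the signature $\Sigma=(\mathtt{V},\mathtt{C},\mathtt{R},\mathtt{F},\mathtt{t})$ has countable components, the language $\mathcal{L}$ is countable; fix an enumeration $\varphi_{0},\varphi_{1},\varphi_{2},\dots$ of $\mathcal{L}$. Then I would build an increasing chain of $\Lambda$-consistent sets by recursion: set $\Gamma_{0}\coloneqq\Gamma$, and given $\Gamma_{n}$, let $\Gamma_{n+1}\coloneqq\Gamma_{n}\cup\{\varphi_{n}\}$ if this set is $\Lambda$-consistent, and $\Gamma_{n+1}\coloneqq\Gamma_{n}\cup\{\neg\varphi_{n}\}$ otherwise. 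Finally put $\Gamma'\coloneqq\bigcup_{n\in\mathbb{N}}\Gamma_{n}$, so that $\Gamma\subseteq\Gamma'$.

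The verification splits into three routine parts. (i) Each $\Gamma_{n}$ is $\Lambda$-consistent, by induction on $n$: the base case is the hypothesis, and for the step, if both $\Gamma_{n}\cup\{\varphi_{n}\}$ and $\Gamma_{n}\cup\{\neg\varphi_{n}\}$ were $\Lambda$-inconsistent then, using that $\Lambda$ contains all propositional tautologies and is closed under MP (equivalently, reading $\Delta\vdash_{\Lambda}\psi$ as $\vdash_{\Lambda}(\delta_{1}\wedge\dots\wedge\delta_{k})\to\psi$ for some $\delta_{1},\dots,\delta_{k}\in\Delta$), one obtains both $\Gamma_{n}\vdash_{\Lambda}\neg\varphi_{n}$ and $\Gamma_{n}\vdash_{\Lambda}\varphi_{n}$, contradicting the consistency of $\Gamma_{n}$. (ii) $\Gamma'$ is $\Lambda$-consistent, since any $\Lambda$-derivation of a contradiction from $\Gamma'$ uses only finitely many premises, all of which lie in some $\Gamma_{n}$, contradicting (i). (iii) $\Gamma'$ is maximal: any $\varphi\in\mathcal{L}$ equals $\varphi_{n}$ for some $n$, so by construction either $\varphi_{n}\in\Gamma_{n+1}\subseteq\Gamma'$ or $\neg\varphi_{n}\in\Gamma_{n+1}\subseteq\Gamma'$; together with (ii) this makes $\Gamma'$ a $\Lambda$-mcs extending $\Gamma$.

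There is essentially no obstacle here: the statement is the plain maximal-consistent-set extension, and the term-modal and first-order peculiarities of $\mathcal{L}$ play no role — the rules KG, UG and the Barcan Formula are never invoked. The one point deserving a word of care is fixing the definition of $\Gamma\vdash_{\Lambda}\varphi$ (as a finite conjunction of premises implying the conclusion) so that the deduction-theorem-style step in (i) is legitimate; this is why I would state that reading at the outset. The genuine work towards the Canonical Class Theorem lies not here but in the companion lemma that extends a $\Lambda$-mcs to one bearing witnesses — which is precisely why the extended signature $\Sigma^{+}$ with its fresh agent and object variables was introduced — and later in checking that the canonical accessibility relations are well defined.
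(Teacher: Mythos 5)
Your proposal is correct and is exactly the standard Lindenbaum construction the paper has in mind: the paper omits the proof entirely, noting only that the standard techniques carry over and deferring details to the cited thesis, and your enumeration-and-extension argument (with the finite-conjunction reading of $\Gamma\vdash_{\Lambda}\varphi$ making the case split legitimate) is that standard argument, correctly noting that the term-modal machinery plays no role here and that the genuine novelty lies in the Witnessed Lemma over $\Sigma^{+}$.
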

\begin{lem}
[Witnessed]\label{lem:Witnessed} If $\Gamma\subseteq\mathcal{L}$
is $\Lambda$-consistent, then there is a set $\Gamma^{+}\subseteq\mathcal{L}^{+}$
such that $\Gamma\subseteq\Gamma^{+}$ and $\Gamma^{+}$ bears witnesses.
\end{lem}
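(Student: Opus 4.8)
The plan is to run the standard Henkin witness construction, adapted to the two-sorted term-modal signature. Since $\Sigma^{+}$ has only countably many variables and non-logical symbols, $\mathcal{L}^{+}$ is countable, so I would first fix an enumeration $(\varphi_{0},x_{0}),(\varphi_{1},x_{1}),\dots$ of all pairs $(\varphi,x)$ with $\varphi\in\mathcal{L}^{+}$ and $x\in\mathtt{V}^{+}$. I would then build an increasing chain of $\Lambda$-consistent sets $\Gamma=\Gamma_{0}\subseteq\Gamma_{1}\subseteq\cdots\subseteq\mathcal{L}^{+}$ as follows: at stage $n+1$, choose a variable $y_{n}\in\mathtt{V}^{+}\setminus\mathtt{V}$ with $\mathtt{t}^{+}(y_{n})=\mathtt{t}^{+}(x_{n})$ that does not occur in $\Gamma_{n}$, and put $\Gamma_{n+1}=\Gamma_{n}\cup\{\varphi_{n}(y_{n}/x_{n})\rightarrow\forall x_{n}\varphi_{n}\}$. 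Such a $y_{n}$ exists: $\Gamma\subseteq\mathcal{L}$ mentions only variables from $\mathtt{V}$, and only finitely many members of $\mathtt{V}^{+}\setminus\mathtt{V}$ have been introduced in the first $n$ stages, whereas $(\mathtt{t}^{+})^{-1}(\mathtt{agt})\cap(\mathtt{V}^{+}\setminus\mathtt{V})$ and $(\mathtt{t}^{+})^{-1}(\mathtt{obj})\cap(\mathtt{V}^{+}\setminus\mathtt{V})$ are both countably infinite; and since $y_{n}$ does not occur in $\varphi_{n}$ it is not a bound variable of $\varphi_{n}$, so the substitution $\varphi_{n}(y_{n}/x_{n})$ is well defined. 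Finally I would set $\Gamma^{+}=\bigcup_{n\in\mathbb{N}}\Gamma_{n}$. Then $\Gamma\subseteq\Gamma^{+}\subseteq\mathcal{L}^{+}$, and $\Gamma^{+}$ bears witnesses, since any pair $(\varphi,x)$ equals $(\varphi_{n},x_{n})$ for some $n$ and the corresponding conditional lies in $\Gamma_{n+1}\subseteq\Gamma^{+}$.

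The one substantive step, and the one I expect to be the main obstacle, is verifying that each $\Gamma_{n+1}$ is $\Lambda$-consistent; granting this, $\Lambda$-consistency of $\Gamma^{+}$ is immediate from the finitary character of derivations, since any derivation of a contradiction from $\Gamma^{+}$ uses finitely many premises and hence already refutes some $\Gamma_{n}$. Suppose $\Gamma_{n+1}$ were inconsistent. Then $\Gamma_{n}\vdash_{\Lambda}\varphi_{n}(y_{n}/x_{n})$ and $\Gamma_{n}\vdash_{\Lambda}\neg\forall x_{n}\varphi_{n}$. Take a finite $\Delta\subseteq\Gamma_{n}$ with $\Delta\vdash_{\Lambda}\varphi_{n}(y_{n}/x_{n})$ and let $\delta$ be its conjunction, so $\vdash_{\Lambda}\delta\rightarrow\varphi_{n}(y_{n}/x_{n})$. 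Since $y_{n}$ does not occur in $\Gamma_{n}$ it is not free in $\delta$, so rule UG gives $\vdash_{\Lambda}\delta\rightarrow\forall y_{n}\,\varphi_{n}(y_{n}/x_{n})$. A routine change-of-bound-variable step using UE and UG, legitimate because $y_{n}$ does not occur in $\varphi_{n}$ (so $\varphi_{n}(y_{n}/x_{n})(x_{n}/y_{n})=\varphi_{n}$ and $x_{n}$ is not free in $\forall y_{n}\,\varphi_{n}(y_{n}/x_{n})$), yields $\vdash_{\Lambda}\forall y_{n}\,\varphi_{n}(y_{n}/x_{n})\rightarrow\forall x_{n}\varphi_{n}$; hence $\Gamma_{n}\vdash_{\Lambda}\forall x_{n}\varphi_{n}$, contradicting $\Gamma_{n}\vdash_{\Lambda}\neg\forall x_{n}\varphi_{n}$.

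Finally I would remark that the argument uses nothing beyond closure of $\Lambda$ under MP and UG together with the first-order axioms UE and Id of Table \ref{tab:axioms}; the modal and term-modal apparatus (the operators $K_{t}$, non-rigid constants, the two sorts) plays no role in the generalisation step, and in particular the fresh variable $y_{n}$ may well occur under modal operators inside $\varphi_{n}(y_{n}/x_{n})$ without affecting anything. The single genuinely term-modal point is the type bookkeeping $\mathtt{t}^{+}(y_{n})=\mathtt{t}^{+}(x_{n})$, which is exactly why $\Sigma^{+}$ is built to supply countably infinitely many fresh variables of each sort. As the only properties of $\Lambda$ used are those shared by every normal term-modal logic, the lemma holds in the stated generality.
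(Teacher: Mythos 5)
Your proposal is correct and is essentially the standard Henkin-witness construction that the paper itself relies on (the paper states the lemma without proof, deferring to Achen's thesis and the Hughes--Cresswell ``$\forall$-property'' argument), including the key generalisation-on-a-fresh-variable consistency step and the type bookkeeping $\mathtt{t}^{+}(y_{n})=\mathtt{t}^{+}(x_{n})$ that is the only genuinely term-modal point. One small repair: the freshness condition should read ``$y_{n}$ does not occur in $\Gamma_{n}$ \emph{nor in} $\varphi_{n}$'' (your choice only mentions $\Gamma_{n}$, yet both the well-definedness of $\varphi_{n}(y_{n}/x_{n})$ and the change-of-bound-variable step use $y_{n}\notin\varphi_{n}$); the same counting argument supplies such a variable, since $\varphi_{n}$ contains only finitely many variables.
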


\paragraph{\label{subsec:Canonical-Models}Canonical Models}

To avoid the issue remarked in this section's introduction, a canonical model is defined per $\Lambda$-mcs, ensuring that all worlds share its \textit{identity theory:}
\begin{defn}
The sets $\Gamma,\Gamma'\subseteq\mathcal{L}^{+}$ have the same \textbf{\textit{identity
theory}} if for all $x,y\in\mathtt{V}^{+},(x=y)\in\Gamma$ iff $(x=y)\in\Gamma'$. 
\end{defn}
\begin{defn}\label{def.Canonical.Model}
Let $\Lambda\subseteq\mathcal{L}$ be a normal term-modal logic. Let $\Gamma\subseteq\mathcal{L}$ be $\Lambda$-consistent and let $\Gamma^{*}\subseteq\mathcal{L}^{+}$ be maximal $\Lambda$-consistent, witness bearing and such that $\Gamma\subseteq\Gamma^{*}$ (existing by Lemmas \ref{lem:Lindenbaum} and \ref{lem:Witnessed}).
The \textbf{\textit{canonical model }}for \textbf{\textit{$(\Lambda,\Gamma^{*})$
}}is $M_{(\Lambda,\Gamma^{*})}=(D,W,R,I)$ such that 
\begin{enumerate}
\item $D\coloneqq D_{\mathtt{agt}}\dot{\cup}D_{\mathtt{obj}}\coloneqq\left\{ \left[x\right]\colon x\in(\mathtt{t}^{+})^{-1}(\mathtt{agt})\cap\mathtt{V}^{+}\right\} \dot{\bigcup}\left\{ \left[y\right]\colon y\in(\mathtt{t}^{+})^{-1}(\mathtt{agt})\cap\mathtt{V}^{+}\right\} $
where $\left[z\right]\coloneqq\left\{ z'\in\mathtt{V}^{+}\colon\left(z=z'\right)\in\Gamma^{*}\right\} $.
\item $W$ is the set of all maximal $\Lambda$-consistent, witness bearing
sets of formulas from $\mathcal{L}^{+}$ that share identity theory
with $\Gamma^{*}$.
\item $R:D_{\mathtt{agt}}\rightarrow\mathcal{P}(W\times W)$ such that for
all $\alpha\in D_{\mathtt{agt}}$, $(w,w')\in R(\alpha)$ iff for
every formula $K_{x}\varphi\in\mathcal{L}^{+}$ with $x\in\alpha$,
if $K_{x}\varphi\in w$, then $\varphi\in w'$,
\item and 
\begin{enumerate}
\item $I(r,w)=\left\{ \big([x_{1}],...,[x_{n}]\big)\in\prod_{i=1}^{len(\mathtt{t}(r))}D_{\mathtt{t}_{i}(r)}\colon r(x_{1},...,x_{n})\in w\right\} $,
for all $r\in\mathtt{R}$;
\item $I(f,w)=\left\{ \big([x_{1}],...,[x_{n}]\big)\in\prod_{i=1}^{len(\mathtt{t}(f))}D_{\mathtt{t}_{i}(f)}\colon\left(f(x_{1},...,x_{n-1})=x_{n}\right)\in w\right\} $,
for all $f\in\mathtt{F}$;
\item $I(c,w)=\left\{ \big([x]\big)\in D_{\mathtt{t}(c)}\colon\left(c=x\right)\in w\right\} $,
for all $c\in\mathtt{C}$.
\end{enumerate}
\end{enumerate}
The \textbf{\textit{canonical valuation}} $v$ for $(\Lambda,\Gamma^{*})$
is given by $v(x)=[x]$ for all $x\in\mathtt{V}^{+}.$
\end{defn}

\paragraph{Lemmas: Uniformity, Existence and Truth}

The canonical model for $(\Lambda,\Gamma^{*})$ is a model for $\mathcal{L}$.
Notably, the domain is well-defined by the identity theory sharing
requirement and a two-partition by the inclusion of the DD axiom.
Further, $I(c,w)$ is well-defined as for every world $w$, there
exists some $x\in\mathtt{V}^{+}$ for which $(c=x)\in w$. See \cite{Achen2017}
for details. Foremost, the map $R$ is well-defined, as is ensured
by the following lemma:
\begin{lem}
[Uniformity]\label{lem.uniformity} Let $K_{x}\varphi\in w\in W$
with $v(x)=\alpha$. Then for all $y\in\mathtt{V}^{+}$ for which
$v(x)=v(y)$, also $K_{y}\varphi\in w$.
\end{lem}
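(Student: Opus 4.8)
The statement to prove is the Uniformity Lemma: if $K_x\varphi \in w$ for some world $w$ of the canonical model, and $y$ is a variable with $v(x) = v(y)$ (i.e., $[x] = [y]$), then $K_y\varphi \in w$. The plan is to unwind the definition of the canonical valuation and then appeal to the identity theory shared by all worlds together with the Knowledge of Non-Identity axiom (KNI) and the Principle of Substitution (PS). First I would observe that $v(x) = v(y)$ means $[x] = [y]$, which by definition of the equivalence classes means $(x = y) \in \Gamma^*$. Since every world $w \in W$ shares its identity theory with $\Gamma^*$, we get $(x = y) \in w$ as well.

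The key step is then to show $K_x\varphi \to K_y\varphi$ is provable in $\Lambda$ (or at least that $(x=y) \to (K_x\varphi \to K_y\varphi)$ is), so that maximal consistency of $w$ delivers $K_y\varphi \in w$ from $K_x\varphi \in w$ and $(x=y) \in w$. To see this is a theorem: the Principle of Substitution PS gives $(x=y) \to (\psi(x) \to \psi(y))$ for any formula $\psi$; instantiating $\psi(x)$ as $K_x\varphi$ — viewing the occurrence of $x$ in the subscript as a free occurrence of the variable — yields $(x=y) \to (K_x\varphi \to K_y\varphi)$. One must be slightly careful that PS as stated substitutes into the term position of the modal index, which is legitimate since $x$ occurs free in $K_x\varphi$ and $y$ is free for $x$ there (both are of type $\mathtt{agt}$, as $K_x\varphi \in \mathcal{L}^+$ forces $\mathtt{t}(x) = \mathtt{agt}$, and $\mathtt{t}(x) = \mathtt{t}(y)$ follows from $(x=y)$ being consistent together with the DD axiom). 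Then, since $w$ is a $\Lambda$-mcs containing both $(x=y)$ and $K_x\varphi$, and the implication above is in $\Lambda \subseteq w$, two applications of closure under modus ponens inside $w$ give $K_y\varphi \in w$.

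The main obstacle I anticipate is the bookkeeping around substitution into modal subscripts: making precise that PS, whose schema is written $(x=y) \to (\varphi(x) \to \varphi(y))$, does apply when the distinguished occurrence of $x$ is the index of a $K$-operator rather than an argument of a predicate. This requires noting that the syntactic substitution operation $\varphi(x \mapsto y)$ defined in the paper (after Definition \ref{Def. wff}) replaces all free occurrences of $x$, including those appearing as modal indices — indeed the free variables of $K_t\psi$ were defined to include the variables of $t$. Once that is granted, the argument is a short three-line deduction, and the rest (identity theory sharing, definition of $[x]$) is routine application of the constructions already set up in Definition \ref{def.Canonical.Model}. No appeal to the modal axioms K, KG, or to the frame structure is needed; the lemma is purely about identity and substitution, which is exactly why it secures well-definedness of $R$ (the relation $R(\alpha)$ does not depend on which representative $x \in \alpha$ is chosen to witness $K_x\varphi$).
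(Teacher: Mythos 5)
Your proposal is correct and follows essentially the same route as the paper's own proof: identity theory sharing yields $(x=y)\in w$, the Principle of Substitution gives $(x=y)\rightarrow(K_{x}\varphi\rightarrow K_{y}\varphi)$, and two applications of modus ponens within the maximal consistent set $w$ deliver $K_{y}\varphi\in w$. Your extra care about substitution into modal subscripts is a sound elaboration of a point the paper takes for granted.
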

\begin{proof}
Assume \textit{$K_{x}\varphi\in w\in W$ }with $v(x)=\alpha$, and let
$v(x)=v(y)$. Then $[x]=[y]$, so by identity theory sharing assumption,
$(x=y)\in w'$ for every \textit{$w'\in W$; }in particular, $(x=y)\in w$.
By PS, $(x=y)\rightarrow\big(K_{x}\varphi\rightarrow K_{y}\varphi\big)\in w$.
By MP, $\big(K_{x}\varphi\rightarrow K_{y}\varphi\big)\in w$ and
by MP again, $K_{y}\varphi\in w$.
\end{proof}
As in the propositional case, the proof of the Truth Lemma below relies
on the below Existence Lemma. A proof for standard first-order modal
logic may be found in \cite{Hughes_&_Cresswell_New.Intro}; details
for term-modal logic may be found in \cite{Achen2017}.
\begin{lem}
[Existence]\label{lem.existence}If $w\in W$ and $\neg K_{x}\varphi\in w$,
then there exists a $w'\in W$ such that $(w,w')\in R_{\left\llbracket x\right\rrbracket _{w}^{I,v}}$
and $\varphi\in w'$.
\end{lem}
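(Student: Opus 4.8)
The plan is to follow the standard Henkin-style argument for first-order modal logic, adapted to the term-modal canonical models of Definition \ref{def.Canonical.Model}. Suppose $w \in W$ and $\neg K_x \varphi \in w$. Write $\alpha = \llbracket x \rrbracket_w^{I,v} = [x] = v(x)$, so that $R_\alpha$ is the accessibility relation we must witness. The goal is to produce a $\Lambda$-mcs $w'$ that bears witnesses, shares identity theory with $\Gamma^*$, contains $\varphi$, and is $R_\alpha$-accessible from $w$.

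First I would form the set $\Delta_x(w) \coloneqq \{\psi : K_x\psi \in w\} \cup \{\neg\xi : \neg\xi \in w \text{ is an identity literal } x_i = x_j \text{ or } x_i \neq x_j\} \cup \{\varphi\}$ — more precisely, I need $w'$ to contain $\varphi$, to contain every $\psi$ with $K_x\psi \in w$ (to secure $R_\alpha$-accessibility via clause 3 of Definition \ref{def.Canonical.Model}), and to share identity theory with $\Gamma^*$. The key claim is that $\{\psi : K_x\psi \in w\} \cup \{\varphi\}$ is $\Lambda$-consistent: if not, then $\vdash_\Lambda (\psi_1 \wedge \dots \wedge \psi_n) \to \neg\varphi$ for finitely many $\psi_i$ with $K_x\psi_i \in w$, whence by $K$, $KG$ and normality $\vdash_\Lambda (K_x\psi_1 \wedge \dots \wedge K_x\psi_n) \to K_x\neg\varphi$, contradicting $\neg K_x\varphi \in w$ and the fact that $w$ is an mcs. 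Then I extend this set to a $\Lambda$-mcs $w'$ that bears witnesses and shares identity theory with $\Gamma^*$, using (a suitably refined version of) Lemmas \ref{lem:Lindenbaum} and \ref{lem:Witnessed}; the identity-theory sharing is automatic because $\{\psi : K_x\psi \in w\}$ already contains all identity literals forced by $\Gamma^*$ — indeed by the axiom KNI ($x_i \neq x_j \to K_t(x_i \neq x_j)$) and the provability of $(x_i = x_j) \to K_t(x_i = x_j)$ in $\mathsf{K}$, every identity literal true in $w$ is among the $\psi$ with $K_x\psi \in w$, and $w$ shares identity theory with $\Gamma^*$ by construction of $W$. Finally, by clause 3 of the canonical model definition and uniformity (Lemma \ref{lem.uniformity}), $(w,w') \in R_\alpha$: for any $y$ with $[y] = \alpha$ and any $K_y\chi \in w$, Lemma \ref{lem.uniformity} gives $K_x\chi \in w$, hence $\chi \in w'$.

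The step I expect to be the main obstacle is ensuring that the witness-bearing extension does not disturb the identity theory and does not require new variables that would fall outside $\mathtt{V}^+$ — i.e., checking that $W$ really contains the $w'$ we built. This is where the KNI axiom does real work: because knowledge of non-identity is forced, and identity-of-variables is provable under $K$, the accessibility condition automatically propagates the full identity theory from $w$ (hence from $\Gamma^*$) into $w'$, so no separate argument is needed to keep $w'$ inside $W$. I would make sure to state the consistency claim cleanly and to invoke the monotonicity of witness-bearing under supersets (noted right before Lemma \ref{lem:Witnessed}) so that the Lindenbaum extension of an already-partially-witnessed set stays witness-bearing. Everything else is bookkeeping that parallels the classical first-order modal case (see \cite{Hughes_&_Cresswell_New.Intro}), with full details deferred to \cite{Achen2017}.
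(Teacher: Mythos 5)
Your overall architecture---collect the formulas forced by $K_x$ at $w$, add the target formula, prove consistency, extend to a witness-bearing mcs sharing identity theory with $\Gamma^*$, and read off $R_{\left\llbracket x\right\rrbracket _{w}^{I,v}}$-accessibility from clause 3 of Definition \ref{def.Canonical.Model} via the Uniformity Lemma---is exactly the standard Henkin-style route that the paper defers to \cite{Hughes_&_Cresswell_New.Intro,Achen2017}, and your use of KNI (plus provable knowledge of identity for variables) to propagate the identity theory is the right observation. However, the central consistency step fails as you wrote it. From inconsistency of $\{\psi : K_x\psi\in w\}\cup\{\varphi\}$ you derive $\vdash_\Lambda(\psi_1\wedge\dots\wedge\psi_n)\to\neg\varphi$ and hence, by KG, K and maximality, $K_x\neg\varphi\in w$. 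This does \emph{not} contradict $\neg K_x\varphi\in w$: in a normal term-modal logic without seriality, $K_x\neg\varphi\wedge\neg K_x\varphi$ is consistent, so your reductio does not close. In fact the lemma in the literal form you are targeting (conclude $\varphi\in w'$ from $\neg K_x\varphi\in w$) is not provable because it is false: for atomic $\varphi$ the set $\{K_x\neg\varphi,\neg K_x\varphi\}$ is $\mathsf{K}$-consistent, and every canonical successor of an mcs extending it must contain $\neg\varphi$ by clause 3. What the negated-modal case of the Truth Lemma actually needs---and what the standard proof establishes---is the dual reading: if $\neg K_x\varphi\in w$, then some $R_{\left\llbracket x\right\rrbracket _{w}^{I,v}}$-successor contains $\neg\varphi$. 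Run your argument with $\neg\varphi$ in place of $\varphi$: inconsistency of $\{\psi : K_x\psi\in w\}\cup\{\neg\varphi\}$ then yields $\vdash_\Lambda(\psi_1\wedge\dots\wedge\psi_n)\to\varphi$, hence $K_x\varphi\in w$, which genuinely contradicts $\neg K_x\varphi\in w$ together with the consistency of $w$.

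A second, smaller gap: the step you yourself flag as the main obstacle---extending $\{\psi : K_x\psi\in w\}\cup\{\neg\varphi\}$ to a \emph{witness-bearing} mcs inside $\mathcal{L}^{+}$, where no fresh variables remain to serve as witnesses---is not discharged by KNI, which only secures identity-theory sharing. In the standard construction this is precisely where the Barcan Formula does its work: BF is what lets one find witnesses for the successor set within the existing variable stock (cf.\ the BF-based existence lemma in \cite{Hughes_&_Cresswell_New.Intro}). Your proposal never invokes BF, so ``a suitably refined version of Lemmas \ref{lem:Lindenbaum} and \ref{lem:Witnessed}'' remains a promissory note rather than an argument; spelling out that BF-driven extension is the substantive content of the proof deferred to \cite{Achen2017}.
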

\begin{lem}
[Truth] For all $\varphi\in\mathcal{L}^{+}$, for all $w\in W$,
and for the canonical $v$, $M_{(\Lambda,\Gamma^{*})},w\vDash_{v}\varphi$
iff $\varphi\in w$.
\end{lem}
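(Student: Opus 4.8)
The plan is to prove the equivalence $M_{(\Lambda,\Gamma^{*})},w\vDash_{v}\varphi$ iff $\varphi\in w$ by induction on the structure of $\varphi$, simultaneously for all $w\in W$. The base case and the Boolean cases are routine: the atomic case $\varphi = r(t_1,\dots,t_n)$ reduces, via the definition of $\llbracket\cdot\rrbracket^{I,v}_w$ and clauses 4(a)--(c) of Definition \ref{def.Canonical.Model}, to checking that $(\llbracket t_1\rrbracket^{I,v}_w,\dots,\llbracket t_n\rrbracket^{I,v}_w)\in I(r,w)$ iff $r(t_1,\dots,t_n)\in w$; here one must first verify by a sub-induction on terms that $\llbracket t\rrbracket^{I,v}_w = [x]$ where $x$ is any variable with $(t = x)\in w$ (using $\exists$Id and the functionality of $I(f,w)$ from clause 4 of the interpretation definition to guarantee such an $x$ exists). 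The negation and conjunction cases follow immediately from maximal consistency of $w$.

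The modal case $\varphi = K_t\psi$ is where the two non-standard lemmas do their work. First one reduces to the case where $t$ is a variable $x$: since $K_t\psi\in\mathcal{L}^+$ forces $\mathtt{t}(t)=\mathtt{agt}$, pick any variable $x$ with $\llbracket t\rrbracket^{I,v}_w = v(x)=[x]$; by the term sub-induction $(t = x)\in w$, so by PS and MP, $K_t\psi\in w$ iff $K_x\psi\in w$, and the accessibility relation is literally $R_{\llbracket t\rrbracket^{I,v}_w} = R_{[x]}$. Now for the forward direction: if $K_x\psi\in w$ then by the definition of $R$ (clause 3), $\psi\in w'$ for every $w'$ with $wR_{[x]}w'$, hence by IH $M,w'\vDash_v\psi$ for all such $w'$, i.e.\ $M,w\vDash_v K_x\psi$. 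Conversely, if $K_x\psi\notin w$, then $\neg K_x\psi\in w$ by maximality, so the Existence Lemma (Lemma \ref{lem.existence}) supplies a $w'\in W$ with $wR_{[x]}w'$ and $\neg\psi\in w'$; by IH $M,w'\not\vDash_v\psi$, so $M,w\not\vDash_v K_x\psi$. The Uniformity Lemma (Lemma \ref{lem.uniformity}) is what makes the choice of representative variable $x$ immaterial, so that $R$ was well-defined to begin with and the argument does not depend on which $x\in[x]$ we picked.

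The quantifier case $\varphi = \forall x\psi$ is the step I expect to be the main obstacle, since it is where witness-bearing and rigidity of the valuation interact. For the direction $\forall x\psi\in w \Rightarrow M,w\vDash_v\forall x\psi$: given any $x$-variant $u$ of $v$, we have $u(x)=[y]$ for some $y\in\mathtt{V}^+$; by UE, $\psi(y/x)\in w$, and one needs a substitution lemma saying $M,w\vDash_u\psi$ iff $M,w\vDash_{v}\psi(y/x)$ — which in turn needs that $u$ agrees with the valuation obtained from $v$ by reassigning $x$ to $[y]$, and that substitution of a variable commutes with the canonical valuation; this is delicate because of the side conditions on $\varphi(x\mapsto t)$ (no capture), and one typically works with a signature rich enough in fresh variables (the point of passing to $\Sigma^+$) to rename bound variables away. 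For the converse, $\forall x\psi\notin w$: by maximality $\neg\forall x\psi\in w$, and since $w$ bears witnesses there is a variable $y$ with $\neg\psi(y/x)\in w$; the $x$-variant $u$ of $v$ with $u(x)=[y]$ then falsifies $\psi$ at $w$ by IH and the substitution lemma, so $M,w\not\vDash_v\forall x\psi$. The care needed in stating and proving the substitution lemma cleanly — getting the bound-variable conditions right and exploiting the extra variables in $\mathtt{V}^+$ — is the real content here; everything else is bookkeeping. Once the Truth Lemma is in hand, Theorem \ref{thm:Canonical} follows: a $\Lambda$-consistent $\Gamma$ extends to $\Gamma^*$ by Lemmas \ref{lem:Lindenbaum} and \ref{lem:Witnessed}, $\Gamma^*\in W$ of $M_{(\Lambda,\Gamma^*)}$, and by the Truth Lemma $M_{(\Lambda,\Gamma^*)},\Gamma^*\vDash_v\gamma$ for every $\gamma\in\Gamma$, so $\Gamma$ is satisfiable in the canonical class; Proposition \ref{prop.iff} then gives strong completeness, and Corollary \ref{Cor. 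Complete ALL-1} is the special case $\Lambda=\mathsf{K}$.
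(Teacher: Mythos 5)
Your plan is correct and follows essentially the same route as the paper, which itself only sketches this argument: induction on the structure of $\varphi$, with the quantifier case handled via witness bearing and the negated modal case via the Existence Lemma (Lemma \ref{lem.existence}), deferring the remaining bookkeeping (including the substitution details you rightly flag) to \cite{Achen2017}. Your additional observations on the atomic and modal cases (the term sub-induction via $\exists$Id, and the use of PS together with the Uniformity Lemma to reduce to a variable index) are exactly the details the cited source supplies.
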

\begin{proof}
The proof proceeds by induction on the complexity of $\varphi$. For
the quantified formulas, appeal is made to $w$ bearing witnesses.
The negated modal case relies on the Existence Lemma. See \cite{Achen2017}
for full details.
\end{proof}

\paragraph{Canonical Class Theorem}

The canonical models defined facilitate the application of Proposition
\ref{prop.iff} to conclude strong completeness of $\Lambda$ with
respect to its \textit{canonical class}:
\begin{defn}
\label{Def. Canonical class} The \textbf{\textit{canonical class}}
of models for the normal term-modal logic $\Lambda$ is the set $\boldsymbol{C}_{\Lambda}$
of canonical models $M_{(\Lambda,\Gamma^{*})}$ for $\Lambda$-consistent
$\Gamma\subseteq\mathcal{L}$.
\end{defn}
\Canonical* 
\begin{proof}
By Proposition \ref{prop.iff}, it suffices to find for each $\Lambda$-consistent
set $\Gamma$ some $s\in\boldsymbol{C}_{\Lambda}$ that satisfies
$\Gamma$. One such is $(M_{(\Lambda,\Gamma^{*})},\Gamma^{*})$, which
exists by the Lindenbaum and Witnessed Lemmas. As $\Gamma\subseteq\Gamma^{*}$,
the Truth Lemma ensure that $(M_{(\Lambda,\Gamma^{*})},\Gamma^{*})\vDash_{v}\Gamma$
for $v$ the canonical valuation.
\end{proof}
\Completeness* 
\begin{proof}
A frame $F\in\boldsymbol{F}$ that satisfies the $\mathsf{K}$-consistent
set $\Gamma$ is the frame of the canonical model $M_{(\mathsf{K},\Gamma^{*})}$:
$\Gamma$ is satisfied at $\Gamma^{*}$ under the canonical valuation.
\end{proof}

\subsubsection{\label{A.frame.char.}Frame Characterization Proofs}

For illustrative purposes, we show two of the claims made in Table
\ref{tab:char}, Section \ref{subsec:frame.char}.
\begin{prop}
For $\varphi\in\mathcal{L}$, $\forall x(K_{x}\varphi\rightarrow K_{x}K_{x}\varphi)$
is valid on the frame $F=(D,W,R)$ if, and only if, $R(\alpha)$ is
transitive for every $\alpha\in D_{\mathtt{agt}}$.
\end{prop}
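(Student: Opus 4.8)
The plan is to prove the biconditional in two directions, following the standard Sahlqvist-style correspondence argument for the modal logic axiom $4$, but taking care of the quantifier over agents and, crucially, the possible non-rigidity of the variable $x$ — though here, since $x$ is a \emph{variable} and variables are rigid by Definition \ref{Def. Valuation}, the non-rigidity complication that defeats the constant version $K_c\varphi \to K_cK_c\varphi$ (cf.\ Figure \ref{fig:4invalid}) does not arise, and this is exactly why the schema is stated with a bound variable.

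\textbf{Soundness direction (transitivity $\Rightarrow$ validity).} Suppose $R(\alpha)$ is transitive for every $\alpha \in D_{\mathtt{agt}}$. Let $M = (D,W,R,I)$ be any model on $F$, let $w \in W$, and let $v$ be any valuation. I must show $M,w \vDash_v \forall x(K_x\varphi \to K_xK_x\varphi)$. By the semantics of $\forall$, fix an arbitrary $x$-variant $u$ of $v$ and show $M,w \vDash_u K_x\varphi \to K_xK_x\varphi$. Assume $M,w\vDash_u K_x\varphi$. Since $K_x\top$ is a formula only when $\mathtt{t}(x)=\mathtt{agt}$ — and here $x$ is quantified in a position forcing $\mathtt{t}(x)=\mathtt{agt}$ — we have $\llbracket x\rrbracket_w^{I,u} = u(x) =: \alpha \in D_{\mathtt{agt}}$, so $R_\alpha$ is an accessibility relation in $M$ and is transitive by hypothesis. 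Now take any $w'$ with $wR_\alpha w'$ and any $w''$ with $w'R_\alpha w''$; by transitivity $wR_\alpha w''$, so $M,w''\vDash_u \varphi$ from the assumption. Here I use that $\llbracket x\rrbracket_{w'}^{I,u} = u(x) = \alpha$ as well, because variable valuations are world-independent — this is the key point where rigidity of variables is used. Hence $M,w'\vDash_u K_x\varphi$ for every such $w'$, so $M,w\vDash_u K_xK_x\varphi$, as required.

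\textbf{Completeness direction (validity $\Rightarrow$ transitivity).} I argue by contraposition: suppose $R(\alpha)$ is not transitive for some $\alpha \in D_{\mathtt{agt}}$, so there are $w,w',w'' \in W$ with $wR_\alpha w'$, $w'R_\alpha w''$ but not $wR_\alpha w''$. I construct a witnessing interpretation and valuation. Pick a variable $x$ with $\mathtt{t}(x)=\mathtt{agt}$ and set $v(x)=\alpha$. For the relation symbol I want a formula $\varphi$ true throughout $R_\alpha(w')$ but false at $w''$; the cleanest choice exploits the freedom in choosing the interpretation: let $p$ be a $0$-ary relation symbol if available, or more safely use a unary relation symbol $r$ and the sentence $\varphi := \forall z\, r(z)$ (or just $r(c)$ for a constant), and define $I$ so that $\varphi$ holds at every world in $R_\alpha(w')$ but fails at $w''$ — this is possible since $w'' \notin R_\alpha(w)$, so no constraint links $w''$ to the $R_\alpha(w')$-worlds. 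Then $M,w'\vDash_v K_x\varphi$ (so $M,w \vDash_v K_xK_x\varphi$ fails... wait — rather) we get $M,w\vDash_v K_x\varphi$ holds vacuously or by construction on $R_\alpha(w) \cup R_\alpha(w')$-worlds chosen to satisfy $\varphi$, while $M,w'\nvDash_v K_x\varphi$ because $w''$ witnesses a failure; hence $M,w\nvDash_v K_xK_x\varphi$ even though $M,w\vDash_v K_x\varphi$, so $M,w\nvDash_v K_x\varphi \to K_xK_x\varphi$, and therefore $M,w\nvDash_v \forall x(K_x\varphi\to K_xK_x\varphi)$.

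\textbf{Expected main obstacle.} The genuinely delicate point is the bookkeeping in the completeness direction: one must set up the interpretation $I$ so that $\varphi$ is forced true on exactly the right set of worlds ($R_\alpha(w) \cup R_\alpha(w')$, say) and false at $w''$, while ensuring these sets are compatible — i.e.\ that $w''$ is not accidentally forced to satisfy $\varphi$ by membership in $R_\alpha(w')$ (it need not be). Since $w'' \notin R_\alpha(w)$ is the only non-membership we are handed, one should choose $\varphi$ to speak only about world-relative extension of a fresh relation symbol so there is no interaction with identity (which is rigid) or with the fixed domain. A second minor subtlety is confirming that $x$ can legitimately be taken of sort $\mathtt{agt}$ so that $K_x$ is well-formed and $R_{\llbracket x\rrbracket}$ is guaranteed to exist; this is immediate from the syntax of $\mathcal{L}$ in Definition \ref{Def. wff}. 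Everything else is the routine modal correspondence argument.
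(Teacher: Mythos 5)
Your soundness direction is correct and is essentially the paper's own argument for that half (the paper runs it as a short reductio, you argue directly; both turn on the rigidity of variable valuations, which you rightly identify as the point that rescues the variable-indexed schema where the constant-indexed one fails). The problem is in the completeness direction --- precisely the part the paper dismisses with ``by contraposition'', so it is the content you actually have to supply --- and there your construction, as written, does not exist. You ask for an interpretation making $\varphi$ true at every world of $R_\alpha(w')$ (and later ``on $R_\alpha(w)\cup R_\alpha(w')$'') while false at $w''$. But $w''\in R_\alpha(w')$ by the very choice of the non-transitivity witnesses ($w' R_\alpha w''$ is one of the hypotheses), so these two requirements are contradictory; your remark that $w''$ ``need not be'' a member of $R_\alpha(w')$ is false, and indeed its membership there is exactly what makes $w''$ usable as a counterexample witness. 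The only non-membership you are handed is $w''\notin R_\alpha(w)$, and that is the one to exploit: interpret a fresh atom so that $\varphi$ is true at every world of $R_\alpha(w)$ --- most simply, true at every world except $w''$ --- and false at $w''$. Then with $v(x)=\alpha$ one gets $M,w\vDash_v K_x\varphi$ (since $w''\notin R_\alpha(w)$), while $M,w'\nvDash_v K_x\varphi$ (witnessed by $w''$), hence $M,w\nvDash_v K_xK_x\varphi$, so the quantified instance fails at $w$. With that correction --- dropping the union with $R_\alpha(w')$ throughout, and fixing the garbled ``wait --- rather'' passage where the truth requirements at $w$ and $w'$ got swapped --- your argument goes through and matches the intended standard correspondence proof.
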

\begin{proof}
$\Leftarrow:$ Let $M$ be build on the frame $F$ in which $R_{\alpha}$ is transitive for all $\alpha\in D_{\mathtt{agt}}$. Let $v$ be an arbitrary valuation and assume $M,w\vDash_{v}K_{x}\varphi$. Then $M,w'\vDash_{v}\varphi$ for all $w'\in R_{v(x)}(w)$. For a contradiction, assume $M,w\vDash_{v}\neg K_{x}K_{x}\varphi$. Then there exists a $w^{*}\in R_{v(x)}(w)$ such that $M,w^{*}\vDash_{v}\neg K_{x}\varphi$, and hence there exists a $w^{**}\in R_{v(x)}(w^{*})$ such that $M,w^{**}\vDash_{v}\neg\varphi$. But $R_{v(x)}$ is transitive, so $w^{**}\in R_{v(x)}(w)$. Hence $w^{**}$ satisfies both $\varphi$ and $\neg\varphi$. On pain of contradiction, $M,w\vDash_{v}K_{x}K_{x}\varphi$. As $v$ was arbitrary, $M,w\vDash_{v}\forall x(K_{x}\varphi\rightarrow K_{x}K_{x}\varphi)$. $\Rightarrow:$ By contraposition.
\end{proof}

\begin{prop}\label{prop.N}
The formula $\exists x_{1},...,x_{n}\left(\left(\bigwedge_{i\leq n}K_{x_{i}}\top\right)\wedge\left(\bigwedge_{i,j\leq n,i\neq j}x_{i}\neq x_{j}\right)\wedge\forall y\left(K_{y}\top\rightarrow\bigvee_{i\leq n}y=x_{i}\right)\right)$ is valid on the frame $F=(W,D,R)$ if, and only if, $|D_{\mathtt{agt}}|=n$.
\end{prop}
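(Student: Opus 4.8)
The plan is to prove both directions of the biconditional by unwinding the semantics of the formula, which I will call $\varphi_N$. The key observation throughout is that for any term $t$ of agent type, $M,w \vDash_v K_t \top$ always holds trivially (the clause for $K_t$ just quantifies over $R_{\llbracket t\rrbracket^{I,v}_w}(w)$, and every world satisfies $\top$), so the role of the conjuncts $K_{x_i}\top$ is purely type-theoretic: since each $x_i$ must be a variable and $K_{x_i}\top$ is only a well-formed formula when $\mathtt{t}(x_i) = \mathtt{agt}$, the existential quantifier $\exists x_1,\dots,x_n$ effectively ranges over $D_{\mathtt{agt}}$-tuples. Similarly $K_y\top$ in the scope of $\forall y$ forces $y$ of agent type, so the universally quantified subformula says ``every agent equals one of $x_1,\dots,x_n$''. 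Thus $\varphi_N$ semantically asserts: there exist $n$ pairwise-distinct elements of $D_{\mathtt{agt}}$ that exhaust $D_{\mathtt{agt}}$, i.e. $|D_{\mathtt{agt}}| = n$.

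First I would do the ($\Leftarrow$) direction. Assume $F = (D,W,R)$ has $|D_{\mathtt{agt}}| = n$, say $D_{\mathtt{agt}} = \{d_1,\dots,d_n\}$ with the $d_i$ distinct. Let $M$ be any model on $F$, $w \in W$, and $v$ any valuation. Define an $\{x_1,\dots,x_n\}$-variant $v'$ of $v$ by $v'(x_i) = d_i$; this is legitimate precisely because each $\mathtt{t}(x_i) = \mathtt{agt}$ (the occurrence of $K_{x_i}\top$ guarantees well-formedness) and $v'(x_i) \in D_{\mathtt{t}(x_i)} = D_{\mathtt{agt}}$. Under $v'$: each $K_{x_i}\top$ holds trivially; each $x_i \neq x_j$ (for $i \neq j$) holds since $d_i \neq d_j$ and $I(=,w) = \{(d,d) : d \in D\}$; and for the last conjunct, take any $y$-variant $u$ of $v'$ — if $M,w\vDash_u K_y\top$ then (well-formedness of $K_y\top$ again, plus $u(y) \in D_{\mathtt{agt}}$) $u(y) = d_i$ for some $i$, so $M,w \vDash_u (y = x_i)$, hence $M,w\vDash_u \bigvee_{i\le n}(y = x_i)$, and since $u$ was arbitrary $M,w \vDash_{v'} \forall y(K_y\top \to \bigvee_{i\le n} y = x_i)$. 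Therefore $M,w \vDash_v \varphi_N$, and since $M,w,v$ were arbitrary, $\varphi_N$ is valid on $F$.

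For ($\Rightarrow$) I argue by contraposition: suppose $|D_{\mathtt{agt}}| \neq n$. There are two cases. If $|D_{\mathtt{agt}}| < n$: then there is no injection of an $n$-element set into $D_{\mathtt{agt}}$, so for any valuation $v'$ of agent-type variables into $D_{\mathtt{agt}}$ at least two of $v'(x_1),\dots,v'(x_n)$ coincide, falsifying the conjunct $\bigwedge_{i\neq j} x_i \neq x_j$; hence no variant witnesses the existential and $M,w \not\vDash_v \varphi_N$ for any $M$ on $F$, any $w$, any $v$. If $|D_{\mathtt{agt}}| > n$: fix any $M$ on $F$ and any $w$; for a contradiction suppose $M,w\vDash_v \varphi_N$, so there is a variant $v'$ with $v'(x_i) \in D_{\mathtt{agt}}$ satisfying all conjuncts. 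Pick $d \in D_{\mathtt{agt}} \setminus \{v'(x_1),\dots,v'(x_n)\}$ (nonempty since $|D_{\mathtt{agt}}| > n$), and let $u$ be the $y$-variant of $v'$ with $u(y) = d$. Then $M,w\vDash_u K_y\top$ (trivially), but $M,w \not\vDash_u (y = x_i)$ for every $i$ since $d \neq v'(x_i)$, contradicting the third conjunct. So $M,w\not\vDash_v \varphi_N$, and $\varphi_N$ is not valid on $F$. In either case $\varphi_N$ fails to be valid, completing the contrapositive.

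The only subtlety — and the step warranting care rather than difficulty — is making explicit that the subformulas $K_{x_i}\top$ and $K_y\top$ are doing the work of a sort restriction: they are \emph{syntactically} admissible only for agent-type terms (by Definition \ref{Def. wff}, where $K_\dagger$ requires $\mathtt{t}(\dagger) = \mathtt{agt}$), so the quantifiers over $x_i$ and $y$, while nominally ranging over all of $D$, are in effect restricted to $D_{\mathtt{agt}}$ wherever these conjuncts appear. Everything else is a direct unwinding of Definition \ref{def:satisfaction} together with the fixed interpretation of $=$ from the definition of an interpretation.
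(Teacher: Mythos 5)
Your proof is correct and follows essentially the same route as the paper's: both directions unwind the satisfaction clauses, using the well-formedness of $K_{x_i}\top$ and $K_y\top$ to force the quantifiers onto $D_{\mathtt{agt}}$, an enumeration of $D_{\mathtt{agt}}$ as witnesses for ($\Leftarrow$), and a pigeonhole/extra-agent case split for ($\Rightarrow$). No gaps to report.
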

\begin{proof}
Notice first that the formula, call it $N$, is only well-formed iff the variables $x_1,...,x_n ,y$ are all of type $\mathtt{agt}$, ensured by them appearing as modal operator subscripts. This ensures that the quantifications range only over $D_{\mathtt{agt}}$.

$\Leftarrow:$ Assume given a pointed model $(M,w)$ build on a frame $F=(W,D,R)$ with $|D_{\mathtt{agt}}|=n$. Assume $D_{\mathtt{agt}}$ enumerated such that $D_{\mathtt{agt}}=\{\alpha_1,...,\alpha_n\}$. Let $v$ be an arbitrary valuation. We argue that $M,w\vDash_v N$. Let $v'$ be the valuation identical to $v$ on all points, except for each $i\leq n$, $v'(x_i)=\alpha_i$. Then $M,w\vDash_{v'} \left(\left(\bigwedge_{i\leq n}K_{x_{i}}\top\right)\wedge\left(\bigwedge_{i,j\leq n,i\neq j}x_{i}\neq x_{j}\right)\wedge\forall y\left(K_{y}\top\rightarrow\bigvee_{i\leq n}y=x_{i}\right)\right)$, as it satisfies each conjunct: First, $M,w\vDash_{v'} \bigwedge_{i\leq n}K_{x_{i}}\top$, trivially. Second, $M,w\vDash_{v'} \bigwedge_{i,j\leq n,i\neq j}x_{i}\neq x_{j}$ as $v'(x_i)\neq v'(x_j)$ for all $i,j\leq n$, $i\neq j$, by construction of $v'$. Third and finally, $M,w\vDash_{v'} \forall y\left(K_{y}\top\rightarrow\bigvee_{i\leq n}y=x_{i}\right)$: as $y$ is of type $\mathtt{agt}$, for any $y$-variant $v''$ of $v'$, $v''\in D_{\mathtt{agt}}$, but then $v''(y)=v''(x_i)$ for some $i\leq n$, by construction of $v'$, satisfying the antecedent.

$\Rightarrow:$ Assume given a pointed model $(M,w)$ build on a frame $F=(W,D,R)$ with $|D_{\mathtt{agt}}|\neq n$. Let $v$ be an arbitrary valuation. We argue that not $M,w\vDash_v N$, as $(M,w)$ will falsify either the second or the third conjunct (the first conjunct is satisfied: as each variable $x_i$ is of type $\mathtt{agt}$ for all $i\leq n$, each $K_{x_i}\top$ is satisfied at $w$ under any valuation). If $|D_{\mathtt{agt}}|<n$, then under any valuation $v$, $(M,w)$ will falsify the second conjunct: as each variable $x_i$ is of type $\mathtt{agt}$, $v(x_i)\in D_{\mathtt{agt}}$ for all $i\in {1,...,n}$. But then $v(x_i)=v(x_j)$ for at least two $i,j\leq n, i\neq j$. But then $M,w\vDash_v x_i = x_j$, contrary to the second conjunct. If $|D_{\mathtt{agt}}|>n$, then under any valuation $v$, $(M,w)$ will falsify the third conjunct, as there exists a $y$-variant $v'$ of $v$ such that $v'(y)\neq v(x_i)$ for any $i\leq n$. The existence of this $y$-variant $v'$ is ensured by $|D_{\mathtt{agt}}|>n$, which implies that $D_{\mathtt{agt}}/{v(x_i)\colon i\leq n}\neq\emptyset$, so that we can assume $v'{y}\in D_{\mathtt{agt}}/{v(x_i)\colon i\leq n}$. 
\end{proof}

\subsubsection{\label{A.decidability}Decidability}

\begin{prop}
Let $\mathsf{K}_{n/m}$, $\mathsf{K}_{n}$ and $\mathsf{K}$ be given
in $\mathcal{L}$, based on the signature $\Sigma$. Let $\mathcal{L}_{\mathtt{agt}}\subseteq\mathcal{L}$
contain all formulas containing only agent-terms, $t\in\mathtt{t}^{-1}(\mathtt{agt})$.
\begin{enumerate}
\item For all $\varphi\in\mathcal{L}$, it is decidable whether $\vdash_{\mathsf{K}_{n/m}}\varphi$ or not.
\item 
    \textit{a)} For all $\varphi\in\mathcal{L}_{\mathtt{agt}}$, it is decidable whether $\vdash_{\mathsf{K}_{n}}\varphi$ or not. 
    \textit{b)} In general, $\vdash_{\mathsf{K}_{n}}\varphi$ is undecidable.
\item In general, $\vdash_{\mathsf{K}}\varphi$ is undecidable.
\end{enumerate}
\end{prop}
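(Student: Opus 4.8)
The plan is to reduce, in each case, derivability to validity over the appropriate frame class using the soundness and completeness results of Section~\ref{subsec:static_metatheory}, and then (for the positive items~1 and~2a) to exploit that the relevant frames have a fixed finite domain, while (for the negative items~2b and~3) locating classical first-order logic as a fragment. For item~1, recall that $\mathsf{K}_{n/m}$ is sound and complete with respect to $\boldsymbol{F}_{(n,m-n)}$, so $\vdash_{\mathsf{K}_{n/m}}\varphi$ iff $\varphi$ is valid over $\boldsymbol{F}_{(n,m-n)}$. Since---up to renaming of domain elements, under which the logic is invariant---every such frame has domain $D_{\mathtt{agt}}=\{1,\dots,n\}$, $D_{\mathtt{obj}}=\{n{+}1,\dots,m\}$, I would eliminate the first-order apparatus by translating $\mathcal{L}$ into propositional multimodal logic with modalities $\Box_1,\dots,\Box_n$. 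The translation introduces propositional atoms recording the world-relative data of an interpretation ($\mathsf{is}_{c,d}$ for ``$c$ denotes $d$'', $\mathsf{holds}_{r,\bar d}$ for ``$\bar d\in I(r,\cdot)$'', and analogues for function symbols), together with a finite set $\Delta$ of globally assumed constraints forcing each $\mathsf{is}_{c,\cdot}$ to select exactly one element, the function atoms to encode (possibly partial) functions, and the identity atoms to carry their fixed truth values; a formula $\varphi$ is then translated relative to an assignment $\sigma$ of its finitely many free variables into $D$, with $\forall x\,\psi$ becoming $\bigwedge_{d\in D_{\mathtt{t}(x)}}\mathsf{tr}(\psi,\sigma[x\mapsto d])$, an atom $r(\bar t)$ becoming a finite disjunction over the possible denotations of its terms, and $K_t\psi$ becoming $\bigwedge_{a\le n}(\text{``}t\text{ denotes }a\text{''}\to\Box_a\,\mathsf{tr}(\psi,\sigma))$, where the denotation tests branch over the non-rigid constants and function terms in $t$. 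A routine induction then shows that $\varphi$ is valid over $\boldsymbol{F}_{(n,m-n)}$ iff $\bigwedge_\sigma\mathsf{tr}(\varphi,\sigma)$ is a global consequence of $\Delta$ in multimodal $\mathsf{K}_n$; the latter is a decidable problem, which settles item~1. (Alternatively, item~1 follows from the finite model property: $\mathsf{K}_{n/m}$ is finitely axiomatised, so its theorems are recursively enumerable, and filtrating a size-$(n,m-n)$ countermodel through the finite closure of $\varphi$ under subformulas and domain-instances yields a finite countermodel, so the non-theorems are recursively enumerable too.)

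For item~2a, the point is that a formula $\varphi\in\mathcal{L}_{\mathtt{agt}}$ mentions no object terms, object constants or object-argument positions, so its truth at a world of an $n$-agent frame depends only on the fixed $n$-element agent domain and the accessibility relations; collapsing the object domain of any countermodel to a singleton shows $\varphi$ is valid over all $n$-agent frames iff it is valid over all $n$-agent frames with a one-element object domain. By completeness of $\mathsf{K}_n$, $\vdash_{\mathsf{K}_n}\varphi$ iff $\varphi$ is valid over all $n$-agent frames, and the latter is decided by the translation of item~1 specialised to $m=n{+}1$ (the object-related atoms and conjuncts simply never occur for $\varphi\in\mathcal{L}_{\mathtt{agt}}$). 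For items~2b and~3 I would fix a signature $\Sigma$ containing a binary relation symbol $r$ both of whose argument positions admit object terms, and observe that the modality-free, object-sorted fragment of $\mathcal{L}$ over $r$ and $=$ is, syntactically, classical first-order logic. The truth of a sentence $\psi$ of this fragment at a world $w$ of a frame depends only on the first-order structure $(D_{\mathtt{obj}},I(r,w))$, and every non-empty such structure is realised at a one-world frame (with an $n$-element agent domain for $\mathsf{K}_n$, or any non-empty one for $\mathsf{K}$, and arbitrary accessibility relations); so by soundness and completeness, $\vdash_{\mathsf{K}_n}\psi$ (resp.\ $\vdash_{\mathsf{K}}\psi$) iff $\psi$ holds in all non-empty first-order structures, i.e.\ iff $\psi$ is classically valid. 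As first-order validity is already undecidable with a single binary relation symbol (Church), both $\vdash_{\mathsf{K}_n}$ on $\mathcal{L}$ and $\vdash_{\mathsf{K}}$ on $\mathcal{L}$ are undecidable, giving items~2b and~3.

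The main obstacle is item~1: although the translation is conceptually routine, it must be arranged so that non-rigid constants and partial function symbols are handled by branching over denotations, so that the correctness induction goes through cleanly, and so that the target really is a \emph{decidable} propositional problem---multimodal $\mathsf{K}_n$-consequence under global assumptions---rather than something whose decidability would itself need an argument; the filtration alternative must likewise be checked against the interaction of quantifiers with the fixed finite domain and with identity. By contrast, items~2b and~3 are essentially immediate once classical first-order logic is identified as a fragment and Church's theorem is invoked, and item~2a reduces directly to item~1.
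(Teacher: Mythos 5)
Your items 2a, 2b and 3 track the paper's own argument closely: the paper decides $\vdash_{\mathsf{K}_n}$ on $\mathcal{L}_{\mathtt{agt}}$ by noting that such a formula is a theorem of $\mathsf{K}_n$ iff it is a theorem of $\mathsf{K}_{n/n+1}$ and then invoking item 1 (your object-domain collapse to a singleton is exactly the justification the paper leaves implicit), and it settles 2b and 3 in one line by observing that both logics contain unrestricted first-order logic over the arbitrary object domain --- your explicit reduction via a binary relation, one-world frames and Church's theorem is a fleshed-out version of that remark. Where you genuinely diverge is item 1. The paper argues directly and semantically: with the domain fixed of size $(n,m-n)$ and only the finitely many non-logical symbols of $\varphi$ relevant, there are only finitely many distinct $\varphi$-relevant world-types; since $\varphi$ has finite modal depth $k$, its truth depends only on worlds within $k$ steps of the point, and up to bisimulation there are only finitely many pointed models of depth at most $k$ over these world-types, so validity over $\boldsymbol{F}_{(n,m-n)}$ is decided by a finite enumeration and model-checking procedure. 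You instead compile the problem into propositional multimodal logic: atoms recording the world-relative interpretation data, a finite set $\Delta$ of global constraints, and a reduction of validity to global consequence in $n$-modal $\mathsf{K}$, whose decidability is a known result. Your route buys modularity --- the combinatorial core is delegated to a standard decidability theorem --- at the cost of the translation-correctness induction and the bookkeeping for non-rigid constants and partial function symbols, which you rightly flag as the delicate points; the paper's route is self-contained and shorter, though its ``finitely many pointed models up to bisimulation of bounded depth'' step is itself only sketched. Both arguments are sound in outline; note only that your parenthetical filtration alternative is the furthest from what the paper does and would need the most additional verification, whereas your main translation argument and the paper's enumeration argument establish the same fact by different means.
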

\begin{proof}
\textbf{1.} $\mathsf{K}_{n/m}$ is sound and complete w.r.t. $\boldsymbol{F}_{n/m}$. To check the validity of any $\varphi\in\mathcal{L}$ over $\boldsymbol{F}_{n/m}$ is a finite procedure: Up to isomorphism, all $F\in\boldsymbol{F}_{n/m}$ share domain $D=D_{\mathtt{agt}}\dot{\cup}D_{\mathtt{obj}}$, $|D_{\mathtt{agt}}|=n$, $|D_{\mathtt{obj}}|=(m-n)$. There are finitely many non-logical symbols in $\varphi$; symbols not in $\varphi$ are irrelevant to its satisfaction. With $D$ fixed, any $w\in F$ will be assigned one of finitely many extensions of $\varphi$'s non-logical symbols: thus, the maximal set of distinct $\varphi$-relevant worlds $W_{\varphi}$ is finite. As $\varphi$ has modal depth $k$, whether $M,w\vDash_{v}\varphi$ depends on at most all worlds within $k$ steps from $w$. Checking whether $M,w\vDash_{v}\varphi$ is thus a finite procedure for all formulas given the finiteness of $D$. Finally, up to bisimulation, the set of graphs over $W_{\varphi}$ and $\{R(\alpha),\alpha\in\text{\ensuremath{D_{\mathtt{agt}}}}\}$ with maximal path length $k$ is finite: hence, the set of needed to be checked pointed models is finite. 
\textbf{2a.} For any $\varphi\in\mathcal{L}_{\mathtt{agt}}$, $\varphi$ is a theorem of $\mathsf{K}_{n}$ iff it is a theorem of $\mathsf{K}_{n/m}$, for any $m>n$. For such $\varphi$, to determine whether $\vdash_{\mathsf{K}_{n}}\varphi$, we can thus check whether $\vdash_{\mathsf{K}_{n/n+1}}\varphi$, which is decidable by 1.
\textbf{2b} and \textbf{3.} General undecidability for $\mathsf{K}_{n}$ and $\mathsf{K}$ follows as both contain unrestricted first-order logic for the arbitrary object domain.
\end{proof}

\subsection{\label{subsec:comp_dyn} Dynamic Term-Modal Logic: Completeness through Translation}

The completeness proof for the dynamic logic $\mathsf{K+AM}$ is based on a reduction argument. The argument relies on the existence of so-called reduction axioms for the dynamic language $\mathcal{L}_{AM}$. The axioms used for this specific proof are listed in Table \ref{tab:axioms_dyn} and can be used to translate every formula from the dynamic language $\mathcal{L}_{AM}$ into a provably equivalent $\mathcal{L}$-formula. Given this translation, the completeness of the dynamic logic follows from the known completeness of the static logic $\mathsf{K}$, established in Corollary \ref{Cor. Complete ALL-1}. The building blocks of the specific reduction argument required to prove completeness for $\mathsf{K+AM}$ are provided below.

First, we provide a translation that by finite iterative application to any formula in the dynamic language $\mathcal{L}_{AM}$ results in a formula from the static language $\mathcal{L}$. The translation is left-to-right: a formula occurring on the left is translated to the formula on the right.

\begin{defn} The \textbf{\textit{translation}} $\tau : \mathcal{L}_{AM} \to \mathcal{L}_{AM}$
is defined as follows:
\begin{flalign*}
      \tau((t_1=t_2)) & = (t_1 = t_2) & \\
      \tau(r(t_{1},...,t_{n})) & =  r(t_{1},...,t_{n}) & \\
      \tau(\neg \varphi) & =  \neg \tau(\varphi)  & \\ 
      \tau(\varphi \wedge \psi) & = \tau(\varphi) \wedge \tau(\psi)  & \\
      \tau(K_{t}\varphi) & = K_{t} \tau(\varphi)  & \\
      \tau(\forall x \varphi) & =  \forall x \tau(\varphi) & \\ 
     \tau([A,e]r(t_{1},...,t_{n})) & =  \tau(\mathsf{pre}(e) \to \post^A(e)(r(t_{1},...,t_{n}))) & \\
      \tau([A,e]\neg \varphi) & =  \tau(\mathsf{pre}(e) \to \neg [A,e]\varphi)  & \\ 
      \tau([A,e](\varphi \wedge \psi)) & =  \tau([A,e]\varphi \wedge [A,e]\psi) & \\
      \tau([A,e]K_{t}\varphi) & =  \tau(\mathsf{pre}(e) \to \bigwedge_{e'\in E^A}(Q(e,e')(x^\star \mapsto t) K_{t} [A,e'] \varphi)) & \\ 
     \tau([A,e]\forall x \varphi) & =  \tau(\mathsf{pre}(e) \to \forall x [A,e]\varphi) & \\
      \tau([A,e][A',e'] \varphi) & =  \tau([A,e \circ A',e'] \varphi) 
\end{flalign*}

\end{defn}

Next, we adapt the formula complexity function introduced by \cite{ditmarsch2007}. 

\begin{defn}
\label{Def. Complexity ordering} The \textbf{\textit{complexity}} $c: \mathcal{L}_{AM} \to \mathbb{N}$ is defined as follows, where $\mathtt{GA}(\mathcal{L})$ abbreviates $\mathtt{GroundAtoms}(\mathcal{L})$:
\begin{align*}
    c(r(t_{1},...,t_{n})) & = 1 &\\
    c(\neg \varphi) & = 1 + c(\varphi) & \\
    c(\varphi \wedge \varphi') & = 1 + \max(c(\varphi),c(\varphi')) & \\
    c(K_t \varphi) & = 1 + c(\varphi) & \\
    c(\forall x\varphi) & = 1 + c(\varphi) & \\
    c([A,e]\varphi) & = (4 + c(A))\cdot c(\varphi) & \\
    c(A) & = \max\left(\bigcup_{e,e'\in E, r(t_1,\dots,t_n)\in \mathtt{GA}(\mathcal{L}) }\{c(\mathsf{pre}^A(e)) \}\cup \{c(\mathsf{post}^A(e)(r(t_1,\dots,t_n))\} \cup \{c(Q(e,e'))\} \right)
\end{align*} 
\end{defn}

A standard ordering lemma ensures that the right side of a given reduction axiom is indeed less complex than the left side. 

\begin{lem} For all $\varphi$, $\psi$ and $\chi$:
\label{lemma:complexity}
\begin{enumerate}
    \item $c(\psi)\geq c(\varphi)$ if $\varphi\in Sub(\psi)$ (where $Sub(\psi)$ is the set of subformulas of $\psi$)
    \item $c([A,e] r(t_{1},...,t_{n})) > c(\mathsf{pre}(e)\to \mathsf{post}(e)(r(t_{1},...,t_{n})))$
    \item $c([A,e]\neg \varphi) > c(\mathsf{pre}(e) \to \neg [A,e]\varphi))$ 
    \item $c([A,e](\varphi \wedge \psi)) > c(([A,e]\varphi) \wedge ([A,e]\psi))$
    \item $c([A,e]K_{t}\varphi) > c(\mathsf{pre}(e) \to \bigwedge_{e'\in E}(Q(e,e')(x^\star \mapsto t) K_{t} [A,e'] \varphi))$ 
    \item $c([A,e]\forall x \varphi) > c(\mathsf{pre}(e) \to \forall x [A,e]\varphi)$ 
    \item $c([A,e][A',e']\varphi) > c([A,e\circ A',e']\varphi)$ 
\end{enumerate}

\end{lem}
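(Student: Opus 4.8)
The plan is to prove each of the seven inequalities in Lemma \ref{lemma:complexity} by a direct calculation from Definition \ref{Def. Complexity ordering}, using item 1 (monotonicity of $c$ under subformulas) as the uniform tool for bounding the complexity of the pieces that appear on the right-hand sides. First I would prove item 1 by a routine induction on the structure of $\psi$: for each clause of the complexity function, the value assigned to $\psi$ is strictly larger than (or for the product-update clause, at least as large as, after unwinding) the value assigned to any immediate subformula, and for $[A,e]\varphi$ one checks that $c(\varphi) \leq (4+c(A))\cdot c(\varphi)$ since $c(A)\geq 1$ and $c(\varphi)\geq 1$; transitivity of $\geq$ then handles arbitrary subformulas. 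I would also record the elementary facts $c(\chi)\geq 1$ for all $\chi$ and $c(\mathsf{pre}(e)), c(\mathsf{post}(e)(r(\vec t))), c(Q(e,e')) \leq c(A)$ for all relevant $e,e',r(\vec t)$, which are immediate from the definition of $c(A)$.

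Next I would dispatch items 2--6 one at a time, in each case writing $m := c(A)$ and $n := c(\varphi)$ (and $n' := c(\psi)$ where needed) and comparing $c$ of the two sides. For item 2, the left side is $c([A,e]r(\vec t)) = (4+m)\cdot 1 = 4+m$, while the right side $c(\mathsf{pre}(e)\to\mathsf{post}(e)(r(\vec t)))$ — recalling that $\to$ abbreviates $\neg(\cdot\wedge\neg(\cdot))$, so this is $c(\neg(\mathsf{pre}(e)\wedge\neg\mathsf{post}(e)(r(\vec t))))$ — expands to $1 + (1 + \max(c(\mathsf{pre}(e)), 1 + c(\mathsf{post}(e)(r(\vec t)))))$, which is at most $3 + m$, strictly less than $4+m$. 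Items 3--6 follow the same recipe: the left side always evaluates to $(4+m)\cdot n$ (or $(4+m)\cdot(1+\max(n,n'))$ for item 4, and $(4+m)\cdot(1+n)$ for items 3 and 6), and on the right side one pushes the action modality inward across one connective, so each subformula of the form $[A,e']\varphi$ or $[A,e]\varphi$ (with the same or a co-indexed event) contributes $(4+m)\cdot n$, wrapped in a bounded number of Boolean/quantifier/modal layers plus occurrences of $\mathsf{pre}(e)$, $Q(e,e')$, etc., each of complexity $\leq m$. Since the conjunction in item 5 ranges over the finite set $E^A$, its complexity is $1 + \max$ over finitely many terms each bounded by $(4+m)\cdot n + m + 1$ or so, and one checks $(4+m)\cdot n + C \cdot m < (4+m)\cdot n$ is \emph{false} — so here care is needed: the genuinely operative point, as in \cite{ditmarsch2007}, is that the modality $[A,e]$ is \emph{replicated but applied to a strictly smaller formula} $\varphi$, so the product $(4+m)\cdot c(K_t\varphi) = (4+m)(n+1)$ on the left dominates $(4+m)\cdot n$ (the cost of $[A,e']\varphi$) plus the $O(m)$ overhead, precisely because the difference $(4+m)(n+1) - (4+m)n = 4+m$ exceeds the overhead. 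I would write this comparison out carefully for item 5 and note the others are analogous or easier.

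For item 7, the composition case, the left side is $c([A,e][A',e']\varphi) = (4+c(A))\cdot c([A',e']\varphi) = (4+c(A))\cdot(4+c(A'))\cdot c(\varphi)$, while the right side is $c([A\circ A', (e,e')]\varphi) = (4 + c(A\circ A'))\cdot c(\varphi)$. So it suffices to show $c(A\circ A') < (4+c(A))(4+c(A')) - 4 = c(A)c(A') + 4c(A) + 4c(A') + 12$. From Definition \ref{Def. Action composition}, the components of $A\circ A'$ are Boolean combinations of components of $A$, components of $A'$, and formulas of the form $[A,e]\psi$ where $\psi$ is a component of $A'$; the worst of these has complexity roughly $1 + (4+c(A))\cdot c(A')$ (from the $[A_1,e_1]\mathsf{post}_2$ clause, or the $[A_1,e_1]Q_2$ clause), plus a constant for the outer conjunction, giving $c(A\circ A') \leq c(A)\cdot c(A') + 4c(A') + O(1)$, which is comfortably below the required bound. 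I expect this composition case to be the main obstacle: one must carefully bound $c(A\circ A')$ by inspecting each of the four clauses of Definition \ref{Def. Action composition} — the precondition $\mathsf{pre}(e_1)\wedge[A_1,e_1]\mathsf{pre}(e_2)$, the edge-condition $Q_1 \wedge [A_1,e_1]Q_2$, and the two branches of the postcondition — and in particular the $[A,e]\mathsf{post}_2(e_2)(r(\vec t))$ branch contributes a nested action modality whose complexity must be estimated via the $[A,e]\varphi$ clause of $c$, and then fed back into the $\max$ defining $c(A\circ A')$; keeping the arithmetic bookkeeping tight enough that the final product inequality goes through is the delicate part, though it is ultimately a finite, mechanical estimate following the blueprint of \cite[][]{ditmarsch2007}.
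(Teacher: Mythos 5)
Your overall strategy---item 1 by induction, items 2--7 by direct computation using $c(\mathsf{pre}(e)),c(\mathsf{post}(e)(\cdot)),c(Q(e,e'))\le c(A)$ as the basic bounds---is exactly the route the paper intends: its own ``proof'' is a one-line deferral to \cite[Chapter 7]{ditmarsch2007}, so all the substance lies in the bookkeeping you sketch. Your treatment of the composition case (item 7) is correct: inspecting Definition \ref{Def. Action composition} gives $c(A\circ A')\le 1+(4+c(A))\,c(A')$, comfortably below $(4+c(A))(4+c(A'))-4$.

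There is, however, a genuine gap in item 5, and it stems from the convention you adopt for $\to$. In item 2 you unfold $\varphi\to\psi$ as $\neg(\varphi\wedge\neg\psi)$, so $c(\varphi\to\psi)=2+\max(c(\varphi),1+c(\psi))$. Under that convention, writing $m=c(A)$ and $n=c(\varphi)$, the right-hand side of item 5 costs at least $7+(4+m)n$: the two nested implications each add $3$ on top of their dominant subformula, $K_t$ adds $1$, the conjunction adds $1$ more when $|E|\ge 2$, and $c(\mathsf{pre}(e))$, $c(Q(e,e'))$ sit under maxima dominated by $(4+m)n$, so the overhead is a \emph{constant} (not ``$O(m)$'' as you write, and not covered by the slack). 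Since the left-hand side is $4+m+(4+m)n$, your claim that the difference $4+m$ exceeds the overhead fails whenever $c(A)\le 3$; concretely, for an action model with atomic pre-, post- and edge-conditions and $\varphi$ atomic, the left side is $10$ and the right side is at least $12$, so item 5 as you compute it is false. The lemma must be read with the standard convention of \cite{ditmarsch2007}: the defined connective $\to$ and the finite conjunction $\bigwedge_{e'\in E}$ each count as a single constructor contributing $1+\max$ of their arguments (in particular the conjunction must not be unfolded into nested binary conjunctions, since $|E|$ is unbounded and no fixed constant in the clause $c([A,e]\varphi)=(4+c(A))c(\varphi)$ could absorb an $|E|$-dependent overhead). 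Under that convention the overhead in item 5 is exactly $4$, the slack $4+c(A)\ge 5$ suffices, and items 2, 3, 4 and 6---which are insensitive to the choice of convention---go through as you describe. So fix the convention once, state it explicitly, and redo the item-5 arithmetic accordingly; the rest of your plan is sound.
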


\begin{proof}
The proofs are straightforward, along the lines of those provided in \cite[Chapter 7]{ditmarsch2007}. 
\end{proof}

The complexity function $c$ induces an ordering of $\mathcal{L}_{AM}$ formulas which is used to prove the following Lemma, stating that the two sides of a reduction axiom are indeed provably equivalent.

\begin{lem} For all $\varphi \in \mathcal{L}_{AM}$: $\vdash_{\mathsf{K+AM}}\varphi \leftrightarrow \tau(\varphi)$.
\label{lemma:induction syntactic equivalence}
\end{lem}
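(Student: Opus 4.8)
**The plan is to prove Lemma \ref{lemma:induction syntactic equivalence} by strong induction on the complexity $c(\varphi)$**, using the complexity function from Definition \ref{Def. Complexity ordering} and the ordering facts from Lemma \ref{lemma:complexity}. The translation $\tau$ is built so that each clause rewrites a formula into one that, by Lemma \ref{lemma:complexity}, is strictly less complex (for the non-trivial clauses) or syntactically simpler in an obvious sense (for the atomic and Boolean clauses), so that $\tau$ terminates and the induction hypothesis is available for all formulas involved on the right-hand sides.

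First I would set up the induction. The base case is $\varphi = r(t_1,\dots,t_n)$ (including $t_1 = t_2$): here $\tau(\varphi) = \varphi$ and $\vdash_{\mathsf{K+AM}} \varphi \leftrightarrow \varphi$ trivially. For the inductive step, I split into cases according to the outermost construct of $\varphi$. The purely static cases ($\neg\psi$, $\psi\wedge\chi$, $K_t\psi$, $\forall x\psi$) are routine: by the induction hypothesis $\vdash \psi \leftrightarrow \tau(\psi)$ (and similarly for $\chi$), and then one uses standard normal-modal/first-order reasoning in $\mathsf{K}$ — replacement of provable equivalents under negation, conjunction, $K_t$ (via the $\mathsf{K}$ axiom and rule KG) and $\forall x$ (via UG) — to conclude $\vdash \varphi \leftrightarrow \tau(\varphi)$; applicability of the IH is justified by clause 1 of Lemma \ref{lemma:complexity}, since $\psi, \chi \in Sub(\varphi)$.

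The substantive cases are those where $\varphi = [A,e]\psi$. Here the argument is uniform: for each shape of $\psi$, the corresponding reduction axiom in Table \ref{tab:axioms_dyn} gives $\vdash_{\mathsf{K+AM}} [A,e]\psi \leftrightarrow \chi$ where $\chi$ is the right-hand side of that axiom; then $\tau([A,e]\psi) = \tau(\chi)$ by definition of $\tau$, and Lemma \ref{lemma:complexity} (items 2–7) guarantees $c(\chi) < c([A,e]\psi)$, so the induction hypothesis yields $\vdash_{\mathsf{K+AM}} \chi \leftrightarrow \tau(\chi)$; chaining the two biconditionals gives $\vdash_{\mathsf{K+AM}} [A,e]\psi \leftrightarrow \tau([A,e]\psi)$. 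One must treat the atom/postcondition case ($\psi = r(t_1,\dots,t_n)$, using the \emph{Action and atom} axiom), the negation, conjunction, and quantification cases, the \emph{Action and knowledge} case (noting $E^A$ is finite, so the conjunction $\bigwedge_{e'\in E^A}$ is a genuine $\mathcal{L}_{AM}$-formula), and the \emph{Action composition} case $\psi = [A',e']\chi$, where one additionally invokes that the composed action $(A,e)\circ(A',e')$ is a well-defined action model (Definition \ref{Def. Action composition}).

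**The main obstacle** is verifying that the complexity function is correctly designed so that \emph{every} right-hand side is strictly less complex than its left-hand side — in particular, the action-composition case, where $c$ must be checked to decrease even though the composed action $(A,e)\circ(A',e')$ can be larger than either component, and the knowledge case, where the right-hand side contains $|E^A|$ copies of subformulas $[A,e']\psi$ nested inside edge-condition substitutions $Q(e,e')[x^\star\mapsto t]$. These are exactly the points handled by Lemma \ref{lemma:complexity}, which I am entitled to assume; the remaining work is the bookkeeping of assembling the biconditionals in $\mathsf{K+AM}$, which is entirely standard (cf. \cite[Chapter 7]{ditmarsch2007}). One should also double-check that substitution instances behave well — e.g. that $Q(e,e')(x^\star\mapsto t)$ is a legitimate $\mathcal{L}_{AM}$-formula and that provable equivalence is preserved under the substitutions appearing in the postcondition and edge-condition clauses — but this follows from the substitution conventions fixed in Section \ref{sec:static_language}.
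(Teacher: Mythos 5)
Your proposal is correct and follows essentially the same route as the paper: the paper's proof is exactly an induction on the complexity measure $c$, using the reduction axioms together with the strict decreases recorded in Lemma \ref{lemma:complexity}, citing the standard argument of \cite[Chapter 7]{ditmarsch2007}. Your write-up merely spells out the case analysis and the chaining of biconditionals that the paper leaves implicit.
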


\begin{proof}
The proof is by induction on the complexity $c(\varphi)$. It is similar to the one provided in  \cite[Chapter 7]{ditmarsch2007}. 
\end{proof}

The completeness of  $\mathsf{K+AM}$ (Corollary \ref{cor:comp_dyn}) follows from the soundness of the dynamic proof system, Lemma \ref{lemma:induction syntactic equivalence} and the completeness of the static sub-system (Corollary \ref{Cor. Complete ALL-1}). The argument, which is standard, is as follows.

\begin{prop} $\vDash \varphi$ implies $\vdash_{\mathsf{K+AM}}\varphi$, for all $\varphi \in \mathcal{L}_{AM}$.
\end{prop}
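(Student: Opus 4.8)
The plan is to run the standard DEL completeness-by-reduction argument, using the ingredients already assembled in this subsection. The statement to prove is: for all $\varphi\in\mathcal{L}_{AM}$, if $\vDash\varphi$ then $\vdash_{\mathsf{K+AM}}\varphi$. The three available pieces are: soundness of $\mathsf{K+AM}$ (Proposition \ref{prop:sound_dyn}), the provable-equivalence lemma $\vdash_{\mathsf{K+AM}}\varphi\leftrightarrow\tau(\varphi)$ (Lemma \ref{lemma:induction syntactic equivalence}), and completeness of the static fragment $\mathsf{K}$ with respect to all frames (Corollary \ref{Cor. Complete ALL-1}). The only genuinely new observation needed is that iterating $\tau$ terminates on every $\mathcal{L}_{AM}$-formula and lands in $\mathcal{L}$.

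First I would argue termination of the translation: define $\tau^*(\varphi)$ by repeatedly applying $\tau$ until a fixed point is reached. Using the complexity measure $c$ from Definition \ref{Def. Complexity ordering} together with the ordering inequalities of Lemma \ref{lemma:complexity}, each application of $\tau$ to a formula that still contains an action modality strictly decreases (in the appropriate well-founded sense on the multiset/structure of $c$-values of the outermost-action-modality occurrences) the complexity, so the rewriting process halts. Since $\tau$ leaves a formula unchanged exactly when it has no action modalities, the fixed point $\tau^*(\varphi)$ is an $\mathcal{L}$-formula. A short induction on $c(\varphi)$, using Lemma \ref{lemma:induction syntactic equivalence} at each step and transitivity of provable equivalence, then gives $\vdash_{\mathsf{K+AM}}\varphi\leftrightarrow\tau^*(\varphi)$ with $\tau^*(\varphi)\in\mathcal{L}$.

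Then the main argument is four lines. Suppose $\vDash\varphi$. By the equivalence just established and soundness of $\mathsf{K+AM}$ (Proposition \ref{prop:sound_dyn}), we get $\vDash\varphi\leftrightarrow\tau^*(\varphi)$, hence $\vDash\tau^*(\varphi)$. Since $\tau^*(\varphi)\in\mathcal{L}$, completeness of $\mathsf{K}$ (Corollary \ref{Cor. Complete ALL-1}) yields $\vdash_{\mathsf{K}}\tau^*(\varphi)$, and a fortiori $\vdash_{\mathsf{K+AM}}\tau^*(\varphi)$ since $\mathsf{K+AM}$ extends $\mathsf{K}$. Finally, from $\vdash_{\mathsf{K+AM}}\tau^*(\varphi)$ and $\vdash_{\mathsf{K+AM}}\varphi\leftrightarrow\tau^*(\varphi)$, modus ponens gives $\vdash_{\mathsf{K+AM}}\varphi$. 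The ``moreover'' clause of Corollary \ref{cor:comp_dyn} (completeness over narrower model classes) follows by the same argument, replacing Corollary \ref{Cor. Complete ALL-1} with the corresponding frame-characterization completeness result from Section \ref{subsec:frame.char}, since the reduction axioms and $\tau$ are insensitive to the frame class.

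The one point deserving care — the ``main obstacle'', though it is a mild one — is the termination/well-foundedness claim for iterated $\tau$. The subtlety is that $\tau$ as written only peels one layer, and the clause $\tau([A,e][A',e']\varphi)=\tau([A,e\circ A',e']\varphi)$ replaces two modalities by one whose internal complexity $c(A\circ A')$ can be larger than $c(A)$ or $c(A')$ individually; one must check (via Lemma \ref{lemma:complexity}(7), which already packages exactly this) that the overall $c$-value still strictly drops, so that no infinite rewriting sequence arises. Beyond invoking the inequalities of Lemma \ref{lemma:complexity}, this is a routine well-founded-induction bookkeeping argument of the kind standard in the DEL literature (cf. \cite[Chapter 7]{ditmarsch2007}), and everything else is immediate from the cited results.
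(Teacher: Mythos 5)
Your proposal is correct and follows essentially the same route as the paper: soundness of $\mathsf{K+AM}$ plus the provable equivalence $\vdash_{\mathsf{K+AM}}\varphi\leftrightarrow\tau(\varphi)$ reduce the claim to the static completeness of $\mathsf{K}$ on the translated formula. The only cosmetic difference is that you treat $\tau$ as a one-step rewrite and iterate it to a fixed point $\tau^*$, whereas the paper defines $\tau$ recursively so that a single application already yields the $\mathcal{L}$-formula, with termination of that recursion guaranteed by the same complexity ordering (Lemma \ref{lemma:complexity}) you invoke.
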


\begin{proof}
Suppose $\vDash \varphi$. Since  $\vdash_{\mathsf{K+AM}}\varphi \leftrightarrow \tau(\varphi)$ (Lemma \ref{lemma:induction syntactic equivalence}), we have $\vDash\varphi \leftrightarrow \tau(\varphi)$ by the soundness of the proof system $\mathsf{K+AM}$. Thus $\vDash \tau(\varphi)$. The formula $\tau(\varphi)$ does not contain any action model modalities. Given $\vDash \tau(\varphi)$, by the completeness of $\mathsf{K}$ (Corollary \ref{Cor. Complete ALL-1}), it follows that $\vdash_{\mathsf{K}}\tau(\varphi)$. As $\mathsf{K}$ is a subsystem of $\mathsf{K+AM}$, we thus have $\vdash_{\mathsf{K+AM}}\tau(\varphi)$. Since  $\vdash_{\mathsf{K+AM}}\varphi \leftrightarrow \tau(\varphi)$ and $\vdash_{\mathsf{K+AM}}\tau(\varphi)$, it follows that  $\vdash_{\mathsf{K+AM}}\varphi$.
\end{proof}

The completeness result for any system extending $\mathsf{K+AM}$ with frame-characterizing axioms follows from the same type of argument.

\DeclareRobustCommand{\VAN}[2]{#2}

\end{document}